\newtheorem{Theorem}{Theorem}
\newtheorem{Example}{Example}
\newtheorem{Remark}{Remark}
\newtheorem{Lemma}{Lemma}
\newtheorem{Construction}{Construction}
\newtheorem{Proposition}{Proposition}
\newenvironment{psmallmatrix}
  {\left(\begin{smallmatrix}}
  {\end{smallmatrix}\right)}
\begin{document}
\title{MDS Array Codes With Small Sub-packetization Levels and Small Repair Degrees
}

\author{Jie~Li,~\IEEEmembership{Senior~Member,~IEEE,} Yi~Liu,
        Xiaohu~Tang,~\IEEEmembership{Senior~Member,~IEEE,}
        Yunghsiang S.~Han,~\IEEEmembership{Fellow,~IEEE,} Bo~Bai,~\IEEEmembership{Senior~Member,~IEEE,} and Gong Zhang
\thanks{J. Li, B. Bai, and G. Zhang are with the Theory Lab, Central Research Institute, 2012 Labs, Huawei Technologies Co., Ltd., Shatin, New Territories, Hong Kong SAR, China (e-mails: li.jie9@huawei.com; baibo8@huawei.com; nicholas.zhang@huawei.com).}
\thanks{Y. Liu was with the Information Coding and Transmission Key Lab of Sichuan Province, CSNMT Int. Coop. Res. Centre (MoST), Southwest Jiaotong University, Chengdu, 610031, China (e-mail: yiliu.swjtu@outlook.com).}
\thanks{X. Tang is with the Information Coding and Transmission Key Lab of Sichuan Province, CSNMT Int. Coop. Res. Centre (MoST), Southwest Jiaotong University, Chengdu, 610031, China (e-mail: xhutang@swjtu.edu.cn).}
\thanks{Yunghsiang S.~Han is with the Shenzhen Institute for Advanced Study, University of Electronic Science and Technology of China, Shenzhen, 518110, China (e-mail: yunghsiangh@gmail.com).}}

\maketitle

\begin{abstract}
High-rate minimum storage regenerating (MSR) codes are known to require a large sub-packetization level, which can make meta-data management difficult and hinder implementation in practical systems. A few maximum distance separable (MDS) array code constructions have been proposed to attain a much smaller sub-packetization level by sacrificing a bit of repair bandwidth. However, to the best of our knowledge, only one construction by  Guruswami \textit{et al.} can support the repair of a failed node without contacting all the surviving nodes. This construction is certainly of theoretical interest but not yet practical due to its requirement for very large code parameters. In this paper, we propose a generic transformation that can convert any $(\overline{n}, \overline{k})$ MSR code with a repair degree of $\overline{d}<\overline{n}-1$ into another $(n=s\overline{n},k)$ MDS array code that supports $d<n-1$ with a small sub-packetization level and $(1+\epsilon)$-optimal repair bandwidth (i.e., $1+\epsilon$ times the optimal value) under a specific condition. We obtain three MDS array codes with small sub-packetization levels and $(1+\epsilon)$-optimal repair bandwidth by applying this transformation to three known MSR codes. All the new MDS array codes have a small repair degree of $d<n-1$ and work for both small and large code parameters. 
\end{abstract}

\begin{IEEEkeywords}
Maximum distance separable, minimum storage regenerating codes, repair bandwidth, repair degree, sub-packetization.
\end{IEEEkeywords}

\section{Introduction}
In distributed storage systems, data are stored across multiple unreliable storage nodes. Thus, redundancy must be introduced to provide fault tolerance and maintain storage reliability. Erasure coding is a more efficient redundancy mechanism than replication, and one example of erasure codes is MDS codes, which can encode a file of $k$ symbols over a finite filed $\mathbf{F}_q$ into $n$ symbols such that any $k$ out of the $n$ symbols can recover the original file. The file can then be stored across a distributed storage system of $n$ storage nodes, each storing one coded symbol. However, when repairing a failed node,  one must contact $k$ surviving nodes and download $k$ symbols from $\mathbf{F}_q$ (if we do not consider downloading sub-symbols from a subfield of $\mathbf{F}_q$) to regenerate the failed node. This leads to an excessive \textit{repair bandwidth}, which is defined as the amount of data downloaded to repair a failed node. 

One way to reduce the repair bandwidth is to use MDS array codes, where the codeword is an array of size $N\times n$ instead of a vector. For a distributed storage system encoded by an $(n,k)$ MDS array code, each node stores $N$ symbols, where $N$ is called the \textit{sub-packetization level}. The cut-set  bound in \cite{dimakis2010network} shows that the repair bandwidth of $(n,k)$ MDS array codes with sub-packetization level $N$ is lower bounded by 
\begin{equation}\label{Eqn_gamma_opt}
\gamma_{\rm optimal}=\frac{d}{d-k+1}N,
\end{equation}
where $d$ ($k\le d<n$) denotes the number of helper nodes contacted during the repair process and is named the \textit{repair degree}. MDS array codes with repair bandwidth attaining this lower bound are said to have the \textit{optimal repair bandwidth} and are also referred to as MSR codes in \cite{dimakis2010network}. 

During the past decade, various MSR codes have been proposed \cite{Goparaju,invariant subspace,hadamard,tian2014characterizing,Hadamard strategy,Long IT,Barg1,Barg2,Zigzag,Sasidharan-Kumar2,Rashmi2011optimal,han2015update,li2017generic,li2018generic,hou2020binary,elyasi2020cascade,li2022generic,liu2022generic,li2022constructing,zhang2023vertical}. However, in the high-rate (e.g., $\frac{k}{n}>\frac{1}{2}$) regime, all known $(n,k)$ MSR code constructions require a significantly large sub-packetization level $N$, i.e., $N\ge r^{\frac{n}{r+1}}$, where $r=n-k$. This is necessary as Alrabiah and Guruswami recently showed that the lower
bound on $N$ is exponential in $k$ and conjectured
that the construction in \cite{Long IT} with $N=r^{\frac{n}{r+1}}$ is exactly tight \cite{alrabiah2021exponential}. However, an MDS array code with a large sub-packetization level can reduce design space regarding various system parameters and make meta-data management difficult. As a result, it hinders the implementation in practical systems \cite{Rawat}.
	
The above existing constructions and theoretical bounds imply that it is not possible to construct high-rate MSR codes with a small sub-packetization level. Recent work demonstrates that high-rate MDS array codes with a small sub-packetization level can be constructed by sacrificing the optimality of the repair bandwidth. In \cite{Rawat}, two high-rate MDS array codes with small sub-packetization levels and $(1+\epsilon)$-optimal repair bandwidth were presented. Further, in \cite{eMSR_d_eq_n-1}, a generic transformation was proposed, which can convert any MSR code into an MDS array code with a small sub-packetization level and $(1+\epsilon)$-optimal repair bandwidth, resulting in several explicit MDS array codes with small sub-packetization levels. However, all the MDS array codes in \cite{Rawat,eMSR_d_eq_n-1} only work for repair degree $d=n-1$, i.e., one needs to contact all the surviving nodes to repair a failed node. 

Very recently, in \cite{Guruswami2020}, Guruswami \textit{et al.} generalized the work in \cite{Rawat} to support $d<n-1$. For convenience, the MDS array code in \cite{Guruswami2020}, referred to as the GLJ code in this paper, is obtained by combining an MSR code and a scalar linear code with specific parameters. The construction in  \cite{Guruswami2020} is certainly of theoretical interest. However, it is not of immediate practical
value since the construction only works for large code parameters. For example, when $r=3$, to achieve a repair bandwidth $\gamma\le 1.7\gamma_{\rm optimal}$, where $\gamma_{\rm optimal}$ is defined in \eqref{Eqn_gamma_opt}, the GLJ code has a code length $n=961^{12}$ and sub-packetization level $N=90\times 2^{961}$ according to Tables I and II in \cite{Guruswami2020}. 

In this paper, we propose a generic transformation that can convert any $(\overline{n}, \overline{k})$ MSR code with repair degree $\overline{d}<\overline{n}-1$ into another $(n=s\overline{n},k)$ MDS array code with a small sub-packetization level. The resultant $(n=s\overline{n},k)$ MDS array code supports $d<n-1$ and has $(1+\epsilon)$-optimal repair bandwidth if a specific condition can be satisfied. By applying this transformation to a newly constructed MSR code in \cite{li2023MDS} and the two well-known MSR codes in Constructions 2 and 5 in \cite{Barg1} (which are referred to as LLTHBZ code, YB code 1, and YB code 2, respectively), we obtain three MDS array codes $\mathcal{C}_1-\mathcal{C}_3$ with small sub-packetization levels and $(1+\epsilon)$-optimal repair bandwidth. In addition, we construct a new MDS array code directly, which has a similar structure as that of the new MDS array code $\mathcal{C}_2$ but has less restriction on the finite field when compared to $\mathcal{C}_2$, and refer to it as $\mathcal{C}_2'$. All the new array codes support $d<n-1$ and are over small finite fields. 
Compared to the GLJ code in \cite{Guruswami2020}, the new $(n, k)$ MDS array codes with repair degree $d<n-1$ have the following advantages:
\begin{itemize}
\item The new MDS array codes are valid for any code length $n\ge 10$, whereas the GLJ code in \cite{Guruswami2020} only works for very large code lengths. 
\item 
The new MDS array codes have an absolutely small sub-packetization level $N$, which can even be a constant and independent of code length $n$. In contrast, the GLJ code in \cite{Guruswami2020} only supports very large sub-packetization level $N$.

\item The required finite field sizes of all the new MDS array codes are much smaller than that of the GLJ code in \cite{Guruswami2020}, making them easy to implement in practical systems. 
\item The new code $\mathcal{C}_1$ has the smallest sub-packetization level among all the new MDS array codes, $\mathcal{C}_2$ ($\mathcal{C}_2'$) has the optimal update property, and $\mathcal{C}_3$ has low access property. 
\end{itemize}

The remainder of the paper is organized as follows.
Section II reviews some necessary preliminaries of high-rate MDS array codes. Section III
proposes the generic transformation and its asserted properties. 
Three applications of the generic transformation to the LLTHBZ code, YB codes 1 and 2, together with the corresponding explicit MDS array codes with small sub-packetization levels and $(1+\epsilon)$-optimal repair bandwidth are demonstrated in Sections \ref{sec:C3}- \ref{sec:C2}, respectively. Section \ref{sec:comp} compares key parameters
among the MDS array codes proposed in this paper and some
existing ones, i.e., LLTHBZ code in \cite{li2023MDS}, YB codes 1, 2 in \cite{Barg1}, GLJ code in \cite{Guruswami2020}. Finally, Section \ref{sec:conclusion} concludes 
the study.

\section{Preliminaries}
In this section, we introduce some preliminaries on MDS array codes and a special partition for a given basis. This paper denotes by $q$ a prime power and $\mathbf{F}_q$ the finite field with $q$ elements. Let $[a: b)$ be the set $\{a, a+1, \ldots, b-1\}$ for two integers $a$ and $b$ with $a<b$. For a matrix $A$, denote by $A[a,b]$ the $(a,b)$-th entry of the matrix $A$, and $A[a,:]$ the $a$-th row of $A$.

\subsection{$(n,k)$ Array Codes}
Let $\mathbf{f}_0, \mathbf{f}_1, \ldots, \mathbf{f}_{n-1}$ be the data stored across a distributed storage system consisting of $n$ nodes based on an $(n,k)$ array code, where $\mathbf{f}_i$ is a column vector of length $N$ over $\mathbf{F}_q$.  We consider   $(n,k)$ array codes defined by the following parity-check form:
\begin{equation}\label{Eqn parity check eq}
  \underbrace{\left(
        \begin{array}{cccc}
                A_{0,0} & A_{0,1} & \cdots & A_{0,n-1} \\
                A_{1,0} & A_{1,1} & \cdots & A_{1,n-1} \\
       \vdots & \vdots & \ddots & \vdots \\
            A_{r-1,0} & A_{r-1,1} & \cdots & A_{r-1,n-1} \\
            \end{array}
            \right)}_{A}
\left(\begin{array}{c}
              \mathbf{f}_0\\
              \mathbf{f}_1\\
             \vdots\\
         \mathbf{f}_{n-1}
            \end{array}
         \right)=\mathbf{0}_{rN},
\end{equation}
where $A_{i,j}$ is a square matrix of order $N$, $r=n-k$, and $\mathbf{0}_{rN}$ denotes the zero column vector of length $rN$. We abbreviate $\mathbf{0}_{rN}$ as $\mathbf{0}$ in the sequel if its length is not ambiguous. The $r\times n$ block matrix $A$ in \eqref{Eqn parity check eq} is called the \textit{parity-check matrix} of the code, which can be written as
\begin{eqnarray*}
A=(A_{t,i})_{t\in [0: r),i\in[0: n)}
\end{eqnarray*}
to indicate the block entries.
Note that for each $t\in[0,r)$, $\sum\limits_{i=0}^{n-1}A_{t,i}\mathbf{f}_i=0$ contains $N$ equations. For convenience, we say $\sum\limits_{i=0}^{N-1}A_{t,i}\mathbf{f}_i=0$ the $t$-th \textit{parity-check group}.

\subsection{The MDS property}
An $(n,k)$ array code defined by \eqref{Eqn parity check eq} is MDS if the source file can be reconstructed by connecting any $k$ out of the $n$ nodes. That is, any $r\times r$ sub-block matrix of the matrix $(A_{t,i})_{t\in [0: r),i\in[0: n)}$
is non-singular \cite{Barg1}. In the following, we introduce some lemmas that be helpful when verifying the MDS property of the new codes in the later sections.


\begin{Lemma}\label{Lemma block matrix}
(Block Vandermonde matrix \cite{Barg1}) Let $B_0, B_1, \ldots, B_{v-1}$ be $N\times N$ matrices such that $B_iB_j=B_jB_i$ for all $i,j \in [0:  v)$, then the matrix
\begin{equation*}
  \left(
    \begin{array}{cccc}
      I & I & \cdots & I \\
      B_0 & B_1 & \cdots & B_{v-1} \\
      \vdots & \vdots & \ddots & \vdots \\
     B_0^{v-1} & B_1^{v-1} & \cdots & B_{v-1}^{v-1} \\
    \end{array}
  \right)
\end{equation*}
is non-singular
if and only if $B_i-B_j$ is non-singular for all $i,j\in [0: v)$ with $i\ne j$.
\end{Lemma}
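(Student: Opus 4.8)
The plan is to reduce the block problem to the scalar Vandermonde determinant by exploiting simultaneous triangularization. First I would observe that the "only if" direction is easy: if $B_i - B_j$ were singular for some $i \ne j$, pick a nonzero vector $\mathbf{v}$ with $(B_i - B_j)\mathbf{v} = \mathbf{0}$; then the two block columns $i$ and $j$, when each is applied to $\mathbf{v}$, produce the same vector $(\mathbf{v}, B_i\mathbf{v}, B_i^2\mathbf{v}, \ldots)^\top = (\mathbf{v}, B_j\mathbf{v}, \ldots)^\top$ (using $B_i^\ell \mathbf{v} = B_j^\ell \mathbf{v}$ for all $\ell$, which follows by induction from $B_i\mathbf{v}=B_j\mathbf{v}$), so the difference of these two columns annihilates the nonzero vector $(\mathbf{v},\mathbf{0},\ldots,\mathbf{0})^\top$ in the appropriate block sense, forcing singularity of the big matrix. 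More carefully, the vector in $\mathbf{F}_q^{vN}$ that is $\mathbf{v}$ in block $i$, $-\mathbf{v}$ in block $j$, and $\mathbf{0}$ elsewhere lies in the kernel.

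For the "if" direction, the key step is that a family of pairwise commuting matrices over $\mathbf{F}_q$ (or, to be safe, over the algebraic closure $\overline{\mathbf{F}_q}$, which does not affect non-singularity) can be simultaneously put into upper-triangular form: there is an invertible $P$ such that $P B_i P^{-1} = T_i$ is upper triangular for every $i$. I would then conjugate the whole block matrix by $\mathrm{diag}(P, P, \ldots, P)$ on the left and $\mathrm{diag}(P^{-1},\ldots,P^{-1})$ on the right; this does not change whether it is non-singular, and it replaces each $B_i$ by $T_i$. Now permute rows and columns (a block-wise interleaving of the $N$ coordinate slots with the $v$ Vandermonde slots) so that the matrix becomes block upper triangular with $N$ diagonal blocks, the $m$-th diagonal block being the ordinary $v \times v$ Vandermonde matrix in the scalars $T_0[m,m], T_1[m,m], \ldots, T_{v-1}[m,m]$ — the $m$-th diagonal entries of the triangular forms. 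Hence the determinant of the big matrix equals (up to sign) $\prod_{m=0}^{N-1} \prod_{0 \le i < j \le v-1} (T_j[m,m] - T_i[m,m])$.

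To finish, I would argue that $\prod_m (T_j[m,m] - T_i[m,m]) \ne 0$ for each fixed pair $i\ne j$ precisely because $B_i - B_j$ is non-singular: since $B_i$ and $B_j$ commute, so do $T_i$ and $T_j$, and the diagonal of the (upper-triangular) matrix $T_i - T_j$ consists of the entries $T_i[m,m]-T_j[m,m]$; therefore $\det(B_i - B_j) = \det(T_i - T_j) = \prod_m (T_i[m,m] - T_j[m,m])$, which is nonzero by hypothesis, so no factor vanishes, and the whole product is nonzero. Conversely this same identity is what powers the "only if" direction cleanly. The main obstacle I anticipate is bookkeeping: being careful that the commuting family really is simultaneously triangularizable (invoke the standard result, noting that passing to $\overline{\mathbf{F}_q}$ is harmless for a non-singularity claim), and writing down the row/column permutation that exhibits the block-triangular Vandermonde structure without sign or indexing errors. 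Everything else is a direct computation with determinants.
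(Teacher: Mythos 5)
Your proof is correct and follows the standard argument used in the cited reference \cite{Barg1}: simultaneously upper-triangularize the commuting $B_i$ over $\overline{\mathbf{F}_q}$ (which does not affect non-singularity), then permute rows and columns so the block Vandermonde determinant factors (up to sign) as the product of $N$ ordinary $v\times v$ Vandermonde determinants in the diagonal entries of the $T_i$; this is nonzero precisely when every $\det(T_i-T_j)=\det(B_i-B_j)$ is nonzero, which also subsumes your separate kernel-vector argument for the ``only if'' direction. The paper itself states this lemma with a citation and no proof, so there is no in-paper proof to compare against; the only small point worth making explicit in your write-up is that the induction $B_i^\ell\mathbf{v}=B_j^\ell\mathbf{v}$ quietly uses the commutativity hypothesis (e.g.\ via $\ker(B_i-B_j)$ being $B_j$-invariant).
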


Particularly, if
\begin{eqnarray}\label{Eqn A power}
 A_{t,i}=A_i^{t}, ~t\in [0: r), ~i\in [0: n)
\end{eqnarray}
for some matrices $A_i$ of order $N$, then for an array code with the parity-check matrix $A=(A_{t,i})_{t\in [0: r),i\in[0: n)}$, its MDS property can be easily checked by taking advantage of Lemma \ref{Lemma block matrix}.




\begin{Lemma}\label{Pro matrix} (Proposition 2, \cite{eMSR_d_eq_n-1})
Let  \begin{eqnarray*}
B=\left(
\begin{array}{cccc}
y_{0,0}B_{0,0} & y_{0,1}B_{0,1} & \cdots &  y_{0,n-1}B_{0,n-1} \\
y_{1,0}B_{1,0} & y_{1,1}B_{1,1} & \cdots & y_{1,n-1}B_{1,n-1} \\
\vdots & \vdots & \ddots & \vdots \\
y_{r-1,0}B_{r-1,0} & y_{r-1,1}B_{r-1,1} & \cdots  \cdots & y_{r-1,n-1}B_{r-1,n-1} \\
\end{array}
\right)
\end{eqnarray*}
be a block matrix  over a certain finite field $\mathbf{F}_q$, where $y_{i,j}$ is an indeterminate in  $\mathbf{F}_q$ and $B_{i,j}$ is a  matrix of order $N$  for  $i\in[0: r)$ and $j\in [0: n)$.  There exists an assignment to the variables $y_{i,j}$
such that any $r\times r$ sub-block matrix of $B$ (i.e., $(y_{t,i_j}B_{t,i_j})_{t,j\in [0: r)}$ where $0\le i_0<i_1<\cdots<i_{r-1}<n$) is non-singular if
\begin{itemize}
    \item [i)] $q>N{n-1\choose r-1}+1$,
    \item [ii)] every block matrix  $B_{i,j}$ is non-singular.
\end{itemize}
\end{Lemma}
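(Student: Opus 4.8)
The plan is to use a determinantal/polynomial argument in the spirit of the Schwartz--Zippel lemma (or, more precisely, the combinatorial nullstellensatz), treating the $y_{i,j}$ as formal variables. First I would observe that there are exactly $\binom{n}{r}$ candidate $r\times r$ sub-block matrices $B_S = (y_{t,i_j}B_{t,i_j})_{t,j\in[0:r)}$, indexed by the $r$-subsets $S=\{i_0<\cdots<i_{r-1}\}$ of $[0:n)$. For each such $S$, $\det(B_S)$ is a polynomial in the variables $\{y_{i,j}\}$ over $\mathbf{F}_q$; the key point is that this polynomial is \emph{not identically zero}. Indeed, setting every $y_{i,j}=1$ turns $B_S$ into the ordinary (scalar-entry-wise) block matrix $(B_{t,i_j})_{t,j}$, which need not be invertible, so that substitution is not useful by itself. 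Instead I would exhibit a more clever substitution: since each $B_{i,j}$ is non-singular by hypothesis (ii), I can choose the $y_{i,j}$ so that the block matrix $B_S$ becomes block-diagonal-dominant in a suitable sense — for instance, assign the $y_{i,j}$ along a generalized "diagonal" of $S$ large/independent values and the rest zero, making $\det(B_S)$ reduce to $\pm\prod_j \det(B_{t_j,i_j})$ for some permutation, which is non-zero. Hence each $\det(B_S)$, as a polynomial, has a non-zero coefficient.

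Next I would bound the degree. Each entry of $B_S$ is linear in a single variable $y_{t,i_j}$, and $\det(B_S)$ is an $rN\times rN$ determinant, so as a polynomial in the $y$'s it has total degree at most $rN$; more importantly, its degree in any single variable $y_{i,j}$ is at most $N$ (that variable appears in exactly one block row and one block column, i.e.\ in an $N\times N$ sub-block). Now form the product $P=\prod_{S}\det(B_S)$ over all $\binom{n}{r}$ choices of $S$. A fixed variable $y_{i,j}$ appears in $\det(B_S)$ only when both $i\in S$ and the row index is relevant, i.e.\ in $\binom{n-1}{r-1}$ of the factors, so $\deg_{y_{i,j}} P \le N\binom{n-1}{r-1}$. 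Since $P$ is a non-zero polynomial (a product of non-zero polynomials over the field, which is an integral domain) whose degree in each variable is at most $N\binom{n-1}{r-1}$, the combinatorial nullstellensatz (or a direct counting argument over a grid of size $N\binom{n-1}{r-1}+1$ in each coordinate) guarantees an assignment of the $y_{i,j}$ in $\mathbf{F}_q$ making $P\ne 0$, provided $q \ge N\binom{n-1}{r-1}+1$; the strict inequality $q > N\binom{n-1}{r-1}+1$ in hypothesis (i) comfortably suffices (and leaves room if one also wants the diagonal/first-row entries to be non-zero, which is implicit in the intended MDS application where $A_{0,i}$ is often required to be invertible). At such an assignment every factor $\det(B_S)$ is non-zero, i.e.\ every $r\times r$ sub-block matrix of $B$ is non-singular.

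The main obstacle is the first step: verifying that each $\det(B_S)$ is a \emph{non-zero} polynomial. One must choose the exposing substitution carefully — naively setting all $y_{i,j}=1$ fails precisely because the $B_{i,j}$ need not be simultaneously "compatible." The clean way is to set $y_{t,i_j}$ to a new indeterminate $z$ raised to a rapidly growing power indexed by $(t,j)$ (a standard trick that forces the determinant expansion to be dominated by a single permutation term), so that $\det(B_S)$ specializes to a Laurent polynomial in $z$ whose leading term is $\pm z^{(\cdot)}\prod_j\det(B_{\sigma(j),i_j})\ne 0$ for the permutation $\sigma$ achieving the maximal exponent; this shows $\det(B_S)\not\equiv 0$. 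Everything after that — degree bookkeeping and the nullstellensatz — is routine. I would also remark that when the stronger normalization is wanted (all $B_{0,i}$ and the chosen first row scaled to identity), one simply includes the extra linear factors $y_{0,i}$ in the product $P$, which only changes the per-variable degree bound by $1$, still covered by hypothesis (i).
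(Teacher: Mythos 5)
The paper does not actually give a proof of this lemma---it quotes it as Proposition~2 of \cite{eMSR_d_eq_n-1}---so there is no in-text argument to match against, but your proof is correct and is the standard Combinatorial Nullstellensatz argument in this line of work (cf.\ \cite{Alon}, which the paper cites). You correctly form $P=\prod_{S}\det(B_S)$ over all $\binom{n}{r}$ column subsets $S$, observe that $\deg_{y_{t,i}}\det(B_S)\le N$ and that $y_{t,i}$ enters only the $\binom{n-1}{r-1}$ factors with $i\in S$, hence $\deg_{y_{t,i}}P\le N\binom{n-1}{r-1}$, and invoke the nullstellensatz. Two points to tighten. First, the escalating-powers-of-$z$ trick you add at the end is unnecessary for showing $\det(B_S)\not\equiv 0$: the block-diagonal substitution you mention first already works with the plain assignment $y_{t,i_j}=1$ for $t=j$ and $y_{t,i_j}=0$ otherwise, since then $B_S$ is block diagonal and $\det(B_S)=\pm\prod_{t}\det(B_{t,i_t})\neq 0$ by hypothesis~(ii); exhibiting one nonvanishing evaluation suffices and there is no need for ``large/independent'' scalars or a leading-monomial argument. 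Second, the ``$+1$'' in hypothesis~(i) is not slack: rewriting $q>N\binom{n-1}{r-1}+1$ as $q-1>N\binom{n-1}{r-1}$ shows it is exactly the condition needed to apply the nullstellensatz with evaluation grid $\mathbf{F}_q^{*}$ in each coordinate, i.e.\ to guarantee a choice with every $y_{t,i}\neq 0$. This is required by the application in Theorem~\ref{Thm general MDS}, where $x_{t,j}\in\mathbf{F}_q\setminus\{0\}$; your parenthetical guess about this is correct, but it should appear as an explicit step (take the grid to be $\mathbf{F}_q^{*}$, or equivalently multiply $P$ by $\prod_{t,i}y_{t,i}$) rather than as an aside.
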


In some applications, an $(n,k)$ MDS array code is desired to have the \textit{optimal update property}, i.e., only the
minimum possible number of elements need to
be updated when an information element is changed. In \cite{Barg1},
Ye and Barg showed that an $(n,k)$ MDS array code defined in the
form of \eqref{Eqn parity check eq} has the \textit{optimal update property} if all the
block entries $A_{t, i}$ in the parity-check matrix $(A_{t,i})_{t\in [0: r),i\in[0: n)}$ are diagonal.

\subsection{Repair Mechanism}\label{sec:repair}

For an $(n,k)$ array code, suppose that node $i$ ($i\in[0: n)$) fails,
let $H_i$ be any given $d$-subset of $[0: n)\setminus\{i\}$, which denotes  the  set of indices of the helper nodes, and let $L_i=[0: n)\setminus (H_i\cup\{i\})$
be the set of indices of unconnected nodes.
The data downloaded from helper node $j$ can be represented by $R_{i,j}\mathbf{f}_j$, where $R_{i,j}$ is an $\beta_{i,j} \times N$ matrix of full rank with $\beta_{i,j}\le N$, and is called  \textit{repair matrix} of node $i$.

Note that the content of node $i$ can be acquired from the parity-check equations.
In this paper, similar to that in \cite{eMSR_d_eq_n-1}, for convenience, we only consider the symmetric situation that $\delta$ ($N/r\le \delta\le N$) linearly independent equations are acquired from each of the  $r$ parity-check groups, where these $\delta$ linearly independent equations are linear combinations of the corresponding $N$ parity-check equations in a parity-check group.
Precisely, the $\delta$  linearly independent equations from the $t$-th parity-check group can be obtained by multiplying it with an $\delta \times N$ matrix $S_{i,t}$ of full rank, where $S_{i,t}$ is called the \textit{select  matrix}.

As a consequence, the following linear equations are available.

\begin{eqnarray}\label{Eqn repairment gene}
\underbrace{\left(\begin{array}{c}
S_{i,0} A_{0,i} \\
S_{i,1} A_{1,i}\\
\vdots\\
S_{i,r-1} A_{r-1,i}
\end{array}\right)\mathbf{f}_i}_{\mathrm{useful ~data}}+\sum_{l\in L_i}\underbrace{\left(\begin{array}{c}
S_{i,0} A_{0,l}\\
S_{i,1} A_{1,l}\\
\vdots\\
S_{i,r-1} A_{r-1,l}
\end{array}\right)\mathbf{f}_l}_{\mathrm{unknown ~data~by~}\mathbf{f}_l}
+\sum_{j\in H_i}\underbrace{\left(\begin{array}{c}
S_{i,0} A_{0,j}\\
S_{i,1} A_{1,j}\\
\vdots\\
S_{i,r-1} A_{r-1,j}
\end{array}\right)\mathbf{f}_j}_{\mathrm{interference ~by~}\mathbf{f}_j}=0.
\end{eqnarray}

The repair of node $i$ requires solving \eqref{Eqn repairment gene}  from the downloaded data  $R_{i,j}\mathbf{f}_{j}$, $j\in H_i$.
Alternatively, node $i$ can be regenerated if the following three requirements can be satisfied.
\begin{itemize}
  \item [R1.] The coefficient matrix of the   system of linear equations (w.r.t. the variable $\mathbf{f}_i$) in \eqref{Eqn repairment gene} has rank $N$, i.e.,\begin{equation}\label{repair_node_requirement1 d}
\mbox{\rm rank}(\left(\begin{array}{c}
S_{i,0} A_{0,i}\\
S_{i,1} A_{1,i}\\
\vdots\\
S_{i,r-1} A_{r-1,i}
\end{array}\right))=N,
\end{equation}

  \item [R2.] The interference caused by $\mathbf{f}_j$ can be determined from the downloaded data $R_{i,j}\mathbf{f}_{j}$ for  $j\in H_i$, and there exists some matrix $R_{i,l}$ such that the unknown data related to $\mathbf{f}_l$ can be determined from $R_{i,l}\mathbf{f}_{l}$ for  $l\in L_i$, i.e.,
 \begin{equation*}
  \mbox{\rm rank}(\left(
  \begin{array}{c}
  R_{i,j}\\
  S_{i,0}A_{0,j}\\
  S_{i,1}A_{1,j}\\
  \vdots\\
  S_{i,r-1}A_{r-1,j}
  \end{array}
  \right))=\mbox{\rm rank}(R_{i,j}), ~j\in H_i\cup L_i,
\end{equation*}
which means that
\begin{eqnarray}\label{repair_node_requirement3 d}
\textrm{rank} (\left(
\begin{array}{c}
R_{i,j} \\
S_{i,t} A_{t,j} \\
\end{array}
\right)) =\mbox{\rm rank}(R_{i,j}),~j\in H_i\cup L_i,~t\in [0: r).
\end{eqnarray} 

\item [R3.] The data $R_{i,l}\mathbf{f}_l$, $l\in L_i$ can be determined from the downloaded data $R_{i,j}\mathbf{f}_j$, $j\in H_i$, i.e., there is a function $\phi_{i,l}$ such that
\begin{equation*}
R_{i,l}\mathbf{f}_l=\phi_{i,l}\left(R_{i,j}\mathbf{f}_j, j\in H_i\right).
\end{equation*}
\end{itemize}

When analyzing the repair property in the later sections, we either prove that the useful data in \eqref{Eqn repairment gene} is solvable with the help of the downloaded data or check R1-R3, dependent on which approach is more convenient. 

For the convenience of notation, we also call $R_{i,j}$ the \textit{repair matrix} of node $i$ for $j\in L_i$.
The repair bandwidth of node $i$ can be expressed as
\begin{equation*}
\gamma_i=\sum\limits_{j\in H_i}{\rm rank}(R_{i,j}).
\end{equation*}

If $\gamma_i={d\over d-k+1}N$, then node $i$ is said to have the optimal repair bandwidth, which can be achieved if ${\rm rank}(R_{i,j})={N\over d-k+1}$ for all $j\in H_i$.
In particular, if a failed node is repaired with the restriction that some $d_c$  helper nodes are compulsory and always need to be chosen, while the remaining $d-d_c$ helper nodes can be chosen arbitrarily. In that case, we say the code has the  \textit{$(d,d_c)$-repair property}, which was first introduced by Guruswami \textit{et al.} in \cite{Guruswami2020}. Furthermore, if $\gamma_i\le (1+\epsilon){d\over d-k+1}N$ for a small constant $0<\epsilon<1$, we say that node $i$ has the \textit{$(1+\epsilon)$-optimal repair bandwidth} \cite{li2023pmds}, which is also referred to as near-optimal repair bandwidth in \cite{Rawat}.

In addition to repair bandwidth, the amount of data accessed during the repair process is also an important metric. Similarly, 
the amount of data accessed when repairing node $i$ can be expressed as $\sum\limits_{j\in H_i}N_c(R_{i,j})$, where $N_c(A)$ denotes the number of nonzero columns of the matrix $A$.
A code is said to have the \textit{optimal access property} if the amount of data accessed during the repair process is also ${d\over d-k+1}N$ \cite{Barg1}, and \textit{$(1+\epsilon)$-optimal access property}  if the amount is at most $(1+\epsilon)\frac{d}{d-k+1}N$ for a small constant $0<\epsilon<1$.

\subsection{Partition of basis $\{e_0,\ldots,e_{N-1}\}$}

For two integers $w,m\ge2$, denote by $e_0,\ldots,e_{w^m-1}$ a basis of $\mathbf{F}_q^{w^m}$.
For simplicity, we can regard them as the standard basis, i.e.,
\begin{equation*}
    e_i=(0,\ldots,0,1,0,\ldots,0),\,\,i\in [0: w^m),
\end{equation*}
with only the $i$-th entry being nonzero.
Let $N=w^m$ throughout this paper unless otherwise stated. For any $a,b\in [0: N)$, we have
\begin{eqnarray}\label{Eqn e_ae_bT}
  e_a(e_b)^{\top} &=& \left\{
\begin{array}{ll}
1, & \textrm{if $a=b$},\\
0, & \textrm{otherwise},
\end{array}
\right.
\end{eqnarray}
where $\top$ denotes the transpose operator.

We now revisit a class of special partition sets of $\{e_0,\ldots,e_{w^m-1}\}$ that was introduced in \cite{eMSR_d_eq_n-1}, which play an important role in our proposed new construction.
Given an integer $a\in [0: w^m)$, denote by $(a_0,\ldots,a_{m-1})$
its $w$-ary expansion with $a_0$ being the most significant digit, i.e., $a=\sum\limits_{j=0}^{m-1}w^{m-1-j}a_{j}$. For convenience, we also write $a=(a_0,\ldots,a_{m-1})$. 
For $i\in [0: m)$ and $t\in [0: w)$, define a subset of $\{e_0,\ldots,e_{w^m-1}\}$ as
\begin{equation}\label{Eqn_Vt}
V_{i,t}=\{e_a|a_i=t, 0\le a< w^m\},
\end{equation}
where $a_i$ is the $i$-th element in the $w$-ary expansion of $a$.
Obviously, $V_{i,0},V_{i,1},\ldots,V_{i,w-1}$ is a partition of the set $\{e_0,\ldots,e_{w^m-1}\}$ and $|V_{i,t}|=w^{m-1}$ for any $i\in [0 : m)$.
Table \ref{example partition} provides two examples of the set partitions defined in \eqref{Eqn_Vt}.
\begin{table}[htbp]
\begin{center}
\caption{(a) and (b) denote the $m$  partition sets of $\{e_0,\ldots,e_{w^m-1}\}$    defined by \eqref{Eqn_Vt} for $w=2,m=3$, and $w=3,m=2$, respectively.}
\label{example partition}\begin{tabular}{|c|c|c|c|c|c|c|c|}
\hline $i$ & 0 & 1 & 2 & $i$ & 0 & 1 & 2\\
\hline \multirow{4}{*}{$V_{i,0}$ }&$e_0$&$e_0$&$e_0$&\multirow{4}{*}{$V_{i,1}$ }&$e_4$&$e_2$&$e_1$\\
  & $e_1$&$e_1$&$e_2$ && $e_5$&$e_3$&$e_3$\\
  &$e_2$&$e_4$&$e_4$&& $e_6$&$e_6$&$e_5$\\
  &$e_3$&$e_5$&$e_6$&& $e_7$&$e_7$&$e_7$\\
\hline\multicolumn{8}{c}{(A)}
\end{tabular}\hspace{5mm}
\begin{tabular}{|c|c|c|c|c|c|c|c|c|c}
\hline $i$ & 0 & 1 & $i$ & 0 & 1 & $i$ & 0 & 1\\
\hline \multirow{3}{*}{$V_{i,0}$ }&$e_0$&$e_0$&\multirow{3}{*}{$V_{i,1}$ }&$e_3$&$e_1$&\multirow{3}{*}{$V_{i,2}$ }&$e_6$&$e_2$\\
  & $e_1$&$e_3$&& $e_4$&$e_4$&& $e_7$&$e_5$\\
  & $e_2$&$e_6$&& $e_5$&$e_7$&& $e_8$&$e_8$\\
\hline\multicolumn{9}{c}{(B)}
\end{tabular}
\end{center}
\end{table}

For ease of notation,  we also represent $V_{i,t}$   as the $w^{m-1}\times w^m$ matrix,  whose the rows are composed of vectors $e_a$
in their corresponding sets,  and $a$ is arranged in ascending order. For instance, when $w=3$ and $m=2$, $V_{0,0}$ can be viewed as a $3\times 9$ matrix as shown below:
\begin{eqnarray*}
V_{0,0}=\left(e_0^{\top} e_1^{\top} e_2^{\top} \right)^{\top}.
\end{eqnarray*}

\subsection{Basic Notations and Equalities}
In this subsection, we first revisit some useful notations and equalities introduced in \cite{li2023MDS}, which aid in the proofs of the new codes.
For $a=(a_0,\ldots,a_{m-1})\in [0: N)$, $i\in [0: m)$ and $u\in [0: w)$, define $a(i, u)$ as
\begin{equation}\label{Eqn_ait}
a(i,u)=(a_0,\ldots,a_{i-1},u,a_{i+1},\ldots,a_{m-1}),
\end{equation}
that is, replacing the $i$-th digit with $u$.

For $a=(a_0,a_1,\ldots,a_{m-2})\in [0:N/w)$ and $i\in [0: m)$, define 
\begin{equation}\label{Eqn_g}
g_{i,u}(a)=(a_0,a_1,\ldots,a_{i-1},u,a_i,\ldots,a_{m-2}),
\end{equation}
i.e., insert $u$ to the $i$-th digit of $(a_0,a_1,\ldots,a_{m-2})$.
Then for $i,j\in [0:m)$ and $u,v \in [0: w)$, the $j$-th digit of $g_{i,u}(a)$ is 
\begin{eqnarray}\label{Eqn ga b}
(g_{i,u}(a))_j=\left\{
\begin{array}{ll}
a_j, &\mbox{~if~}j<i,\\
u,&\mbox{~if~}j=i,\\
a_{j-1},&\mbox{~if~}j>i.
\end{array}
\right.
\end{eqnarray}
Replacing the $j$-th digit of $g_{i,u}(a)$ with $v$ yields
\begin{eqnarray}\label{Eqn ga b (j,v)}
 (g_{i,u}(a))(j,v)=\left\{
\begin{array}{ll}
g_{i,u}(a(j,v)), &\mbox{~if~}j<i,\\
g_{i,v}(a),&\mbox{~if~}j=i,\\
g_{i,u}(a(j-1,v)),&\mbox{~if~}j>i.
\end{array}
\right.
\end{eqnarray}


  Let $e_0^{(N/w)}, e_1^{(N/w)}, \ldots, e_{N/w-1}^{(N/w)}$ be the standard basis of $\mathbf{F}_q^{N/w}$ over $\mathbf{F}_q$, then by \eqref{Eqn_g}, we can rewrite $V_{i,u}$ in \eqref{Eqn_Vt} as
\begin{equation}\label{Eqn re Vu for liu}
  V_{i,u}=\sum\limits_{a=0}^{N/w-1}(e_a^{(N/w)})^{\top} e_{g_{i,u}(a)}, u\in [0:w),
\end{equation}
i.e., the $a$-th row of the matrix $V_{i,u}$ is
\begin{eqnarray}\label{Eqn re Vu[a] for liu}
   V_{i,u}[a,:]=e_{g_{i,u}(a)}, u\in [0: w), a\in [0:N/w).
\end{eqnarray}

For a column vector $\mathbf{f}_j=(f_{j,0}, f_{j,1}, \ldots, f_{j,N-1})^\top\in \mathbf{F}_q^{N}$, we can rewrite it as
\begin{equation}\label{Eqn_f_expa}
\mathbf{f}_j=\sum\limits_{a=0}^{N-1}f_{j,a}e_a^\top,
\end{equation}
where $j$ is some integer. 
For any $a\in [0, N)$, we
define
\begin{equation}\label{Eqn_mu_a}
\mu_{j,i}^{(a)}=\sum\limits_{u=0}^{w-1}f_{j,a(i, u)},  ~i\in [0: m),
\end{equation}
then it is obvious that 
\begin{equation}\label{Eqn_mu_pro}
\mu_{j,i}^{(a)}=\mu_{j,i}^{(b)} \mbox{~for~all~} b=a(i, v)\mbox{~with~}v\in [0: w).
\end{equation}
By \eqref{Eqn re Vu for liu} and \eqref{Eqn_f_expa}, we can express $\sum\limits_{u=0}^{w-1}V_{i,u}\mathbf{f}_{j}$ as
\begin{equation}\label{Eqn_V_mu}
\sum\limits_{u=0}^{w-1}V_{i,u}\mathbf{f}_{j}=\sum\limits_{u=0}^{w-1}\sum\limits_{a=0}^{N/w-1}(e_a^{(N/w)})^{\top} e_{g_{i,u}(a)}\sum\limits_{b=0}^{N-1}f_{j,b}e_b^\top=\sum\limits_{a=0}^{N/w-1}(e_a^{(N/w)})^{\top}\sum\limits_{u=0}^{w-1}f_{j,g_{i,u}(a)}=\sum\limits_{a=0}^{N/w-1}(e_a^{(N/w)})^{\top}\mu_{j,i}^{(g_{i,0}(a))},
\end{equation}
where the second and third equalities follow from \eqref{Eqn e_ae_bT} and \eqref{Eqn_mu_a}, respectively.
By \eqref{Eqn_mu_pro}, we have
\begin{equation}\label{Eqn_mu_eq}
\{\mu_{j,i}^{(a)}|0\le a<N\}=\{\mu_{j,i}^{(g_{i,0}(b))}|0\le b<N/w\},
\end{equation}
which together with \eqref{Eqn_V_mu} implies that the data in $\sum\limits_{u=0}^{w-1}V_{i,u}\mathbf{f}_{j}$ 
are exactly the same as the data in the set $\{\mu_{j,i}^{(a)}|0\le a<N\}$. 

\section{A generic transformation for MDS array codes with small sub-packetization level and $d<n-1$}\label{sec:generic tran}

In this section, we present a generic transformation that can convert an $(\overline{n},\overline{k})$ MSR code with repair degree $\overline{d}<\overline{n}-1$ defined in the form of \eqref{Eqn parity check eq} into a new MDS array code. The new code has a longer code length  $n$ (or equivalently a smaller sub-packetization level) and the $(d, d_c=s-1)$-repair property with $(1+\epsilon)$-optimal repair bandwidth if a specific condition (e.g., R3 in Section \ref{sec:repair}) can be met, where $d<n-1$ and $s\ge 2$ is an integer. For convenience, we assume that $n$ is a multiplier of $\overline{n}$, i.e., $n=s\overline{n}$, where $s\ge 2$. Note that through shortening, one can obtain $(n,k)$ MDS array codes for $\overline{n}\nmid n$ \cite{ECC,Sasidharan-Kumar2}. In the following,   if the parameter changes after the transformation, to be distinguishable,   we use $\overline{x}$ and $x$ to denote the parameters of the base code and the resultant code obtained by the transformation, respectively. The transformation can be performed through the following two steps.

\textbf{Step 1: Choosing  an $(\overline{n},\overline{k})$ MSR code with repair degree $\bar{d}<\bar{n}-1$ as the base code}

We choose an $(\overline{n},\overline{k})$ MSR code defined in the form of \eqref{Eqn parity check eq} as the base code, whose repair degree is $\bar{d}<\bar{n}-1$. Let $N$ denote its sub-packetization level, $r=\overline{n}-\overline{k}$, and let $(\overline{A}_{t,i})_{t\in[0: r), i\in[0: \overline{n})}$  denote its parity-check matrix while the $\frac{N}{\overline{d}-\overline{k}+1}\times N$ matrices $\overline{R}_{i,j}$ and $\overline{S}_{i,t}$ denote the repair matrix and select matrix, respectively, where $i,j\in[0: \overline{n})$ with $j\ne i$, $t\in[0: r)$.

\textbf{Step 2: The transition from the base code  to the new MDS array code}

Through the generic transformation, we intend to design a new $(n, k=n-r)$ MDS array code over a certain finite field $\textbf{F}_{q}$. The new code can have an arbitrarily large code length $n$ while maintaining the same sub-packetization level $N$ and with repair degree $d=n-\overline{n}+\overline{d}$. 

The transition from the base code to the new code is done by designing the parity-check matrix $(A_{t,i})_{t\in [0:r),i\in[0:n)}$, the repair matrices $R_{i,j}$, and the select matrices $S_{i,t}$ of the new MDS array code from those of the base code, which are
as follows:
\begin{eqnarray}
A_{t,j}&=&x_{t,j}\overline{A}_{t,j\%\overline{n}},\label{Eqn general coding matrix}\\
R_{i,j}&=&\left\{
                      \begin{array}{ll}
                       \overline{R}_{i\% \overline{n},j\% \overline{n}}, &\mbox{\ \ if\ \ } j\not\equiv i \bmod \overline{n}, \\
                        I_N, & \mbox{\ \ otherwise,\ \ }\\
                      \end{array}
                    \right.\label{Eqn general R}
\end{eqnarray}
and
\begin{eqnarray}\label{Eqn general S}
S_{i,t}=\overline{S}_{i\% \overline{n},t},
\end{eqnarray}
where $x_{t,j}\in \mathbf{F}_q\setminus\{0\}$, $t\in[0:r)$,
$i,j\in[0:n)~\mbox{with}~j\ne i$,
$\%$ denotes the modulo operation,  and  $I_{N}$ denotes the identity matrix of order $N$, which will be abbreviated as $I$ in the sequel if its order is clear.

When repairing node $i\in [0: n)$, the $d_c=s-1$ nodes with indices in the set $\{j|0\le j<n, j\ne i, j \equiv i \bmod \overline{n}\}$ are the compulsory helper nodes. We download all the data stored at these compulsory helper nodes according to \eqref{Eqn general R}. In contrast, the remaining $d-d_c$ helper nodes can be arbitrarily chosen from the set $\{j|0\le j<n, j \not\equiv i \bmod \overline{n}\}$.

\begin{Remark}
It is worth noting that although the above transformation is similar to the one in \cite{eMSR_d_eq_n-1}, which focuses on the case of $d=n-1$, the construction of MDS array codes with $d<n-1$ is much more challenging. In the case of $d=n-1$, all surviving nodes are connected when repairing a failed node $i$ ($i\in [0:n)$), and thus $L_i$ in Section \ref{sec:repair} is an empty set. Therefore, R3 in Section \ref{sec:repair} is not required, and the node repair property of the resultant code is straightforward from that of the base code by verifying R1 and R2. 

However, in the case of $d<n-1$, the node repair process is more complex, and R3 in Section \ref{sec:repair} should be satisfied in addition to R1 and R2, or equivalently, \eqref{Eqn repairment gene} can be solved even though there is some interference from the unconnected nodes in the equations. Unlike the case of $d=n-1$ in \cite{eMSR_d_eq_n-1}, we always need to verify the node repair property of the resultant code in this paper case by case, which is also the main challenge of the proposed transformation.
\end{Remark}

To guarantee the MDS property of the new code obtained by the generic transformation, one needs to carefully choose the coefficients $x_{t,j}$ in \eqref{Eqn general coding matrix}, which is not an easy task over a small finite field. Nevertheless, we immediately have the following result by Lemma \ref{Pro matrix}.

\begin{Theorem}\label{Thm general MDS}
For the new $(n,k)$ array code over $\mathbf{F}_q$ obtained by the generic transformation, there exists an assignment to the variables $x_{t,j}$ in \eqref{Eqn general coding matrix} such that the obtained array code is MDS if the following requirements can be met.
\begin{itemize}
    \item [i)] $q>N{n-1\choose r-1}+1$,
     \item [ii)] Every block matrix $\overline{A}_{t,j}$ in the  parity-check matrix $(\overline{A}_{t,j})_{t\in[0: r), j\in[0: \overline{n})}$ of the base code is non-singular.
\end{itemize}
\end{Theorem}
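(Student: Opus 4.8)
The plan is to show that the parity-check matrix $(A_{t,j})_{t\in[0:r),j\in[0:n)}$ of the new code, whose block entries are $A_{t,j}=x_{t,j}\overline{A}_{t,j\%\overline{n}}$, satisfies the MDS condition by applying Lemma~\ref{Pro matrix} directly. First I would identify the matrices in Lemma~\ref{Pro matrix}: set $y_{t,j}=x_{t,j}$ and $B_{t,j}=\overline{A}_{t,j\%\overline{n}}$ for $t\in[0:r)$ and $j\in[0:n)$. Then the block matrix $B$ in Lemma~\ref{Pro matrix} is exactly the parity-check matrix of the new code, and an assignment to the $y_{t,j}$ making every $r\times r$ sub-block matrix non-singular is precisely an assignment to the $x_{t,j}$ making the new code MDS, by the definition of the MDS property recalled in Section~II-B.

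The remaining work is to check that the two hypotheses of Lemma~\ref{Pro matrix} are implied by the two requirements in the theorem. Hypothesis~(i) of Lemma~\ref{Pro matrix} asks for $q>N\binom{n-1}{r-1}+1$, which is verbatim requirement~(i) of the theorem; here one uses that the sub-packetization level of the new code equals $N$, the sub-packetization level of the base code, as noted in Step~2 of the transformation. Hypothesis~(ii) of Lemma~\ref{Pro matrix} asks that every block $B_{t,j}$ be non-singular; since $B_{t,j}=\overline{A}_{t,j\%\overline{n}}$ and $j\%\overline{n}$ ranges over $[0:\overline{n})$ as $j$ ranges over $[0:n)$, this is exactly requirement~(ii) of the theorem, namely that every $\overline{A}_{t,i}$ with $t\in[0:r)$, $i\in[0:\overline{n})$ is non-singular. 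Thus both hypotheses hold and Lemma~\ref{Pro matrix} yields the desired assignment.

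I do not anticipate a genuine obstacle: the statement is essentially a direct corollary of Lemma~\ref{Pro matrix}, and the only things to be careful about are bookkeeping—verifying that the index reduction $j\mapsto j\%\overline{n}$ does not shrink the set of block entries that must be non-singular (it does not, since $s\ge 2$ means each residue is hit, and in fact each is hit $s$ times), and confirming that the non-singularity of a scalar multiple $x_{t,j}\overline{A}_{t,j\%\overline{n}}$ with $x_{t,j}\ne 0$ is equivalent to that of $\overline{A}_{t,j\%\overline{n}}$, so that factoring the scalars out as the indeterminates $y_{t,j}$ is legitimate. If anything, the one point worth spelling out is why it suffices to control only the $r\times r$ sub-block minors: this is the MDS characterization for codes in the form \eqref{Eqn parity check eq} cited from \cite{Barg1} in Section~II-B, so I would simply invoke it. The proof is therefore short: translate notation, match the two hypotheses, and quote Lemma~\ref{Pro matrix}.
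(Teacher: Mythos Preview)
Your proposal is correct and follows exactly the paper's approach: the paper simply states that the result follows immediately from Lemma~\ref{Pro matrix}, and your identification $y_{t,j}=x_{t,j}$, $B_{t,j}=\overline{A}_{t,j\%\overline{n}}$ together with the verification of the two hypotheses is precisely the (unwritten) content of that one-line proof.
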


\begin{Remark}
Note that Theorem \ref{Thm general MDS} only suggests implicit MDS array codes since it is only guaranteed that some $x_{t,j}$ exists in a sufficiently large finite field such that the new code is MDS. To find explicit code will necessitate a search for valid assignments to the coefficients $x_{t,j}$ in \eqref{Eqn general coding matrix}, which may not be trivial in general \cite{eMSR_d_eq_n-1}. Accordingly, we propose three explicit codes obtained by the generic transformation with valid assignments to the coefficients $x_{t,j}$ in \eqref{Eqn general coding matrix} in Sections \ref{sec:C3}--\ref{sec:C2}.
\end{Remark}

We have the following result regarding the repair property of the new code obtained by the generic transformation.

\begin{Theorem}\label{Thm general repair}
For the new $(n,k)$ array code over $\mathbf{F}_q$ obtained by the generic transformation, it has the $(d=n-\overline{n}+\overline{d}, d_c=s-1)$-repair property and $(1+\frac{d_c(d-k)}{d})$-optimal repair bandwidth if  R3 holds for the new $(n,k)$ code and R1, R2 hold for the base code.
\end{Theorem}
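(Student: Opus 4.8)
The plan is to reduce the repair of node $i$ in the new code to a repair in the base code by exploiting how the parity-check matrix, repair matrices, and select matrices were inherited. Write $i = q_i\overline{n} + (i\%\overline{n})$ and set $\overline{i} = i\%\overline{n}$. First I would substitute \eqref{Eqn general coding matrix}, \eqref{Eqn general R}, \eqref{Eqn general S} into the repair equations \eqref{Eqn repairment gene}. The key structural observation is that the compulsory helper set $\{j : j\equiv i \bmod \overline{n},\, j\neq i\}$ has exactly $d_c = s-1$ members, and for each such $j$ we download the \emph{entire} vector $\mathbf{f}_j$ (since $R_{i,j}=I_N$), so the interference terms $\sum_t S_{i,t}A_{t,j}\mathbf{f}_j = \sum_t x_{t,j}\overline{S}_{\overline{i},t}\overline{A}_{t,\overline{i}}\mathbf{f}_j$ contributed by these nodes are fully known and can be cancelled. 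After this cancellation, what remains is, block-row by block-row, a system of the form $\sum_t \overline{S}_{\overline{i},t}\overline{A}_{t,\overline{i}}(x_{t,i}\mathbf{f}_i) + \sum_{j\in H_i'} (\text{scaled interference}) + \sum_{l\in L_i}(\text{scaled interference}) = \mathbf{b}$ where $H_i'$ and $L_i$ consist only of indices $j$ with $j\not\equiv i \bmod \overline{n}$, so $j\%\overline{n}$ ranges over $[0:\overline{n})\setminus\{\overline{i}\}$, and the corresponding blocks are $x_{t,j}\overline{A}_{t,j\%\overline{n}}$ with repair matrices $\overline{R}_{\overline{i},j\%\overline{n}}$.

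Next I would check the three requirements R1--R3 for the new code. For R1: the coefficient matrix of $\mathbf{f}_i$ is obtained from the base code's coefficient matrix of $\mathbf{f}_{\overline{i}}$ by scaling block row $t$'s contribution by the nonzero scalar $x_{t,i}$; since scaling a block column of a matrix by nonzero scalars from the left within each parity-check group preserves rank, R1 for the new code follows from R1 for the base code (whose rank is $N$ because $\overline{R}_{\overline{i},\cdot}$ has $N/(\overline{d}-\overline{k}+1)$ rows and $\overline{d}$ helpers give $\overline{d}N/(\overline{d}-\overline{k}+1)\geq N$, and R1 holds by hypothesis). For R2: the condition \eqref{repair_node_requirement3 d} for the new code at a node $j$ with $j\not\equiv i\bmod\overline{n}$ reads $\mathrm{rank}\bigl(\begin{smallmatrix}\overline{R}_{\overline{i},j\%\overline{n}}\\ x_{t,j}\overline{S}_{\overline{i},t}\overline{A}_{t,j\%\overline{n}}\end{smallmatrix}\bigr) = \mathrm{rank}(\overline{R}_{\overline{i},j\%\overline{n}})$, which is exactly R2 for the base code (the nonzero scalar $x_{t,j}$ is irrelevant to rank), and for compulsory nodes $j\equiv i$ with $R_{i,j}=I_N$ the rank condition is trivial. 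R3 is assumed to hold by hypothesis, so nothing is needed there.

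Finally I would compute the repair bandwidth. The $d_c = s-1$ compulsory nodes each contribute $\mathrm{rank}(I_N) = N$, and the remaining $d - d_c$ helpers each contribute $\mathrm{rank}(\overline{R}_{\overline{i},j\%\overline{n}}) = \frac{N}{\overline{d}-\overline{k}+1}$. Using $d = n - \overline{n} + \overline{d}$, $k = n - r$, $\overline{k} = \overline{n} - r$, so $\overline{d} - \overline{k} + 1 = d - k + 1$, the total is $\gamma_i = (s-1)N + (d - s + 1)\cdot\frac{N}{d-k+1}$. I would then verify algebraically that this equals $\bigl(1 + \frac{d_c(d-k)}{d}\bigr)\cdot\frac{d}{d-k+1}N$: indeed $\frac{d}{d-k+1}N + \frac{d_c(d-k)}{d-k+1}N = \frac{d + d_c(d-k)}{d-k+1}N$, and matching $(s-1)(d-k+1) + (d-s+1) = (s-1)(d-k) + d = d_c(d-k) + d$ confirms it. I expect the main obstacle to be the bookkeeping in the interference-cancellation step — carefully arguing that downloading full vectors from the $s-1$ congruent nodes suffices to remove precisely those interference terms, leaving a residual system that is a scalar-scaled copy of the base code's repair system with helper set of size $d - d_c = \overline{d}$ (after reindexing by $\%\overline{n}$) — since everything else reduces to rank invariance under nonzero column scaling and a short arithmetic identity.
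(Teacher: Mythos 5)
Your proposal follows essentially the same route as the paper's proof: verify that R1 and R2 for the new code follow from R1 and R2 of the base code because the new select/parity-check/repair matrices differ from the base code's only by nonzero scalar factors $x_{t,j}$ (which do not affect rank), invoke R3 as an assumption, and then evaluate $\gamma_i = d_c N + (d-d_c)\frac{N}{d-k+1}$ and match it to $(1+\frac{d_c(d-k)}{d})\gamma_{\rm optimal}$ using $\overline{d}-\overline{k}+1 = d-k+1$. The interference-cancellation narration at the start is extra framing rather than an alternative mechanism, and the one slip (you say ``scaling a block column'' where you mean block rows, and the parenthetical appeal to $\overline{R}$ in the R1 discussion is irrelevant) does not change the substance.
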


\begin{proof}
Assume that node $i\in [0, n)$ fails. We connect $d=n-\overline{n}+\overline{d}$ surviving nodes to repair node $i$, where all the $d_c=s-1$ nodes with indices in the set $\{j|0\le j<n, j\ne i, j \equiv i \bmod \overline{n}\}$ as well as any $d-d_c$ nodes from the remaining surviving nodes are connected.
Firstly, we prove that R1 (i.e., \eqref{repair_node_requirement1 d}) holds for the new resultant code.

For any $i\in[0:n)$, rewrite $i$ as $i=u\overline{n}+\overline{i}$, where $u\in[0:s)$ and $\overline{i}\in [0:\overline{n})$. By \eqref{Eqn general coding matrix} and \eqref{Eqn general S}, we have
\begin{eqnarray*}
\textrm{rank}(\left(
\begin{array}{c}
S_{i,0} A_{0,i}\\
S_{i,1} A_{1,i} \\
\vdots \\
S_{i,r-1} A_{r-1,i}
\end{array}
\right))=\textrm{rank}(\left(
\begin{array}{c}
\overline{S}_{\overline{i},0} \overline{A}_{0,\overline{i}}\\
\overline{S}_{\overline{i},1} \overline{A}_{1,\overline{i}} \\
\vdots \\
\overline{S}_{\overline{i},r-1} \overline{A}_{r-1,\overline{i}}
\end{array}
\right))=N,
\end{eqnarray*}
where the last equality holds since R1 holds for the base code.

Next, we prove that R2 (i.e., \eqref{repair_node_requirement3 d})    holds for the new resultant code.
For $i,j\in[0:n)$ with $j\ne i$, we also rewrite $i$ and $j$ as $i=u\overline{n}+\overline{i}$ and $j=v\overline{n}+\overline{j}$, where $u,v\in[0:s)$ and $\overline{i},\overline{j}\in [0:\overline{n})$. By \eqref{Eqn general coding matrix}-\eqref{Eqn general S},
if $j\not\equiv i \bmod \overline{n}$ (i.e., $\overline{i}\ne \overline{j}$), then we have
\begin{eqnarray*}
\textrm{rank} (\left(
\begin{array}{c}
R_{i,j} \\
S_{i,t} A_{t,j} \\
\end{array}
\right))=
\textrm{rank} (\left(
\begin{array}{c}
\overline{R}_{\overline{i},\overline{j}} \\
\overline{S}_{\overline{i},t}\overline{A}_{t,\overline{j}} \\
\end{array}
\right))
=\textrm{rank}
(\overline{R}_{\overline{i},\overline{j}})=\textrm{rank}
(R_{i,j})
, ~t\in[0:r),
\end{eqnarray*}
where the second equality holds since R2 holds for the base code.

If $j\equiv i \bmod \overline{n}$ (i.e., $\overline{i}=\overline{j}$), by \eqref{Eqn general R}, we have
\begin{eqnarray*}
\textrm{rank} (\left(
\begin{array}{c}
R_{i,j} \\
S_{i,t} A_{t,j} \\
\end{array}
\right))=
\textrm{rank} (\left(
\begin{array}{c}
 I \\
 S_{i,t} A_{t,j}\\
\end{array}
\right))=N=\textrm{rank}(R_{i,j}), ~t\in[0:r).
\end{eqnarray*}

Thus finally, if R3 holds for the new resultant code, then node $i$ can be repaired with
the repair bandwidth being
\begin{eqnarray*}
  \gamma_i &=& \sum\limits_{j=0,j\ne i}^{n-1} \mbox{rank}(R_{i,j})=(d-d_c)\frac{N}{d-k+1}+d_cN=(1+\frac{d_c(d-k)}{d})\gamma_{\rm optimal}<(1+r/\overline{n})\gamma_{\rm optimal},
\end{eqnarray*}
where $\gamma_{\rm optimal}=\frac{dN}{d-k+1}$ is the optimal value for the repair bandwidth of $(n, k)$ MDS array codes with repair degree $d$, and the less than sign holds since
\begin{eqnarray*}
\frac{d_c(d-k)}{d}=\frac{(s-1)(d-k)}{s\overline{n}-\overline{n}+\overline{k}}< \frac{(d-k)}{\overline{n}}<r/\overline{n}.
\end{eqnarray*}
\end{proof}

The resultant MDS array code obtained from the generic transformation has the same sub-packetization level as the base code. To construct $(n, k)$ MDS array codes with repair degree $d<n-1$ and a small sub-packetization level, an $(\bar{n}, \bar{k})$ base code with repair degree $\bar{d}<\bar{n}-1$ is preferred to have a small sub-packetization level. In Table \ref{Table comp}, we summarize existing  $(\bar{n}, \bar{k})$ MSR codes with repair degree $\bar{d}<\bar{n}-1$ in the literatures. In the following sections, we apply the generic transformation to the LLTHBZ code in \cite{li2023MDS} and YB codes 1, 2 in \cite{Barg1}. Whereas when applying the generic transformation to the LLT code in \cite{liu2022generic} and VBK code \cite{vajha2021small}, it isn't easy to verify the repair property and MDS property of the resultant codes. While the MSR code in \cite{zhang2023vertical} (which is referred to as the ZZ code) is similar to the LLTHBZ code in \cite{li2023MDS} but requires a larger finite field when $d>k+1$. Thus they will not be discussed in this paper.

\begin{table}[htbp]
\begin{center}
\caption{A summary of some key parameters of existing $(\bar{n},\bar{k})$ MSR codes with repair degree $\bar{d}=\bar{k}+w-1<\bar{n}-1$, where $w\in [2, \bar{n}-\bar{k})$.}\label{Table comp}
\renewcommand{\arraystretch}{1.5}
\begin{tabular}{|c|c|c|c|c|}
\hline  
&Sub-packetization level $N$& Field size  & Remark & References \\
\hline
LLT code    &$w^{\lceil \frac{\bar{n}}{2} \rceil}$   &  $q>\bar{n}+\lceil\frac{\bar{n}}{2} \rceil w$  &    Optimal access   & \cite{liu2022generic}\\
\hline
\multirow{2}{*}{VBK code}    &\multirow{2}{*}{$w^{\lceil \frac{\bar{n}}{w} \rceil}$}   &\multirow{2}{*}{  $q\ge \hspace{-1mm}\left\{\hspace{-2mm}\begin{array}{ll}
6\lceil \frac{\bar{n}}{2} \rceil+2,& \mbox{if~}w=2\\
18\lceil \frac{\bar{n}}{w} \rceil+2, &\mbox{if~} w=3,4
\end{array} \right.$}  &    Optimal access   & \multirow{2}{*}{\cite{vajha2021small}}\\ 
&&&$w\in \{2,3,4\}$&\\[4pt]
\hline
 YB code 1    &$w^{\bar{n}}$   &  $q\ge r\bar{n}$  &     Optimal update   & \cite{Barg1}\\
\hline
YB code 2      &$w^{\bar{n}}$   &  $q>\bar{n}$  &    Optimal access   & \cite{Barg1}\\
\hline
ZZ code       &$w^{\lceil \frac{\bar{n}}{2} \rceil}$   &  $q\ge \bar{n}w$  &   & \cite{zhang2023vertical}\\
\hline
LLTHBZ code    &$w^{\lceil \frac{\bar{n}}{2} \rceil}$   &  $q>\hspace{-1mm}\left\{\hspace{-2mm}\begin{array}{ll}
\lceil\frac{\bar{n}}{2} \rceil (w+2), &\mbox{if~} w=2\\
\lceil\frac{\bar{n}}{2} \rceil (w+1), &\mbox{if~} w>2
\end{array} \right.$  &   & \cite{li2023MDS}\\
\hline
\end{tabular}
\end{center}
\end{table}

\section{An $(n,k)$ MDS array code $\mathcal{C}_1$ with the $(d,d_c)$-repair property}\label{sec:C3}
In this section,  we first introduce the $(\bar{n}=2m,\bar{k}=\bar{n}-r)$ MSR code (i.e., the LLTHBZ code) from \cite{li2023MDS}. LLTHBZ code has a sub-packetization level $N=w^{m}$ and a repair degree of $\bar{d}=\bar{k}+w-1<\bar{n}-1$, where $w\in [2: r)$. We refer to LLTHBZ code as $\mathcal{C}_0$ in this paper. Note from Table \ref{Table comp} that it has the smallest sub-packetization level among all explicit $(\bar{n},\bar{k})$ MSR codes with all admissible repair degree $\bar{d}<\bar{n}-1$. Then by applying the generic transformation to the code $\mathcal{C}_0$, we construct an $(n=s\bar{n},k=n-r)$ MDS array code $\mathcal{C}_1$ with $(d,d_c)$-repair property, where $d=n-\bar{n}+\bar{d}=k+w-1$ and $d_c=s-1$. Throughout this section, we use $c$ to denote a primitive element of the finite field $\mathbf{F}_{q}$.  

\subsection{The MSR code in \cite{li2023MDS}}\label{sec:C0}

This subsection revisits the MSR code in \cite{li2023MDS}.
\begin{Construction}(\cite[Construction 1]{li2023MDS})\label{Con_C0}
For $N=w^m$ and $2\le w<r$, the parity-check matrix $(\bar{A}_{t,i})_{t\in [0:r),i\in [0:\bar{n})}$ of the $(\bar{n}=2m,\bar{k}=\bar{n}-r)$ array code $\mathcal{C}_0$ over $\mathbf{F}_{q}$ is defined as
  \begin{eqnarray}\label{Eqn the PCM of code C0}
    \bar{A}_{t,i}=\left\{\begin{array}{ll}
       \sum\limits_{a=0}^{N-1}\lambda_{i,a_i}^te_a^\top e_a+\sum\limits_{a=0,a_i=0}^{N-1}\sum\limits_{u=1}^{w-1}(\lambda_{i,0}^t-\lambda_{i,u}^t) e_a^\top e_{a(i,u)},  & \textrm{if~}i\in [0:m),  \\
        \sum\limits_{a=0}^{N-1}\lambda_{i,a_{i-m}}^t e_a^\top e_a,  & \textrm{if~}i\in [m:\bar{n}),  \\ 
    \end{array}\right.
  \end{eqnarray}
and the repair matrix and select matrix of node $i$ are defined as
  \begin{eqnarray}\label{Eqn the repair and select matrix of code C0}
     \overline{R}_{i,j}=\overline{S}_{i,t}=\left\{\begin{array}{ll}
          V_{i,0}, & \textrm{if~}i\in [0:m), \\
          V_{i,0}+V_{i,1}+\cdots+V_{i,w-1}, & \textrm{if~}i\in [m:\bar{n}),
     \end{array}\right. 
  \end{eqnarray}
  for $i,j\in [0: \bar{n})$ with $j\ne i$ and $t\in [0:r)$, where $V_{i,0},V_{i,1},\ldots,V_{i,w-1}$ for $i\in [0:m)$ are defined in \eqref{Eqn_Vt} and we further define
\begin{equation}\label{Eqn_Vt2}
V_{i,u}=V_{i-m,u}, \mbox{~for~}i\in  [m, \bar{n}), u\in [0: w)
\end{equation}
for convenience of notation.
\end{Construction}

 For any $a,b\in [0: N)$, $i\in [0:\bar{n})$ and $t\in [0: r)$, according to \eqref{Eqn the PCM of code C0} and the proof of Theorem 1 in \cite{li2023MDS}, we have
  \begin{equation}\label{Eqn the value of A(a,b)}
    \bar{A}_{t,i}[a,b]
 =\left\{\begin{array}{ll}
        \lambda_{i,a_i}^t,  & \textrm{if~}i\in [0: m), \textrm{ and } b=a,\\
        \lambda_{i,0}^t-\lambda_{i,u}^t, & \textrm{if~}i\in [0: m), a_i=0,\textrm{ and }  b=a(i,u) \textrm{~for~} u=1,2,\ldots,w-1, \\
        \lambda_{i,a_{i-m}}^t,  & \textrm{if~}i\in [m: \bar{n}) \textrm{~and~}b=a,\\
        0,  & \textrm{otherwise.}
    \end{array}\right.
  \end{equation}

In the following, 
we introduce a lemma in \cite{li2023MDS} that characterizes the product of $\bar{S}_{i,t}$ and $\bar{A}_{t,j}$, which is useful for verifying  the properties of the proposed array code. 
\begin{Lemma}(\cite[Lemma 2]{li2023MDS})\label{lem the form between S and A}
  For any $i,j\in [0:\bar{n})$,  rewrite them as
$i=g_0m+i'$ and $j=g_1m+j'$ for $g_0,g_1\in \{0,1\}$ and $i',j'\in [0: m)$. Then for $t\in [0:r)$, we have 
  \begin{itemize}
      \item [i)] $\bar{S}_{i,t}\bar{A}_{t,i}=\left\{\begin{array}{ll}
\lambda_{i,0}^t V_{i,0}+(\lambda_{i,0}^t-\lambda_{i,1}^t)V_{i,1}+\cdots+(\lambda_{i,0}^t-\lambda_{i,w-1}^t) V_{i,w-1}, &\textrm{if~} i\in [0: m),\\[4pt]
\lambda_{i,0}^t V_{i,0}+\lambda_{i,1}^t V_{i,1}+\cdots+\lambda_{i,w-1}^t V_{i,w-1}, &\textrm{if~} i\in [m:2m),      
 \end{array}\right.$
      \item [ii)] $\bar{S}_{i,t}\bar{A}_{t,j}=\bar{B}_{t,j,i}\bar{R}_{i,j}$ for $j\ne i$, where $\bar{B}_{t,j,i}$ is an $\frac{N}{w}\times \frac{N}{w}$ matrix define by
      \begin{align}\label{Eqn the bmatrix bar{B}}
        \bar{B}_{t,j,i}=\left\{\begin{array}{ll}
           \sum\limits_{a=0}^{N/w-1}\lambda_{j,a_j}^t(e_a^{(N/w)})^\top e_a^{(N/w)}+\sum\limits_{a=0,a_j=0}^{N/w-1}\sum\limits_{u=1}^{w-1}(\lambda_{j,0}^t-\lambda_{j,u}^t)(e_a^{(N/w)})^\top e_{a(j,u)}^{(N/w)},  & \textrm{if~}j\in [0: i'),  \\
           \sum\limits_{a=0}^{N/w-1}\lambda_{j,a_{j-1}}^t(e_a^{(N/w)})^\top e_a^{(N/w)}+\sum\limits_{a=0,a_{j-1}=0}^{N/w-1}\sum\limits_{u=1}^{w-1}(\lambda_{j,0}^t-\lambda_{j,u}^t)(e_a^{(N/w)})^\top e_{a(j-1,u)}^{(N/w)},  & \textrm{if~}j\in [i'+1: m),  \\
       \sum\limits_{a=0}^{N/w-1}\lambda_{j,a_{j-m}}^t(e_a^{(N/w)})^\top e_a^{(N/w)},  & \textrm{if~}j\in [m: m+i'),  \\
           \sum\limits_{a=0}^{N/w-1}\lambda_{j,a_{j-m-1}}^t(e_a^{(N/w)})^\top e_a^{(N/w)},  & \textrm{if~}j\in [m+i'+1: \bar{n}),  \\[6pt] 
           \lambda_{j,0}^t I_{N/w}, & \textrm{if~}j\equiv i\bmod m,
        \end{array}\right.
      \end{align}
  \end{itemize}
  where $e_0^{(N/w)},e_1^{(N/w)},\ldots,e_{N/w-1}^{(N/w)}$ is the standard basis of $\mathbf{F}_{q}^{N/w}$.
 
 \item [iii)] For the matrix in \eqref{Eqn the bmatrix bar{B}}, we have
    \begin{eqnarray}\label{Eqn the first eq for proving Lem 10}
    	\bar{B}_{t,j,i}[a,b]=0 \textrm{ for any } t\in [0,r),0\le b<a<N/w,
    \end{eqnarray}
    and
   \begin{eqnarray}\label{Eqn the second eq for proving Lem 10}
    	\bar{B}_{t,j,i}[a,a]=\left\{\begin{array}{ll}
    		\lambda_{j,a_{j'}}^t, & \textrm{if }j'<i',\\
    		\lambda_{j,0}^t, & \textrm{if }j'= i',\\
    		\lambda_{j,a_{j'-1}}^t, & \textrm{if }j' >i',		
    	\end{array}\right.
    \end{eqnarray}
    for any $a\in [0: N/w)$.  
\end{Lemma}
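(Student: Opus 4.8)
The statement to prove is Lemma~\ref{lem the form between S and A}, which is quoted from \cite{li2023MDS}, so the plan is essentially to reconstruct the bookkeeping that verifies each of its three parts directly from the explicit formulas~\eqref{Eqn the PCM of code C0}, \eqref{Eqn the repair and select matrix of code C0}, and \eqref{Eqn the value of A(a,b)}. Everything hinges on the matrix identities in~\eqref{Eqn re Vu for liu}--\eqref{Eqn re Vu[a] for liu}, which express each $V_{i,u}$ as $\sum_{a=0}^{N/w-1}(e_a^{(N/w)})^\top e_{g_{i,u}(a)}$, together with the insertion/substitution bookkeeping in~\eqref{Eqn ga b}--\eqref{Eqn ga b (j,v)}. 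I would organise the proof around the single computation $\bar S_{i,t}\bar A_{t,j}$ for an arbitrary pair $i,j$ and then specialise.

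First I would treat part~i). Here $j=i$, so $\bar S_{i,t}\bar A_{t,i}$ must be computed with the select matrix equal to $V_{i,0}$ (if $i<m$) or $\sum_u V_{i,u}$ (if $i\ge m$). Using~\eqref{Eqn re Vu[a] for liu}, the $a$-th row of $V_{i,u}$ is $e_{g_{i,u}(a)}$, and multiplying on the right by $\bar A_{t,i}$ picks out the row $\bar A_{t,i}[g_{i,u}(a),:]$. By~\eqref{Eqn the value of A(a,b)} this row has diagonal entry $\lambda_{i,(g_{i,u}(a))_i}^t = \lambda_{i,u}^t$ (using~\eqref{Eqn ga b} with $j=i$) plus, when the $i$-th digit of the index is $0$, the off-diagonal terms $\lambda_{i,0}^t-\lambda_{i,v}^t$ at columns $g_{i,u}(a)(i,v)=g_{i,v}(a)$. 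Collecting these contributions and re-expressing the result back in terms of the $V_{i,u}$'s (reading the row-support as the appropriate $e_{g_{i,u}(a)}$) gives exactly the two displayed formulas in part~i). The case $i\in[m:2m)$ is easier since $\bar A_{t,i}$ is diagonal, so only the $\lambda_{i,u}^t V_{i,u}$ terms survive.

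Next, part~ii): for $j\ne i$ I would again compute row $a$ of $\bar S_{i,t}\bar A_{t,j}$ as $\bar A_{t,j}[g_{i,u}(a),:]$ (taking $u=0$ when $i<m$, summing over $u$ when $i\ge m$), and then show this row, viewed as a vector in $\mathbf F_q^N$, is supported on the coordinate set $\{g_{i,v}(b):v\in[0:w)\}$ for a suitable $b\in[0:N/w)$ — i.e.\ it lies in the row space of $\bar R_{i,j}$ — with coefficients matching row $a$ of the claimed $\bar B_{t,j,i}\bar R_{i,j}$. The five cases in~\eqref{Eqn the bmatrix bar{B}} are exactly the five positional relationships between $j'$ (the "digit index" that $\bar A_{t,j}$ acts on) and $i'$ (the digit that $V_{i,\cdot}$ inserts/removes): $j'<i'$, $j'>i'$, $j$ in the diagonal block $[m:m+i')$, in $[m+i'+1:\bar n)$, and $j\equiv i\bmod m$. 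Each case is a mechanical application of~\eqref{Eqn ga b} and~\eqref{Eqn ga b (j,v)} to translate "substitute digit $j$ of $g_{i,u}(a)$" into "substitute digit $j'$ (shifted) of $a$", which is precisely what produces the $(N/w)\times(N/w)$ matrix $\bar B_{t,j,i}$ with the digit subscripts $a_{j'}$, $a_{j'-1}$, etc. The last sub-case $j\equiv i\bmod m$ is where the two copies of the $i'$-th digit (one from the "$[0:m)$ half", one from the "$[m:2m)$ half") interact so that summing over $u$ collapses everything to the scalar $\lambda_{j,0}^t I_{N/w}$; I would check this by noting the off-diagonal terms of $\bar A_{t,j}$ cancel against the sum over $u$.

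Finally, part~iii) is an immediate corollary of the explicit form of $\bar B_{t,j,i}$ just established: in every one of the five cases the matrix is upper-triangular because the only off-diagonal entries $(e_a^{(N/w)})^\top e_{a(j',v)}^{(N/w)}$ occur when the relevant digit of $a$ is $0$ and $v\ge1$, so $a(j',v)>a$, giving~\eqref{Eqn the first eq for proving Lem 10}; and reading off the diagonal entry in each case gives the three-way split in~\eqref{Eqn the second eq for proving Lem 10} according to whether $j'<i'$, $j'=i'$, or $j'>i'$ (the two "$[m:2m)$" cases and the "$j\equiv i\bmod m$" case all falling under one of these by the digit-index bookkeeping). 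The main obstacle is not conceptual but notational: keeping the index gymnastics straight across the halves $[0:m)$ and $[m:2m)$ and across the $j'<i'$ versus $j'>i'$ shift, since the insertion map $g_{i,u}$ shifts digit labels above position $i$. I would manage this by fixing once and for all the identity $(g_{i,u}(a))(j,v)$ from~\eqref{Eqn ga b (j,v)} and applying it verbatim in each case rather than re-deriving the shifts each time.
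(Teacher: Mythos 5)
This lemma is quoted from \cite{li2023MDS} and the present paper supplies no proof of its own, so there is no paper argument to compare against; your task reduces to verifying the statement directly, and your outline does that correctly by the only natural route: compute row $a$ of $\bar S_{i,t}\bar A_{t,j}$ via \eqref{Eqn re Vu[a] for liu} and \eqref{Eqn the value of A(a,b)}, translate $(g_{i',u}(a))(j,v)$ via \eqref{Eqn ga b} and \eqref{Eqn ga b (j,v)}, and match the result against $\bar B_{t,j,i}\bar R_{i,j}$ case by case. Two small places deserve more care when you write it out in full. First, the branch $j\equiv i\bmod m$ actually splits into two sub-cases: if $i\ge m$ and $j=i'$ then $\bar S_{i,t}=\sum_u V_{i',u}$ meets the off-diagonal block $\bar A_{t,j}$ and the telescoping $\lambda_{j,0}^t=(\lambda_{j,0}^t-\lambda_{j,v}^t)+\lambda_{j,v}^t$ across the $u$-sum produces $\lambda_{j,0}^t I_{N/w}$, which is the cancellation you describe; but if $i<m$ and $j=m+i'$ then $\bar A_{t,j}$ is diagonal with entry $\lambda_{j,b_{i'}}^t$ and $\bar S_{i,t}=V_{i',0}$ forces that digit to $0$, so the identity is immediate without any cancellation. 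Second, in part iii) the off-diagonal columns of $\bar B_{t,j,i}$ are at $a(j,u)$ in the first branch of \eqref{Eqn the bmatrix bar{B}} but at $a(j-1,u)=a(j'-1,u)$ in the second, not uniformly at $a(j',u)$; upper-triangularity still holds because in every case a zero digit is replaced by a positive one, but the index should be stated correctly to justify \eqref{Eqn the first eq for proving Lem 10}.
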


We can derive the following result from Theorems 1-3 in \cite{li2023MDS}.
\begin{Proposition}\label{prop_C0}
The code $\mathcal{C}_0$ in Construction \ref{Con_C0} is an $(\bar{n}=2m,\bar{k}=\bar{n}-r)$ MSR code with repair degree $\bar{d}=\bar{k}+w-1$ if setting
\begin{eqnarray}\label{Eqn the coefficient lambda}
    \lambda_{i,u}=\left\{\begin{array}{ll}
c^{i(w+2)+u},& {\rm if~} w=2,\\
c^{i(w+1)+u},& {\rm if~} w\in [3:r),
\end{array}\right.    \lambda_{i+m,u}=\left\{\begin{array}{ll}
c^{i(w+2)+w+u},& {\rm if~} w=2, \\
c^{i(w+1)+w},& {\rm if~} w\in [3:r), u=0, \\
c^{i(w+1)+u\% (w-1)+1},& {\rm if~} w\in [3:r), u\ge1, 
\end{array}\right.
  \end{eqnarray}
 for $i\in [0:m)$ and $u\in [0:w)$,  
 where $c$ is a primitive element of 
$\mathbf{F}_q$ with
\begin{equation*}
q>\left\{\begin{array}{ll}
m(w+2),& {\rm if~} w=2, \\
m(w+1),& {\rm if~} w\in [3:r). 
\end{array}\right.
\end{equation*}
\end{Proposition}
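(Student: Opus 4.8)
The plan is to establish three things for the code $\mathcal{C}_0$ with the coefficient assignment \eqref{Eqn the coefficient lambda}: (a) it is MDS, (b) it is an MSR code, i.e., every node can be repaired with optimal bandwidth using repair degree $\bar{d}=\bar{k}+w-1$, and (c) the prescribed field size suffices. Since the statement says the result follows from Theorems 1--3 in \cite{li2023MDS}, the actual content is to verify that the explicit choice of $\lambda_{i,u}$, $\lambda_{i+m,u}$ in \eqref{Eqn the coefficient lambda} satisfies the hypotheses of those three theorems. So the first step is to recall precisely what algebraic conditions on the scalars $\lambda_{i,u}$ Theorems 1--3 in \cite{li2023MDS} require. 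Typically for codes of this Ye--Barg flavor these are: for the MDS property, certain ``block Vandermonde'' non-degeneracy conditions of the type handled by Lemma~\ref{Lemma block matrix} (so one needs $\lambda_{i,u}\ne\lambda_{i,v}$ within the appropriate groups, and the diagonal blocks $\bar{A}_{t,i}$ non-singular, i.e., all $\lambda_{i,u}\ne 0$, which is automatic since $c$ is primitive); and for the repair property, one needs the $w$ scalars $\lambda_{i,0},\dots,\lambda_{i,w-1}$ attached to each ``repair coordinate'' $i\in[0:m)$ to be pairwise distinct, together with a cross-coordinate distinctness condition coming from part iii) of Lemma~\ref{lem the form between S and A} (the diagonal entries $\bar{B}_{t,j,i}[a,a]$ in \eqref{Eqn the second eq for proving Lem 10} must stay distinct across the $j$'s so that a block-Vandermonde argument cancels the interference).

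Second, I would translate these conditions into statements about exponents of the primitive element $c$. Writing $\lambda_{i,u}=c^{e(i,u)}$ with the exponents $e(i,u)$ read off from \eqref{Eqn the coefficient lambda}, two such scalars are equal iff their exponents are congruent modulo the order $q-1$ of $c$. Because $q-1 > m(w+2)$ (resp. $m(w+1)$) and all exponents appearing lie in the range $[0, m(w+2))$ (resp. $[0,m(w+1))$), congruence modulo $q-1$ collapses to plain equality of exponents. Hence it suffices to check the required distinctnesses purely combinatorially on the integer exponents: within a fixed $i$ the exponents $i(w+1)+u$ for $u\in[0:w)$ are distinct; the block for $i+m$ is designed (note the split into $u=0$ and $u\ge 1$, giving exponents $i(w+1)+w$ and $i(w+1)+(u\bmod(w-1))+1$) so that its scalars are distinct from each other and, crucially, the stride $w+1$ (resp. $w+2$) in $i$ guarantees no collision across different coordinates $i\ne i'$. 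The case $w=2$ is handled separately because there $w-1=1$ makes the $u\bmod(w-1)$ device degenerate, so a stride of $w+2$ rather than $w+1$ is used; I would just carry the two cases in parallel.

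Third, I would assemble the pieces: invoke Theorem 1 of \cite{li2023MDS} (MDS) once the non-singularity and distinctness conditions are verified; invoke Theorems 2 and 3 of \cite{li2023MDS} (optimal-bandwidth repair for the two types of coordinates $i\in[0:m)$ and $i\in[m:2m)$, using the explicit forms of $\bar{S}_{i,t}\bar{A}_{t,i}$ and $\bar{B}_{t,j,i}$ in Lemma~\ref{lem the form between S and A}) once the corresponding exponent-distinctness conditions are verified; and conclude that $\bar{d}=\bar{k}+w-1<\bar{n}-1$ since $w<r=\bar{n}-\bar{k}$. I expect the main obstacle to be bookkeeping: correctly stating the full list of distinctness conditions demanded by the three cited theorems (in particular the cross-coordinate condition from \eqref{Eqn the second eq for proving Lem 10}, which mixes the ``$j<i'$'', ``$j=i'$'', ``$j>i'$'' regimes and the two halves of the coordinate range) and then checking that the chosen exponent pattern in \eqref{Eqn the coefficient lambda}, with its deliberate stride and its $w=2$ exception, avoids every collision within the exponent window that the field-size bound guarantees. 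None of the individual checks is deep, but getting the case analysis exhaustive and matching it exactly to the hypotheses of \cite[Theorems 1--3]{li2023MDS} is where the care is needed.
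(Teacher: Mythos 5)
The paper gives no proof of this proposition beyond the single remark that it ``can be derived from Theorems 1--3 in \cite{li2023MDS},'' and your sketch correctly unpacks what that derivation consists of: reducing the non-singularity and pairwise-distinctness hypotheses of those theorems to integer-exponent comparisons, valid because every exponent in \eqref{Eqn the coefficient lambda} lies in $[0,m(w+2))$ or $[0,m(w+1))$ and $q-1$ exceeds that window. One small correction worth stating outright: the MDS check for $\mathcal{C}_0$ must run through the upper-triangular block criterion of Lemma~\ref{lem the important for the proof of this section} rather than the commuting-block Vandermonde Lemma~\ref{Lemma block matrix}, since the $\bar{A}_{t,i}$ for $i\in[0:m)$ have off-diagonal entries and need not commute with one another; you gesture at this (``of the type handled by'') but the two lemmas have genuinely different hypotheses, and only the former applies here.
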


\subsection{An MDS array code $\mathcal{C}_1$ with the $(d,d_c)$-repair property} 
In this section, we construct an $(n=s\overline{n},k)$ MDS array code $\mathcal{C}_1$ with the $(d,d_c=s-1)$-repair property by applying the generic transformation to the $(\overline{n}, \overline{k})$ MSR code $\mathcal{C}_0$ with $\overline{d}<\overline{n}-1$ in Construction \ref{Con_C0}, where $n-k=\overline{n}-\overline{k}$ and $d<n-1$. Before presenting the construction, we first introduce a lemma from \cite{li2023MDS}, which is useful for verifying the MDS property of the new array code $\mathcal{C}_1$.

\begin{Lemma}(\cite[Lemma 1]{li2023MDS})\label{lem the important for the proof of this section}
Let $B_{t,i}$ be an $N\times N$ upper triangular matrix for $t, i\in [0:r)$, where
\begin{equation}\label{Eqn_upper}
B_{t,i}[a,b]=0 \mbox{~for~} 0\le b<a<N,
\end{equation}  
Then,
the block matrix
   \begin{equation*}
      B=\left(\begin{array}{cccc}
          B_{0,0} &  B_{0,1} & \cdots & B_{0,r-1}\\
           B_{1,0} &  B_{1,1} & \cdots & B_{1,r-1}\\
           \vdots & \vdots & \vdots & \vdots\\
           B_{r-1,0} &  B_{r-1,1} & \cdots & B_{r-1,r-1}\\
      \end{array}\right)
   \end{equation*}
   is non-singular
if the following conditions hold:
\begin{itemize}
\item [i)] $B_{t,i}[a,a]=(B_{1,i}[a,a])^t$ for $i, t\in [0:r)$ and $a\in [0: N)$,

\item [ii)] $B_{1,i}[a,a]\ne B_{1,j}[a,a]$ for any $i,j\in [0,r)$ with $j\ne i$ and $a\in [0: N)$.
\end{itemize}
\end{Lemma}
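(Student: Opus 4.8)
The plan is to prove Lemma \ref{lem the important for the proof of this section} by reducing the determinant of the block matrix $B$ to a product of determinants of small Vandermonde-like matrices, one for each basis index $a\in[0:N)$. The key observation is that every block $B_{t,i}$ is upper triangular with the same diagonal-power structure $B_{t,i}[a,a]=(B_{1,i}[a,a])^{t}$, so after a suitable permutation of rows and columns the matrix $B$ becomes block upper triangular in a finer sense, and its determinant factors across the coordinate index $a$.

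Concretely, here is the order in which I would carry out the argument. First, reindex: think of the rows and columns of $B$ as being labelled by pairs $(t,a)$ and $(i,b)$ respectively, with $t,i\in[0:r)$ the block index and $a,b\in[0:N)$ the within-block index. Apply the permutation that sorts primarily by the within-block index $a$ (respectively $b$) and secondarily by the block index; call the resulting matrix $\widetilde{B}$. Since a permutation of rows and the same-shaped permutation of columns does not change whether the matrix is singular (it multiplies the determinant by $\pm 1$, and in fact by $(\mathrm{sgn})^2=1$ if we use the same permutation pattern on both sides, but singularity is all we need anyway), it suffices to show $\widetilde{B}$ is non-singular. Second, observe that condition \eqref{Eqn_upper}, namely $B_{t,i}[a,b]=0$ for $b<a$, means that in $\widetilde{B}$ every entry with within-row-index $a$ strictly greater than within-column-index $b$ vanishes; hence $\widetilde{B}$ is block upper triangular with respect to the partition of indices by the value of $a=b$, the diagonal blocks being the $r\times r$ matrices $M^{(a)}$ with $M^{(a)}[t,i]=B_{t,i}[a,a]$ for $a\in[0:N)$. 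Therefore $\det \widetilde{B}=\prod_{a=0}^{N-1}\det M^{(a)}$ (up to sign), and it remains to show each $M^{(a)}$ is non-singular.

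Third, invoke condition i): $M^{(a)}[t,i]=(B_{1,i}[a,a])^{t}=(x_i^{(a)})^t$ where I abbreviate $x_i^{(a)}:=B_{1,i}[a,a]$. Thus $M^{(a)}$ is exactly the $r\times r$ Vandermonde matrix in the scalars $x_0^{(a)},\ldots,x_{r-1}^{(a)}$, whose determinant is $\prod_{0\le i<j<r}(x_j^{(a)}-x_i^{(a)})$. By condition ii), $x_i^{(a)}\ne x_j^{(a)}$ for $i\ne j$, so this product is nonzero and $M^{(a)}$ is non-singular. Combining, $\det B\ne 0$, which is the claim. (One may alternatively phrase the Vandermonde step via Lemma \ref{Lemma block matrix} applied to the $1\times 1$ ``blocks'' $x_i^{(a)}$, but the scalar Vandermonde determinant formula is cleaner here.)

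I do not anticipate a serious obstacle; the only thing that needs care is the bookkeeping in the first two steps — making precise that the simultaneous row/column permutation turns the blockwise ``upper triangular in $a$'' hypothesis \eqref{Eqn_upper} into genuine block-triangularity of $\widetilde B$, so that the determinant splits as $\prod_a \det M^{(a)}$. The cleanest way to make this rigorous without heavy notation is to argue directly on the determinant: expand $\det B$ as a sum over permutations $\sigma$ of the index set $[0:r)\times[0:N)$, note that any nonzero term requires $\sigma$ to map each pair $(t,a)$ to a pair $(i,b)$ with $b\ge a$ (by \eqref{Eqn_upper}), then a counting/pigeonhole argument (there are exactly $r$ rows with each within-index value $a$ and exactly $r$ columns with each within-index value $b$) forces $b=a$ for every pair in the support of $\sigma$; hence $\sigma$ decomposes as a product of permutations $\sigma_a$ of the $r$-element fibers, and $\det B=\prod_{a=0}^{N-1}\det M^{(a)}$ with $M^{(a)}$ as above. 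From there the Vandermonde conclusion is immediate.
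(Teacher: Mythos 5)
Your proof is correct. Note that the paper at hand does not actually reproduce a proof of this lemma — it is imported verbatim from \cite[Lemma~1]{li2023MDS}, so there is no in-paper argument to compare against; I can only assess the proposal on its own merits, and it is sound. Your reindexing (rows by $(t,a)$, columns by $(i,b)$, re-sorted so that the within-block coordinate $a$ is primary and the block coordinate is secondary) correctly turns the hypothesis $B_{t,i}[a,b]=0$ for $b<a$ into genuine block upper-triangularity of the permuted matrix with $r\times r$ diagonal blocks $M^{(a)}[t,i]=B_{t,i}[a,a]$, and since the same permutation is applied to rows and columns the determinant is unchanged. Condition~i) then makes each $M^{(a)}$ a scalar Vandermonde matrix in the values $x_i^{(a)}=B_{1,i}[a,a]$ (note $t=0$ gives the all-ones top row, consistent with $(x_i^{(a)})^0=1$), and condition~ii) guarantees the Vandermonde determinant $\prod_{0\le i<j<r}\bigl(x_j^{(a)}-x_i^{(a)}\bigr)$ is nonzero, so $\det B=\prod_a\det M^{(a)}\ne 0$. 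Your alternative phrasing via expansion over permutations and the downward-induction/pigeonhole argument (every $(t,N-1)$ must land in column-fiber $N-1$, which exhausts that fiber, forcing $(t,N-2)$ into fiber $N-2$, and so on) is also valid and makes the block-triangularity argument rigorous without invoking determinant multiplicativity for block-triangular matrices. This is the canonical way to prove statements of this type in the MSR-code literature, so I would expect the cited proof in \cite{li2023MDS} to be essentially the same; there is no gap.
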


\begin{Construction}\label{Con C3}
Let $n=s\overline{n}$ and $k=n-r$.
Construct an $(n,k)$ array code $\mathcal{C}_1$ by applying the generic transformation in Section \ref{sec:generic tran} to the code $\mathcal{C}_0$, and setting
\begin{equation}\label{Eqn C3 x assi}
x_{t,i}=x_{i}^t
\end{equation}
in \eqref{Eqn general coding matrix} for some $x_i\in \mathbf{F}_q\setminus\{0\}$, where $i\in [0:n)$ and $t\in [0: r)$.
\end{Construction}

We first provide an example to illustrate the construction.
\begin{Example}\label{ex_C3}
Consider the $(\overline{n}=6, \overline{k}=3)$ MSR code $\mathcal{C}_0$ with $\overline{d}=4$ and $N=(\overline{d}-\overline{k}+1)^{\overline{n}/2}=2^3$ in Construction \ref{Con_C0}. By setting $s=2$, $x_{t,i}=1$, $x_{t,6+i}=c^{12t}$ for $t\in [0: 3)$ and $i\in [0: 6)$ in Construction \ref{Con C3}, we obtain an $(n=12, k=9)$ MDS array code $\mathcal{C}_1$ with $N=2^3$ and $d=10$. The parity-check matrix $(A_{t,i})_{t\in [0:3),i\in [0:12)}$ of the new MDS array code  $\mathcal{C}_1$ is given as
\begin{align*}
&A_{t,0}=\left(
      \begin{array}{l}
        \lambda_{0,0}^te_0+ (\lambda_{0,0}^t-\lambda_{0,1}^t)e_4 \\
       \lambda_{0,0}^t e_1+(\lambda_{0,0}^t-\lambda_{0,1}^t)e_5 \\
       \lambda_{0,0}^t e_2+ (\lambda_{0,0}^t-\lambda_{0,1}^t)e_6 \\
       \lambda_{0,0}^t e_3  + (\lambda_{0,0}^t-\lambda_{0,1}^t)e_7 \\
        \lambda_{0,1}^te_4 \\
        \lambda_{0,1}^te_5 \\
        \lambda_{0,1}^te_6 \\
        \lambda_{0,1}^te_7 \\
      \end{array}
    \right),~A_{t,1}=\left(
      \begin{array}{l}
        \lambda_{1,0}^te_0+ (\lambda_{1,0}^t-\lambda_{1,1}^t)e_2 \\
       \lambda_{1,0}^t e_1+(\lambda_{1,0}^t-\lambda_{1,1}^t)e_3 \\
       \lambda_{1,1}^t e_2 \\
       \lambda_{1,1}^t e_3   \\
        \lambda_{1,0}^te_4+ (\lambda_{1,0}^t-\lambda_{1,1}^t)e_6 \\
        \lambda_{1,0}^te_5+ (\lambda_{1,0}^t-\lambda_{1,1}^t)e_7 \\
        \lambda_{1,1}^te_6 \\
        \lambda_{1,1}^te_7 \\
      \end{array}
    \right),~A_{t,2}=\left(
      \begin{array}{l}
        \lambda_{2,0}^te_0+ (\lambda_{2,0}^t-\lambda_{2,1}^t)e_1 \\
       \lambda_{2,1}^t e_1\\
       \lambda_{2,0}^t e_2+(\lambda_{2,0}^t-\lambda_{2,1}^t)e_3  \\
       \lambda_{2,1}^t e_3   \\
        \lambda_{2,0}^te_4+ (\lambda_{2,0}^t-\lambda_{2,1}^t)e_5 \\
        \lambda_{2,1}^te_5\\
        \lambda_{2,0}^te_6+ (\lambda_{2,0}^t-\lambda_{2,1}^t)e_7  \\
        \lambda_{2,1}^te_7 \\
      \end{array}
    \right),\\
  &  A_{t,3}=\left(
      \begin{array}{c}
        \lambda_{3,0}^te_0\\
       \lambda_{3,0}^t e_1\\
       \lambda_{3,0}^t e_2\\
       \lambda_{3,0}^t e_3\\
        \lambda_{3,1}^te_4 \\
        \lambda_{3,1}^te_5 \\
        \lambda_{3,1}^te_6 \\
        \lambda_{3,1}^te_7 \\
      \end{array}
    \right),~A_{t,4}=\left(
      \begin{array}{c}
        \lambda_{4,0}^te_0\\
       \lambda_{4,0}^t e_1 \\
       \lambda_{4,1}^t e_2 \\
       \lambda_{4,1}^t e_3   \\
        \lambda_{4,0}^te_4 \\
        \lambda_{4,0}^te_5 \\
        \lambda_{4,1}^te_6 \\
        \lambda_{4,1}^te_7 \\
      \end{array}
    \right),~A_{t,5}=\left(
      \begin{array}{c}
        \lambda_{5,0}^te_0 \\
       \lambda_{5,1}^t e_1\\
       \lambda_{5,0}^t e_2\\
       \lambda_{5,1}^t e_3   \\
        \lambda_{5,0}^te_4\\
        \lambda_{5,1}^te_5\\
        \lambda_{5,0}^te_6\\
        \lambda_{5,1}^te_7 \\
      \end{array}
    \right), \mbox{~and~} A_{t,6+i}=c^{12t}A_{t,i} \mbox{~for~} i\in [0: 6),
\end{align*}
where
\begin{eqnarray}
\nonumber && \lambda_{0,0}=1, \lambda_{1,0}=c^4, \lambda_{2,0}=c^8, \lambda_{3,0}=c^2, \lambda_{4,0}=c^6, \lambda_{5,0}=c^{10},\\
\label{Eqn_ex_lamb}&& \lambda_{0,1}=c, \lambda_{1,1}=c^5, \lambda_{2,1}=c^9, \lambda_{3,1}=c^3, \lambda_{4,1}=c^7, \lambda_{5,1}=c^{11}
\end{eqnarray}
according to Proposition \ref{prop_C0}, with $c$ being a primitive element in $\mathbf{F}_q$ where $q>24$.

Suppose that Node $3$ fails and Node $0$ is not connected. We claim that Node $3$ can be repaired by connecting Nodes $1,2,4,5,\ldots, 11$ and downloading $\mathbf{f}_{9}$ and $(V_{0,0}+V_{0,1})\mathbf{f}_j$ (i.e., $(e_0+e_4)\mathbf{f}_{j}, (e_1+e_5)\mathbf{f}_{j}, (e_2+e_6)\mathbf{f}_{j}, (e_3+e_7)\mathbf{f}_{j}$) for $j\in [1: 12)\setminus\{3,9\}$, and choose $S_{3,t}=V_{0,0}$ for $t=0,1,2$. Then from \eqref{Eqn repairment gene}, we have
\begin{align}
\nonumber&   \begin{pmatrix}
           (e_0+e_4)(\mathbf{f}_3+\mathbf{f}_9) \\
       (e_1+e_5)(\mathbf{f}_3+\mathbf{f}_9)\\
       (e_2+e_6)(\mathbf{f}_3+\mathbf{f}_9)\\
       (e_3+e_7)(\mathbf{f}_3+\mathbf{f}_9)  \\
       (\lambda_{3,0}e_0+ \lambda_{3,1}e_4)(\mathbf{f}_3+c^9\mathbf{f}_9) \\
       (\lambda_{3,0} e_1+\lambda_{3,1}e_5)(\mathbf{f}_3+c^9\mathbf{f}_9) \\
       (\lambda_{3,0} e_2+ \lambda_{3,1}e_6)(\mathbf{f}_3+c^9\mathbf{f}_9) \\
       (\lambda_{3,0} e_3  +\lambda_{3,1}e_7)(\mathbf{f}_3+c^9\mathbf{f}_9) \\
        (\lambda_{3,0}^2e_0+ \lambda_{3,1}^2e_4)(\mathbf{f}_3+c^{18}\mathbf{f}_9) \\
       (\lambda_{3,0}^2 e_1+\lambda_{3,1}^2e_5)(\mathbf{f}_3+c^{18}\mathbf{f}_9) \\
       (\lambda_{3,0}^2 e_2+ \lambda_{3,1}^2e_6)(\mathbf{f}_3+c^{18}\mathbf{f}_9) \\
       (\lambda_{3,0}^2 e_3  +\lambda_{3,1}^2e_7)(\mathbf{f}_3+c^{18}\mathbf{f}_9) \\
       \end{pmatrix}\hspace{-1mm}+\hspace{-1mm} \begin{pmatrix}
           (e_0+e_4)(\mathbf{f}_0+\mathbf{f}_6) \\
       (e_1+e_5)(\mathbf{f}_0+\mathbf{f}_6)\\
       (e_2+e_6)(\mathbf{f}_0+\mathbf{f}_6)\\
       (e_3+e_7)(\mathbf{f}_0+\mathbf{f}_6)  \\
             \lambda_{0,0}
             (e_0+e_4)(\mathbf{f}_0+c^9\mathbf{f}_6)\\
       \lambda_{0,0} (e_1+e_5)(\mathbf{f}_0+c^9\mathbf{f}_6)\\
              \lambda_{0,0} (e_2+e_6)(\mathbf{f}_0+c^9\mathbf{f}_6) \\
       \lambda_{0,0} (e_3+e_7)(\mathbf{f}_0+c^9\mathbf{f}_6)   \\
        \lambda_{0,0}^2(e_0+e_4)(\mathbf{f}_0+c^{18}\mathbf{f}_6) \\
       \lambda_{0,0}^2 (e_1+e_5)(\mathbf{f}_0+c^{18}\mathbf{f}_6)\\
              \lambda_{0,0}^2 (e_2+e_6)(\mathbf{f}_0+c^{18}\mathbf{f}_6) \\
       \lambda_{0,0}^2 (e_3+e_7)(\mathbf{f}_0+c^{18}\mathbf{f}_6)   \\
          \end{pmatrix}\hspace{-1mm}+\hspace{-1mm}\begin{pmatrix}
           (e_0+e_4)(\mathbf{f}_1+\mathbf{f}_7) \\
      (e_1+e_5)(\mathbf{f}_1+\mathbf{f}_7)\\
       (e_2+e_6)(\mathbf{f}_1+\mathbf{f}_7)\\
       (e_3+e_7)(\mathbf{f}_1+\mathbf{f}_7)  \\
             (\lambda_{1,0}(e_0+e_4)+(\lambda_{1,0}-\lambda_{1,1})(e_2+e_6))(\mathbf{f}_1+c^9\mathbf{f}_7) \\
     (\lambda_{1,0}(e_1+e_5)+(\lambda_{1,0}-\lambda_{1,1})(e_3+e_7))(\mathbf{f}_1+c^9\mathbf{f}_7) \\
       \lambda_{1,1}(e_2+e_6)(\mathbf{f}_1+c^9\mathbf{f}_7)\\
       \lambda_{1,1}(e_3+e_7)(\mathbf{f}_1+c^9\mathbf{f}_7)\\
       (\lambda_{1,0}^2(e_0+e_4)+(\lambda_{1,0}^2-\lambda_{1,1}^2)(e_2+e_6))(\mathbf{f}_1+c^{18}\mathbf{f}_7) \\
     (\lambda_{1,0}^2(e_1+e_5)+(\lambda_{1,0}^2-\lambda_{1,1}^2)(e_3+e_7))(\mathbf{f}_1+c^{18}\mathbf{f}_7) \\
       \lambda_{1,1}^2(e_2+e_6)(\mathbf{f}_1+c^{18}\mathbf{f}_7)\\
       \lambda_{1,1}^2(e_3+e_7)(\mathbf{f}_1+c^{18}\mathbf{f}_7)
         \end{pmatrix}\\
\label{Eqn_ExC3}       =&-\hspace{-1mm}\begin{pmatrix}
           (e_0+e_4)(\mathbf{f}_2+\mathbf{f}_{8}) \\
      (e_1+e_5)(\mathbf{f}_2+\mathbf{f}_{8})\\
       (e_2+e_6)(\mathbf{f}_2+\mathbf{f}_{8})\\
       (e_3+e_7)(\mathbf{f}_2+\mathbf{f}_{8})  \\
             (\lambda_{2,0}(e_0+e_4)+(\lambda_{2,0}-\lambda_{2,1})(e_1+e_5))(\mathbf{f}_2+c^9\mathbf{f}_{8}) \\
     \lambda_{2,1}(e_1+e_5)(\mathbf{f}_2+c^9\mathbf{f}_{8})\\
       (\lambda_{2,0}(e_2+e_6)+(\lambda_{2,0}-\lambda_{2,1})(e_3+e_7))(\mathbf{f}_2+c^9\mathbf{f}_{8}) \\
       \lambda_{2,1}(e_3+e_7)(\mathbf{f}_2+c^9\mathbf{f}_{8})\\
       (\lambda_{2,0}^2(e_0+e_4)+(\lambda_{2,0}^2-\lambda_{2,1}^2)(e_1+e_5))(\mathbf{f}_2+c^{18}\mathbf{f}_{8})  \\
     \lambda_{2,1}^2(e_1+e_5)(\mathbf{f}_2+c^{18}\mathbf{f}_{8}) \\
       (\lambda_{2,0}^2(e_2+e_6)+(\lambda_{2,0}^2-\lambda_{2,1}^2)(e_3+e_7))(\mathbf{f}_2+c^{18}\mathbf{f}_{8})  \\
       \lambda_{2,1}^2(e_3+e_7)(\mathbf{f}_2+c^{18}\mathbf{f}_{8}) 
           \end{pmatrix}
       \hspace{-1mm} -\hspace{-1mm}
         \begin{pmatrix}
           (e_0+e_4)(\mathbf{f}_4+\mathbf{f}_{10}) \\
      (e_1+e_5)(\mathbf{f}_4+\mathbf{f}_{10})\\
       (e_2+e_6)(\mathbf{f}_4+\mathbf{f}_{10})\\
       (e_3+e_7)(\mathbf{f}_4+\mathbf{f}_{10})  \\
        \lambda_{4,0}(e_0+e_4)(\mathbf{f}_4+c^9\mathbf{f}_{10}) \\
      \lambda_{4,0}(e_1+e_5)(\mathbf{f}_4+c^9\mathbf{f}_{10})\\
       \lambda_{4,1}(e_2+e_6)(\mathbf{f}_4+c^9\mathbf{f}_{10})\\
       \lambda_{4,1}(e_3+e_7)(\mathbf{f}_4+c^9\mathbf{f}_{10})  \\
       \lambda_{4,0}^2(e_0+e_4)(\mathbf{f}_4+c^{18}\mathbf{f}_{10}) \\
      \lambda_{4,0}^2(e_1+e_5)(\mathbf{f}_4+c^{18}\mathbf{f}_{10})\\
       \lambda_{4,1}^2(e_2+e_6)(\mathbf{f}_4+c^{18}\mathbf{f}_{10})\\
       \lambda_{4,1}^2(e_3+e_7)(\mathbf{f}_4+c^{18}\mathbf{f}_{10}) 
          \end{pmatrix}\hspace{-1mm}-\hspace{-1mm}\begin{pmatrix}
           (e_0+e_4)(\mathbf{f}_5+\mathbf{f}_{11}) \\
      (e_1+e_5)(\mathbf{f}_5+\mathbf{f}_{11})\\
       (e_2+e_6)(\mathbf{f}_5+\mathbf{f}_{11})\\
       (e_3+e_7)(\mathbf{f}_5+\mathbf{f}_{11})  \\
        \lambda_{5,0}(e_0+e_4)(\mathbf{f}_5+c^9\mathbf{f}_{11}) \\
      \lambda_{5,1}(e_1+e_5)(\mathbf{f}_5+c^9\mathbf{f}_{11})\\
       \lambda_{5,0}(e_2+e_6)(\mathbf{f}_5+c^9\mathbf{f}_{11})\\
       \lambda_{5,1}(e_3+e_7)(\mathbf{f}_5+c^9\mathbf{f}_{11})  \\
       \lambda_{5,0}^2(e_0+e_4)(\mathbf{f}_5+c^{18}\mathbf{f}_{11}) \\
      \lambda_{5,1}^2(e_1+e_5)(\mathbf{f}_5+c^{18}\mathbf{f}_{11})\\
       \lambda_{5,0}^2(e_2+e_6)(\mathbf{f}_5+c^{18}\mathbf{f}_{11})\\
       \lambda_{5,1}^2(e_3+e_7) (\mathbf{f}_5+c^{18}\mathbf{f}_{11}) 
      \end{pmatrix}\hspace{-1mm},
\end{align}
which can be reformulated as
\begin{equation}\label{Eqn_ex_coef_M}
\underbrace{\begin{pmatrix}
I_4 & I_4 & I_4\\
\lambda_{3,0}I_4 & \lambda_{3,1}I_4 &\lambda_{0,0}I_4 \\ 
\lambda_{3,0}^2I_4 & \lambda_{3,1}^2I_4 &\lambda_{0,0}^2I_4 \\ 
\end{pmatrix}}_{M}\begin{pmatrix}
V_{0,0}\mathbf{f}_3\\
V_{0,1}\mathbf{f}_3\\
V_{0,0}\mathbf{f}_0
\end{pmatrix}=-\begin{pmatrix}
I_4 & I_4 \\
\lambda_{3,0}c^9I_4 & \lambda_{3,1}c^9I_4  \\ 
\lambda_{3,0}^2c^{18}I_4 & \lambda_{3,1}^2c^{18}I_4 \\ 
\end{pmatrix}\begin{pmatrix}
V_{0,0}\mathbf{f}_9\\
V_{0,1}\mathbf{f}_9
\end{pmatrix}+\kappa_*,
\end{equation}
where $\kappa_*$ denotes
the data related to $\mathbf{f}_j$, $j\in [1: 12)\setminus\{3,9\}$ in \eqref{Eqn_ExC3} and can be determined from the downloaded data.


Since $\mathbf{f}_9$ has been downloaded, it is easy to see, from Lemma \ref{lem the important for the proof of this section} and \eqref{Eqn_ex_lamb},  that the matrix $M$ in \eqref{Eqn_ex_coef_M} is non-singular. Therefore, we can solve \eqref{Eqn_ex_coef_M}, and subsequently regenerate $V_{0,0}\mathbf{f}_3$ and $V_{0,1}\mathbf{f}_3$ (i.e., $\mathbf{f}_3$).
\end{Example}

In general, we have the following result.

\begin{Theorem}\label{Thm_C3}
The  code $\mathcal{C}_1$ in Construction \ref{Con C3} is an $(n=s\overline{n}, k)$ MDS array code over $\mathbf{F}_q$ with the $(d=n-\overline{n}+\overline{d}, d_c=s-1)$ repair property and $(1+\frac{d_c(d-k)}{d})$-optimal repair bandwidth   
if the following conditions can be satisfied.
\begin{itemize}
\item [i)] $x_i\lambda_{\bar{i},u}\ne x_j\lambda_{\bar{j},u'}$ for $u, u'\in [0: w)$ and $i,j\in [0: n)$ with $i\not\equiv j\bmod m$,
\item [ii)] $x_i\lambda_{\bar{i},u}\ne x_j\lambda_{\bar{j},u}$ for $u\in [0: w)$ and $i,j\in [0: n)$ with $i\ne j$ and $i\equiv j\bmod m$,
\item [iii)] $\lambda_{\bar{i},u}\ne \lambda_{\bar{i},u'}$ for $u,u' \in [0:w)$ with $u\ne u'$ and $\bar{i}\in [0: \bar{n})$,
\item [iv)] $x_i\lambda_{\bar{i},0}\ne x_j\lambda_{\bar{j},u}$ for $u\in [1: w)$ and $i,j\in [0: n)$ with $i\equiv j\bmod m$ and $i\not\equiv j\bmod \bar{n}$,
\end{itemize}
where $\bar{i}=i\%\bar{n}$, $\bar{j}=j\%\bar{n}$, and $\bar{n}=2m$.
\end{Theorem}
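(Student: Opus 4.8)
The plan is to prove the two assertions separately: the MDS property I would establish directly (rather than through the implicit Theorem \ref{Thm general MDS}), and the repair property I would obtain from Theorem \ref{Thm general repair} by checking the single remaining requirement R3.

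For the MDS part, the key observation is that by \eqref{Eqn general coding matrix} and \eqref{Eqn C3 x assi} one has $A_{t,j}=x_j^t\overline{A}_{t,\overline{j}}$ with $\overline{j}=j\%\overline{n}$, while by \eqref{Eqn the value of A(a,b)} each $\overline{A}_{t,\overline{j}}$ is upper triangular (its only off-diagonal entries lie in columns $a(i,u)>a$) with diagonal entry $\lambda_{\overline{j},a_{\overline{j}\%m}}^t$; hence every $A_{t,j}$ is upper triangular with $A_{t,j}[a,a]=(x_j\lambda_{\overline{j},a_{\overline{j}\%m}})^t=(A_{1,j}[a,a])^t$. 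I would then apply Lemma \ref{lem the important for the proof of this section} to an arbitrary $r\times r$ block submatrix $(A_{t,j_l})_{t,l\in[0:r)}$: condition i) of that lemma is exactly the power identity just noted, and condition ii) asks that $x_{j_l}\lambda_{\overline{j_l},a_{\overline{j_l}\%m}}\ne x_{j_{l'}}\lambda_{\overline{j_{l'}},a_{\overline{j_{l'}}\%m}}$ for $l\ne l'$ and all $a$. Splitting on whether $j_l\equiv j_{l'}\bmod m$ — in which case the two relevant digits of $a$ coincide and condition ii) of the theorem applies — or not — in which case the two digits range over all of $[0:w)^2$ and condition i) applies — finishes the MDS claim.

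For the repair part, $\mathcal{C}_0$ is an MSR code by Proposition \ref{prop_C0}, so R1 and R2 hold for the base code, and by Theorem \ref{Thm general repair} it suffices to verify R3 for $\mathcal{C}_1$. I would fix a failed node $i$, write $i=u\overline{n}+\overline{i}$ and $i'=\overline{i}\%m$, download all of $\mathbf{f}_j$ from the $d_c=s-1$ compulsory helpers ($j\equiv i\bmod\overline{n}$, $j\ne i$) and $\overline{R}_{\overline{i},\overline{j}}\mathbf{f}_j$ from the other $d-d_c$ (arbitrarily chosen) helpers, take $S_{i,t}=\overline{S}_{\overline{i},t}$, and use Lemma \ref{lem the form between S and A} to rewrite the $t$-th block of \eqref{Eqn repairment gene}, after moving all data determined by the downloads to the right-hand side, as
\[
x_i^t\overline{S}_{\overline{i},t}\overline{A}_{t,\overline{i}}\mathbf{f}_i+\sum_{l\in L_i}x_l^t\overline{B}_{t,\overline{l},\overline{i}}(\overline{R}_{\overline{i},\overline{l}}\mathbf{f}_l)=\kappa_t,\quad t\in[0:r),
\]
with $\kappa_t$ known. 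Introducing the unknowns $V_{\overline{i},u}\mathbf{f}_i$, $u\in[0:w)$ (which jointly recover $\mathbf{f}_i$ since the $V_{\overline{i},u}$ partition the basis), expanding $\overline{S}_{\overline{i},t}\overline{A}_{t,\overline{i}}$ by Lemma \ref{lem the form between S and A} i), and performing a harmless invertible change of variables among these $w$ unknowns (needed only when $\overline{i}<m$, where the coefficients are $\lambda_{\overline{i},0}^t-\lambda_{\overline{i},u}^t$), the coefficient of the $\mathbf{f}_i$-part in the $t$-th block becomes $(x_i\lambda_{\overline{i},u})^tI$, $u\in[0:w)$; since $|L_i|=\overline{n}-1-\overline{d}=r-w$ this is a square $r\times r$ block system. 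Each block $x_l^t\overline{B}_{t,\overline{l},\overline{i}}$ is upper triangular with a $t$-th-power diagonal by \eqref{Eqn the first eq for proving Lem 10} and \eqref{Eqn the second eq for proving Lem 10}, so condition i) of Lemma \ref{lem the important for the proof of this section} again holds; once condition ii) is verified, the system has a unique solution, each $\overline{R}_{\overline{i},\overline{l}}\mathbf{f}_l=R_{i,l}\mathbf{f}_l$ is recovered, R3 holds, and Theorem \ref{Thm general repair} yields the stated $(d,d_c)$-repair property and $(1+\frac{d_c(d-k)}{d})$-optimal repair bandwidth.

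The hard part will be verifying condition ii) of Lemma \ref{lem the important for the proof of this section} for this repair system, i.e. that for each fixed $a\in[0:N/w)$ the base values $x_i\lambda_{\overline{i},u}$ ($u\in[0:w)$) and $x_l\lambda_{\overline{l},v_l(a)}$ ($l\in L_i$) are pairwise distinct, where by \eqref{Eqn the second eq for proving Lem 10} $v_l(a)=0$ when $\overline{l}\%m=i'$ and is otherwise a digit of $a$. This requires a careful case analysis tracking the position of $\overline{l}\%m$ relative to $i'$: distinctness within the $\mathbf{f}_i$-part is condition iii); an $\mathbf{f}_i$-part value against an $l$-value is covered by condition i) when $l\not\equiv i\bmod m$, and when $l\equiv i\bmod m$ (so $l\in L_i$ forces $l\not\equiv i\bmod\overline{n}$ and $v_l(a)=0$) by condition ii) for $u=0$ and condition iv) for $u\ge 1$; and two distinct $l,l'\in L_i$ are separated by condition i) when $l\not\equiv l'\bmod m$ and by condition ii) when $l\equiv l'\bmod m$ (since then $v_l(a)=v_{l'}(a)$). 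The concrete reduced system and its non-singularity are exhibited in Example \ref{ex_C3}.
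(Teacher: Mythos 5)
Your proposal is correct and matches the paper's proof in all essentials: the MDS part applies Lemma~\ref{lem the important for the proof of this section} to the block-upper-triangular parity-check matrix exactly as the paper does, and the repair part reduces R3 to the non-singularity of the same $r\times r$ block system via Lemma~\ref{lem the form between S and A} and the properties of $\bar{B}_{t,\bar{j},\bar{i}}$. The one place where you add detail that the paper omits is the case $\bar i\in[0:m)$, which the paper dismisses as ``similar''; your invertible change of variables (regrouping the combination $\lambda_{\bar i,0}^tV_{\bar i,0}+(\lambda_{\bar i,0}^t-\lambda_{\bar i,u}^t)V_{\bar i,u}$ so the coefficients become pure powers $(x_i\lambda_{\bar i,u})^t$) is exactly what makes that ``similarity'' work, and your case-by-case accounting of which of conditions i)--iv) is used for which pair of diagonal entries is a cleaner bookkeeping than the paper's, which invokes Lemma~\ref{lem the second important lemma for Thm 9}~iii) for the $B$--$B$ pairs and then lists only i), iii), iv) for the rest.
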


Before proving Theorem \ref{Thm_C3}, we will first provide a lemma. Analyzing the repair property requires that \eqref{Eqn repairment gene} is solvable based on the downloaded data. Thus, it is helpful to characterize the product of $S_{i,t}$ and $A_{t,j}$ beforehand.
By \eqref{Eqn general coding matrix}, \eqref{Eqn general R}, \eqref{Eqn general S}, \eqref{Eqn C3 x assi}, and ii) of Lemma \ref{lem the form between S and A}, we have
\begin{equation*}
S_{i,t}A_{t,j}=\bar{S}_{\bar{i},t}x_{j}^t\bar{A}_{t,\bar{j}}=x_{j}^t \bar{B}_{t,\bar{j},\bar{i}}\bar{R}_{\bar{i},\bar{j}}=x_{j}^t \bar{B}_{t,\bar{j},\bar{i}}R_{i,j}=B_{t,j,i}R_{i,j},
\end{equation*}
where $\bar{i}\neq \bar{j}$ and $B_{t,j,i}$ is
 an $N/w\times N/w$ matrix define as
\begin{equation}\label{Eqn matrix B for code C3}
   B_{t,j,i} = x_j^t\bar{B}_{t,\bar{j},\bar{i}},
\end{equation}
with $\bar{B}_{t,\bar{j},\bar{i}}$ being defined as in \eqref{Eqn the bmatrix bar{B}}. The matrix $B_{t,j,i}$ has the following properties.

\begin{Lemma}\label{lem the second important lemma for Thm 9}
For any given $i\in [0: n)$, $a,b\in [0: N/w)$ and $t\in [0: r)$, the matrix $B_{t,j,i}$ in \eqref{Eqn matrix B for code C3} satisfies
\begin{itemize}
	\item [i)] When $b<a$, $B_{t,j,i}[a,b]=0$ for $j\in [0,n)$ with $j\not\equiv i\bmod \bar{n}$, i.e., $B_{t,j,i}$ is upper triangular;
	\item [ii)] $B_{t,j,i}[a,a]=(B_{1,j,i}[a,a])^t$ for $j\in [0,n)$ with $j\not\equiv i\bmod \bar{n}$; 
	\item [iii)] $B_{1,j,i}[a,a]\ne B_{1,l,i}[a,a]$ for $0\le j<l < n$ with $j,l\not\equiv i\bmod \bar{n}$ if conditions i) and ii) of Theorem \ref{Thm_C3} hold.
\end{itemize}
\end{Lemma}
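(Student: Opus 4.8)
The plan is to unwind the definition $B_{t,j,i}=x_j^t\bar B_{t,\bar j,\bar i}$ and reduce everything to the explicit structure of $\bar B_{t,\bar j,\bar i}$ recorded in Lemma \ref{lem the form between S and A}, parts ii) and iii). Write $i=u\bar n+\bar i$ and $j=v\bar n+\bar j$ with $\bar i=i\%\bar n$, $\bar j=j\%\bar n$; the hypothesis $j\not\equiv i\bmod\bar n$ means $\bar i\ne\bar j$, so \eqref{Eqn the bmatrix bar{B}} applies in one of its first four cases (the fifth case, $j\equiv i\bmod m$, is excluded here because $\bar i\ne\bar j$ together with $i\equiv j\bmod m$ would force $\bar i\equiv\bar j\bmod m$ with $\bar i,\bar j\in[0:\bar n)=[0:2m)$, i.e. $\{\bar i,\bar j\}=\{i',i'+m\}$ — wait, that case \emph{is} possible, so I must be careful and instead directly invoke \eqref{Eqn the first eq for proving Lem 10}–\eqref{Eqn the second eq for proving Lem 10}, which already cover \emph{all} of cases i)--v) of \eqref{Eqn the bmatrix bar{B}} uniformly).

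For part i): multiplying an upper-triangular matrix by the nonzero scalar $x_j^t$ leaves it upper-triangular, so $B_{t,j,i}[a,b]=x_j^t\,\bar B_{t,\bar j,\bar i}[a,b]=0$ for $b<a$ by \eqref{Eqn the first eq for proving Lem 10}. For part ii): by \eqref{Eqn the second eq for proving Lem 10} the diagonal entry of $\bar B_{t,\bar j,\bar i}$ has the form $\lambda_{\bar j,\ell}^t$ for a row-index-dependent but $t$-independent choice of digit $\ell\in\{a_{j'},0,a_{j'-1}\}$ determined by comparing $j'$ and $i'$ (where $j'=\bar j\%m$, $i'=\bar i\%m$); hence $B_{t,j,i}[a,a]=x_j^t\lambda_{\bar j,\ell}^t=(x_j\lambda_{\bar j,\ell})^t=(B_{1,j,i}[a,a])^t$, since the base $x_j\lambda_{\bar j,\ell}=B_{1,j,i}[a,a]$ does not depend on $t$. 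This is the routine part.

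The main work is part iii). Fix a row index $a\in[0:N/w)$ and $0\le j<l<n$ with $j,l\not\equiv i\bmod\bar n$. By part ii), $B_{1,j,i}[a,a]=x_j\lambda_{\bar j,u}$ and $B_{1,l,i}[a,a]=x_l\lambda_{\bar l,u'}$, where $u,u'\in[0:w)$ are the digit-indices selected by \eqref{Eqn the second eq for proving Lem 10} applied to $j$ and $l$ respectively (each is either a specific digit of $a$ or $0$). We must show $x_j\lambda_{\bar j,u}\ne x_l\lambda_{\bar l,u'}$. Split on whether $j\equiv l\bmod m$: if $j\not\equiv l\bmod m$ then $\bar j\not\equiv\bar l\bmod m$ as well, and condition i) of Theorem \ref{Thm_C3} gives $x_j\lambda_{\bar j,u}\ne x_l\lambda_{\bar l,u'}$ directly (it covers all pairs of digit-indices). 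If $j\equiv l\bmod m$, then $\bar j\equiv\bar l\bmod m$; I need to argue that in this case the two selected digits coincide, $u=u'$, so that condition ii) of Theorem \ref{Thm_C3} applies. The point is that $j\equiv l\bmod m$ forces $j'=l'$ (both equal $\bar j\%m=\bar l\%m$), so the three-way branch ``$j'<i'$ / $j'=i'$ / $j'>i'$'' in \eqref{Eqn the second eq for proving Lem 10} lands in the \emph{same} branch for $j$ and for $l$, and the resulting digit ($a_{j'}$, or $0$, or $a_{j'-1}$) is literally the same — hence $u=u'$. Then condition ii) of Theorem \ref{Thm_C3} (with $i\equiv j$ there replaced by $j\equiv l\bmod m$, $j\ne l$, which holds since $j<l$) yields $x_j\lambda_{\bar j,u}\ne x_l\lambda_{\bar l,u}$.

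The step I expect to be the main obstacle is precisely the bookkeeping in part iii): correctly reading off from \eqref{Eqn the second eq for proving Lem 10} which digit index $u$ is selected as a function of $(a,\bar j,\bar i)$, verifying that $j\equiv l\bmod m$ indeed pins $j$ and $l$ to the same branch and the same digit, and then matching the resulting inequality to the exact form of conditions i) and ii) of Theorem \ref{Thm_C3} — in particular making sure condition i) is stated for \emph{all} pairs $u,u'$ (it is) and condition ii) only for the equal-digit case (it is). Once the branch analysis is set up, each sub-case is a one-line appeal to a hypothesis; there is no heavy computation, only careful case enumeration.
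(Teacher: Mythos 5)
Your proof is correct and follows essentially the same route as the paper: parts i) and ii) are the same one-liners, and for part iii) you split on $j\equiv l\bmod m$ versus $j\not\equiv l\bmod m$ and invoke conditions ii) and i) of Theorem~\ref{Thm_C3} respectively, exactly as the paper does (its Case~1 is your first alternative, its Cases~2 and~3 are your second). The only difference is that the paper spells out, in Cases~2 and~3, which specific digits $u,u'$ arise from the three-way branch of \eqref{Eqn the second eq for proving Lem 10}, whereas you correctly observe this enumeration is unnecessary because condition~i) of Theorem~\ref{Thm_C3} is stated for all pairs $(u,u')$ — a small but legitimate streamlining.
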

\begin{proof}
The proof is given in Appendix \ref{sec:pfC3}.
\end{proof}

Now, we are ready to prove Theorem \ref{Thm_C3}.

\textbf{Proof of  Theorem \ref{Thm_C3}:}
For $i,j\in [0:n)$, we rewrite them as
$i=v_0\bar{n}+\bar{i}$ and $j=v_1\bar{n}+\bar{j}$ for $v_0,v_1\in [0: s)$ and $\bar{i},\bar{j}\in [0: \bar{n})$. 
\begin{itemize}
\item \textbf{MDS property:}
By  \eqref{Eqn general coding matrix}, \eqref{Eqn the value of A(a,b)} and  \eqref{Eqn C3 x assi}, we have $A_{t,i}[a,b]=0$ for $0\le b<a$ and  
    \begin{eqnarray*}
    	A_{t,i}[a,a]=x_i^t\bar{A}_{t,\bar{i}}[a,a]=x_i^t(\bar{A}_{1,\bar{i}}[a,a])^t=(x_i\bar{A}_{1,\bar{i}}[a,a])^t=(A_{1,i}[a,a])^t.
    \end{eqnarray*}

In addition, for $i,j \in [0: n)$ with $j\ne i$, we further rewrite $\bar{i}$ and $\bar{j}$ as $\bar{i}=g_0m+i'$ and $\bar{j}=g_1m+j'$, where $g_0,g_1\in \{0,1\}$ and $i',j'\in [0: m)$. Then for $a\in [0: N)$, we have
    \begin{eqnarray*}
    	A_{1,i}[a,a]-A_{1,j}[a,a]=x_i\bar{A}_{1,\bar{i}}[a,a]-x_j\bar{A}_{1,\bar{j}}[a,a]=x_i\lambda_{\bar{i},a_{i'}}-x_j\lambda_{\bar{j},a_{j'}} \ne 0,
    \end{eqnarray*}
    where the first equality follows from \eqref{Eqn general coding matrix} and \eqref{Eqn C3 x assi}, the second equality follows from \eqref{Eqn the value of A(a,b)}, and the inequality follows from i) of Theorem \ref{Thm_C3} if $i\not\equiv j\bmod m$ (i.e., $i'\ne j'$) and ii) of Theorem \ref{Thm_C3} if $i\equiv j\bmod m$ (i.e., $i'=j'$).
By applying Lemma \ref{lem the important for the proof of this section}, we can derive that the new code $\mathcal{C}_1$ has the MDS property if conditions i) and ii) are satisfied.

\item \textbf{Repair property:} Suppose that node $i$ fails, we connect all nodes $j\in [0: n)\setminus\{i\}$ with $j\equiv i \bmod \bar{n}$ and any other $d-d_c$ surviving nodes during the repair process, and download $R_{i,j}\mathbf{f}_{j}$ from node $j$ that is connected.  Let $H_i$ be the set of indices of the $d$ connected nodes, and let $L_i=[0: n)\setminus (i\cup H_i)$ be the set of the indices of the $n-1-d=r-w$ nodes that are not connected, then $L_i\subset\{j|0\le j<n,  j\not\equiv i \bmod \bar{n}\}$. 

By  \eqref{Eqn general R}, we have $R_{i,j}=I$ if $j \equiv i \bmod\bar{n}$ and $R_{i,j}=\overline{R}_{\bar{i},\bar{j}}$ otherwise. 
For $t\in [0: r)$, by \eqref{Eqn general coding matrix}, \eqref{Eqn general S}, \eqref{Eqn C3 x assi}  and Lemma \ref{lem the form between S and A}, we have
\begin{eqnarray}\label{Eqn the first eq for proving Thm 9}
\nonumber   S_{i,t}A_{t,j}&=&x_j^t\bar{S}_{\bar{i},t}\bar{A}_{t,\bar{j}}\\
   &=&\left\{\begin{array}{ll}
      x_j^t(\lambda_{\bar{i},0}^t V_{\bar{i},0}+(\lambda_{\bar{i},0}^t-\lambda_{\bar{i},1}^t )V_{\bar{i},1}+\cdots+(\lambda_{\bar{i},0}^t-\lambda_{\bar{i},w-1}^t) V_{\bar{i},w-1}), & \textrm{if~}\bar{j}=\bar{i}\in [0: m),\\
  x_j^t(\lambda_{\bar{i},0}^t V_{\bar{i},0}+\lambda_{\bar{i},1}^t V_{\bar{i},1}+\cdots+\lambda_{\bar{i},w-1}^t V_{\bar{i},w-1}), & \textrm{if~}\bar{j}=\bar{i}\in [m: \bar{n}),\\
      x_j^t\bar{B}_{t,\bar{j},\bar{i}}\bar{R}_{\bar{i},\bar{j}}=B_{t,j,i}R_{i,j}, & \textrm{if~}\bar{j} \ne \bar{i},\\
  \end{array}\right.
\end{eqnarray}
where $B_{t,j,i}$ is defined in \eqref{Eqn matrix B for code C3}. 

Assume that $\bar{i}\in [m: \bar{n})$, let $L_i=\{l_0,l_1,\ldots,l_{r-w-1}\}$, applying \eqref{Eqn the first eq for proving Thm 9} into \eqref{Eqn repairment gene}, we obtain
\begin{eqnarray}\label{Eqn proving R3 of C3}
  \underbrace{\left(\begin{array}{cccccc}
    I_{N/w} & \cdots & I_{N/w} & B_{0,l_0,i} & \cdots & B_{0,l_{r-w-1},i}\\
    x_i\lambda_{\bar{i},0}I_{N/w} & \cdots & x_i\lambda_{\bar{i},w-1}I_{N/w} & B_{1,l_0,i} & \cdots & B_{1,l_{r-w-1},i}\\
    \vdots & \vdots & \vdots & \vdots & \vdots & \vdots \\
    (x_i\lambda_{\bar{i},0})^{r-1}I_{N/w} & \cdots & (x_i\lambda_{\bar{i},w-1})^{r-1}I_{N/w} & B_{r-1,l_0,i} & \cdots & B_{r-1,l_{r-w-1},i}\\
\end{array}\right)}_{\mathrm{block~matrix~B'}}\left(\begin{array}{c}
   V_{\bar{i},0}\mathbf{f}_i\\
   \vdots\\
   V_{\bar{i},w-1}\mathbf{f}_i\\
   R_{i,l_0}\mathbf{f}_{l_0}\\
   \vdots\\
   R_{i,l_{r-w-1}}\mathbf{f}_{l_{r-w-1}}\\
\end{array}\right)=\kappa^*,
\end{eqnarray}
where
\begin{equation*}
  \kappa^*= -\sum\limits_{j\in H_i, j\equiv i \bmod \bar{n}}
         \left(\begin{array}{c}
             V_{\bar{i},0}+\cdots+V_{\bar{i},w-1}\\
             x_j(\lambda_{\bar{i},0} V_{\bar{i},0}+\cdots+\lambda_{\bar{i},w} V_{\bar{i},w-1})\\
             \vdots\\
             x_j^{r-1}(\lambda_{\bar{i},0}^{r-1} V_{\bar{i},0}+\cdots+\lambda_{\bar{i},w}^{r-1} V_{\bar{i},w-1})\\
          \end{array}\right)\mathbf{f}_j-\sum\limits_{j\in H_i, j\not\equiv i \bmod \bar{n}}
         \left(\begin{array}{c}
            B_{0,j,i}\\
             B_{1,j,i}\\
             \vdots\\
             B_{r-1,j,i}\\
          \end{array}\right)R_{i,j}\mathbf{f}_j,
\end{equation*}
which is a column vector of length $rN/w$ and can be determined from the downloaded data.

By Lemma \ref{lem the important for the proof of this section} and i) of Lemma \ref{lem the second important lemma for Thm 9}, we have that
the block matrix $B'$ in  \eqref{Eqn proving R3 of C3} is  non-singular if 
\begin{equation*}
x_i\lambda_{\bar{i},0}, x_i\lambda_{\bar{i},1}, \ldots,  x_i\lambda_{\bar{i},w-1}, B_{1,l_0,i}[a,a], B_{1,l_1,i}[a,a], \ldots, B_{1,l_{r-w-1},i}[a, a]
\end{equation*}
are  pairwise distinct for any $L_i\subset \{j|0\le j<n,  j\not\equiv i \bmod \bar{n}\}$ and $a\in [0: N)$, i.e.,
\begin{equation*}
x_i\lambda_{\bar{i},0}, x_i\lambda_{\bar{i},1}, \ldots,  x_i\lambda_{\bar{i},w-1}, B_{1,j,i}[a, a], j\in [0: n)\setminus\{i\}, j \not\equiv i\bmod \bar{n}
\end{equation*}
are pairwise distinct for any $i\in [0: n)$ and $a\in [0: N)$ since $L_i$ is an arbitrary $(r-w)$-subset of $$\{j|0\le j<n,  j\not\equiv i \bmod \bar{n}\},$$ which can be satisfied if i), iii), and iv) hold according to \eqref{Eqn the second eq for proving Lem 10}. 
Therefore, if i), iii), and iv) hold, then the block matrix $B'$ in  \eqref{Eqn proving R3 of C3} is non-singular, thus we can solve $V_{\bar{i},0}\mathbf{f}_i,\ldots,V_{\bar{i},w-1}\mathbf{f}_i$ (i.e., $\mathbf{f}_i$) from \eqref{Eqn proving R3 of C3}.

The proof for the case $\bar{i}\in [0: m)$ is similar. Thus we omit it.  
\end{itemize}
This completes the proof of Theorem \ref{Thm_C3}.

\begin{Theorem}\label{Thm_C3_field}
The requirements in items i) - iv) of Theorem \ref{Thm_C3} can be fulfilled by setting
\begin{equation*}
x_{i}=\left\{\begin{array}{ll}
c^{\lfloor i/\bar{n} \rfloor m(w+2)},& {\rm if~} w=2, \\
c^{\lfloor i/\bar{n} \rfloor m(w+1)},& {\rm if~} w\in [3:r). 
\end{array}\right.
\end{equation*}
for $i\in [0:n)$, where $c$ is a primitive element of 
$\mathbf{F}_q$ with 
\begin{equation*}
q>\left\{\begin{array}{ll}
sm(w+2),& {\rm if~} w=2, \\
sm(w+1),& {\rm if~} w\in [3:r). 
\end{array}\right.
\end{equation*}
\end{Theorem}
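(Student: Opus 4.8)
The plan is to reduce all four conditions of Theorem~\ref{Thm_C3} to elementary inequalities between integer exponents of the primitive element $c$, and then dispatch these by a short case analysis built on two disjointness facts. Write $M:=m(w+2)$ when $w=2$ and $M:=m(w+1)$ when $w\in[3:r)$, so the field requirement is $q>sM$, i.e.\ $q-1\ge sM$; note this is stronger than the bound $q>M$ of Proposition~\ref{prop_C0}, so the base code $\mathcal{C}_0$ is well defined over $\mathbf{F}_q$. First I would record, straight from \eqref{Eqn the coefficient lambda}, the exponents $\ell(\bar i,u)\in[0:M)$ with $\lambda_{\bar i,u}=c^{\ell(\bar i,u)}$; then, using \eqref{Eqn C3 x assi} together with the chosen $x_i=c^{\lfloor i/\bar n\rfloor M}$, one has for every $i\in[0:n)$ and $u\in[0:w)$ that $x_i\lambda_{i\%\bar n,u}=c^{\phi(i,u)}$ with $\phi(i,u):=\lfloor i/\bar n\rfloor M+\ell(i\%\bar n,u)\in[\lfloor i/\bar n\rfloor M,(\lfloor i/\bar n\rfloor+1)M)\subseteq[0:sM)$. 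Since $q-1\ge sM$, distinct integers in $[0:sM)$ give distinct powers of $c$, so each inequality appearing in items i)--iv) of Theorem~\ref{Thm_C3} is equivalent to the corresponding inequality between these exponents.

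Next I would isolate the block structure of the base-code exponents. For $i'\in[0:m)$ set $B_{i'}:=[(w+2)i',(w+2)(i'+1))$ when $w=2$ and $B_{i'}:=[(w+1)i',(w+1)(i'+1))$ when $w\ge3$. A direct reading of \eqref{Eqn the coefficient lambda} gives the explicit lists $\ell(i',u)=(w+2)i'+u$ and $\ell(m+i',u)=(w+2)i'+2+u$ when $w=2$, and $\ell(i',u)=(w+1)i'+u$ for $u\in[0:w)$, $\ell(m+i',0)=(w+1)i'+w$, $\ell(m+i',u)=(w+1)i'+(u\%(w-1))+1$ for $u\in[1:w)$ when $w\ge3$; in all cases both $\ell(i',u)$ and $\ell(m+i',u)$ lie in $B_{i'}$, and $\{B_{i'}:i'\in[0:m)\}$ partitions $[0:M)$. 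Two facts then drop out: (a) if $\lfloor i/\bar n\rfloor\ne\lfloor j/\bar n\rfloor$ then $\phi(i,u)$ and $\phi(j,u')$ lie in disjoint length-$M$ intervals, hence differ; (b) if $\lfloor i/\bar n\rfloor=\lfloor j/\bar n\rfloor$ then $\phi(i,u)-\phi(j,u')=\ell(i\%\bar n,u)-\ell(j\%\bar n,u')$, and since $\bar n=2m$ forces $i\bmod m=(i\%\bar n)\bmod m$, these two exponents lie in $B_{i'}$ and $B_{j'}$ with $i':=i\bmod m$, $j':=j\bmod m$, which are disjoint when $i'\ne j'$. Fact (a) plus the $i'\ne j'$ half of (b) already settles item i).

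The remaining items all have hypothesis $i\equiv j\bmod m$, so by (a) it suffices to treat $\lfloor i/\bar n\rfloor=\lfloor j/\bar n\rfloor$ and do a within-block check. For item ii) with $i\ne j$ one has $\{i\%\bar n,j\%\bar n\}=\{i',m+i'\}$, and the lists give $\ell(i',u)-\ell(m+i',u)$ equal to $-2$ when $w=2$, to $-w$ when $w\ge3,u=0$, and to $u-1-(u\%(w-1))$ when $w\ge3,u\ge1$, which is $-1$ for $u\in[1:w-1)$ and $w-2\ne0$ for $u=w-1$ (using $w\ge3$); in all cases nonzero. Item iii) asserts $u\mapsto\ell(\bar i,u)$ is injective on $[0:w)$ for each fixed $\bar i$, which follows from the lists (for $\bar i=m+i'$ and $w\ge3$ this uses that $u\mapsto u\%(w-1)$ is a bijection of $\{1,\dots,w-1\}$ onto $\{0,\dots,w-2\}$), and is equivalently the MDS-type property of $\mathcal{C}_0$ from Proposition~\ref{prop_C0}. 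For item iv), again $\{i\%\bar n,j\%\bar n\}=\{i',m+i'\}$, and one checks $\ell(i',0)\ne\ell(m+i',u)$ and $\ell(m+i',0)\ne\ell(i',u)$ for $u\in[1:w)$: when $w\ge3$, $\ell(i',0)=(w+1)i'=\min B_{i'}$ while $\ell(m+i',u)\ge(w+1)i'+1$, and $\ell(m+i',0)=(w+1)i'+w=\max B_{i'}$ while $\ell(i',u)\le(w+1)i'+w-1$; the $w=2$ case is the pair of comparisons $4i'\ne4i'+3$ and $4i'+2\ne4i'+1$. With items i)--iv) verified, Theorem~\ref{Thm_C3} gives that $\mathcal{C}_1$ is an $(n,k)$ MDS array code over $\mathbf{F}_q$ with the $(d,d_c)$-repair property and $(1+\frac{d_c(d-k)}{d})$-optimal repair bandwidth.

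I expect the main obstacle to be purely organizational: extracting the precise block structure from the piecewise definition \eqref{Eqn the coefficient lambda}---in particular tracking the $u\%(w-1)$ shift that makes $\ell(m+i',\cdot)$ non-monotone---and then carrying the $w=2$ and $w\in[3:r)$ cases through all four within-block checks without sign errors. No step is conceptually deep; the content is the bookkeeping that the base code's exponents occupy one length-$M$ block per residue class mod $m$, and that scaling by $x_i=c^{\lfloor i/\bar n\rfloor M}$ simply stacks $s$ such blocks disjointly inside $[0:sM)\subseteq[0:q-1)$.
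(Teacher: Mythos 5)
Your proof is correct and follows essentially the same route as the paper's: both verify items i)--iv) of Theorem~\ref{Thm_C3} by elementary comparison of integer exponents of the primitive element $c$, exploiting that the choice $x_i=c^{\lfloor i/\bar n\rfloor M}$ stacks the $\lambda$-exponents into $s$ disjoint length-$M$ intervals inside $[0:q-1)$. The paper carries out the same arithmetic more directly (computing each difference as $c^a(1-c^b)$ and bounding $|b|$ case by case on $(g_0,g_1,v_0,v_1)$), whereas your disjoint-block framing packages the $v_0\ne v_1$ cases into a single observation and isolates the within-block checks cleanly; this is a tidier organization of the identical argument, and your explicit within-block lists correctly recover the paper's computations for both $w=2$ and $w\ge 3$.
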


\begin{proof}
To save space, we only verify the case $w\in [3:r)$ while the proof for the case $w=2$ is similar. 
For $i,j\in [0:n)$, we rewrite them as
$i=v_0\bar{n}+\bar{i}$ and $j=v_1\bar{n}+\bar{j}$ for $v_0,v_1\in [0: s)$ and $\bar{i},\bar{j}\in [0: \bar{n})$, and further rewrite $\bar{i}$ and $\bar{j}$ as $\bar{i}=g_0m+i'$ and $\bar{j}=g_1m+j'$, where $g_0,g_1\in \{0,1\}$ and $i',j'\in [0: m)$. 

Then by \eqref{Eqn the coefficient lambda}, items i) - iv) of Theorem \ref{Thm_C3} can be verified according to the following four cases.

\begin{itemize}
\item For $u, u'\in [0: w)$ and $i,j\in [0: n)$ with $i\not\equiv j\bmod m$,  
 we have
$\lambda_{\bar{i},u}=c^{i'(w+1)+t}$  and $\lambda_{\bar{j},u'}=c^{j'(w+1)+s}$
for some $t,s\in [0: w+1)$. Thus, \begin{equation*}
x_i\lambda_{\bar{i},u}- x_j\lambda_{\bar{j},u'}=c^{(v_0m+i')(w+1)+t}(1-c^{((v_1-v_0)m+j'-i')(w+1)+s-t})\ne 0
\end{equation*}
since 
\begin{equation*}
0<|((v_1-v_0)m+j'-i')(w+1)+s-t|\le sm(w+1)-1<q-1,
\end{equation*}
i.e., i) of Theorem \ref{Thm_C3} is satisfied.

\item  For $u\in [0: w)$ and $i,j\in [0: n)$ with $i\ne j$ and $i\equiv j\bmod m$, i.e, $i'=j'$, we have
\begin{itemize}
\item [i)] If $g_0=g_1$, $v_0\ne v_1$, then we similarly have
\begin{equation*}
x_i\lambda_{\bar{i},u}- x_j\lambda_{\bar{j},u}=c^{(v_0m+i')(w+1)+t}(1-c^{(v_1-v_0m)(w+1)+s-t})\ne 0
\end{equation*}
for some $t,s\in [0: w+1)$.

\item [ii)] If $g_0=0$ and $g_1=1$, then
\begin{equation*}
x_i\lambda_{\bar{i},0}- x_j\lambda_{\bar{j},0}=c^{(v_0m+i')(w+1)}(1-c^{(v_1-v_0)m(w+1)+w})\ne 0
\end{equation*}
since
\begin{equation*}
0<|(v_1-v_0)m(w+1)+w|\le (s-1)m(w+1)+w<q-1,
\end{equation*}
and
\begin{equation*}
x_i\lambda_{\bar{i},u}- x_j\lambda_{\bar{j},u}=c^{(v_0m+i')(w+1)+u}(1-c^{(v_1-v_0)m(w+1)+u\% (w-1)+1-u})\ne 0
\end{equation*}
for $u\in [1: w)$
since
\begin{equation*}
0<|(v_1-v_0)m(w+1)+u\% (w-1)+1-u|\le (s-1)m(w+1)+w-2<q-1.
\end{equation*}

\item [iii)] If $g_0=1$ and $g_1=0$, the proof is similar to that of ii), thus we omit it here.
\end{itemize}
Combining the above three cases, we conclude that ii) of Theorem \ref{Thm_C3} is satisfied.

\item It is obvious that iii) of Theorem \ref{Thm_C3} is satisfied according to \eqref{Eqn the coefficient lambda}.

\item  For $u\in [1: w)$ and $i,j\in [0: n)$ with $i\equiv j\bmod m$ and $i\not\equiv j\bmod \bar{n}$,
 i.e, $i'=j'$ and $g_0\ne g_1$. If $g_0=0$ and $g_1=1$, we have
\begin{equation*}
x_i\lambda_{\bar{i},0}- x_j\lambda_{\bar{j},u}=c^{(v_0m+i')(w+1)}(1-c^{((v_1-v_0)m)(w+1)+u\% (w-1)+1})\ne 0
\end{equation*}
for $u\in [1: w)$
since 
\begin{equation*}
0<|((v_1-v_0)m)(w+1)+u\% (w-1)+1|\le (s-1)m(w+1)+w-1<q-1.
\end{equation*}
If $g_0=1$ and $g_1=0$, we have
\begin{equation*}
x_i\lambda_{\bar{i},0}- x_j\lambda_{\bar{j},u}=c^{(v_0m+i')(w+1)+w}(1-c^{((v_1-v_0)m)(w+1)+u-w})\ne 0
\end{equation*}
for $u\in [1: w)$
since 
\begin{equation*}
0<|((v_1-v_0)m)(w+1)+u-w|\le (s-1)m(w+1)+w-1<q-1.
\end{equation*}
Thus, iv) of Theorem \ref{Thm_C3} is satisfied.
\end{itemize}
This finishes the proof.
\end{proof}

\section{An MDS array code $\mathcal{C}_2$ with the $(d, d_c)$-repair property}\label{sec:C1}
In this section, we construct an $(n=s\overline{n},k)$ MDS array code $\mathcal{C}_2$ with the $(d, d_c=s-1)$-repair property by applying the generic transformation to the $(\overline{n}, \overline{k})$ YB code 1 with $\overline{d}<\overline{n}-1$ in \cite{Barg1}. We carefully choose $x_{t,j}$ in \eqref{Eqn general coding matrix} such that it is MDS and R3 holds, where $r=n-k=\overline{n}-\overline{k}$ and $d<n-1$. We also determine the required field size.


Recall that the $(\overline{n}, \overline{k})$ YB code 1  with $\overline{d}<\overline{n}-1$ in \cite{Barg1} is defined in the form of \eqref{Eqn parity check eq} and \eqref{Eqn A power},  with the optimal update property and the sub-packetization level is $N=w^{\overline{n}}$ where $2\le w=\overline{d}-\overline{k}+1<r$. More precisely, the  parity-check matrix  $(\overline{A}_{t,i})_{t\in[0:r), i\in[0:\overline{n})}$ of the $(\overline{n}, \overline{k})$ YB code 1 is defined by
\begin{equation}\label{Eqn_YB_code1_power}
\overline{A}_{t,i}=(\overline{A}_i)^t,~ t\in [0: r),
\end{equation}
and
\begin{eqnarray}\label{Eqn_YB_code1}
\overline{A}_i=\sum\limits_{a=0}^{N-1}\lambda_{i,a_i}e_a^\top e_a, ~i\in [0: \overline{n}),
\end{eqnarray}
where $\{\lambda_{i,t}\}_{i\in[0: \overline{n}),t\in[0: w)}$ are $w\overline{n}$ distinct elements in a finite field containing at least $w\overline{n}$ elements.

The
repair matrices and select matrices of the YB code 1 in \cite{Barg1}   are respectively defined by
\begin{eqnarray}\label{Rqn_R_YB1}
\overline{R}_{i,j}=
V_{i,0}+V_{i,1}+\cdots+V_{i,w-1}, i, j\in[0: \overline{n}), j\ne i,
\end{eqnarray}
and
\begin{eqnarray*}\label{Eqn S YB1}
\overline{S}_{i,t}=
V_{i,0}+V_{i,1}+\cdots+V_{i,w-1},  i\in [0:\overline{n}), t\in [0: r),
\end{eqnarray*}
where $V_{i,0}, V_{i,1}, \ldots, V_{i,w-1}$ are defined in \eqref{Eqn_Vt}.

By directly applying the generic transformation in Section \ref{sec:generic tran} to the YB code 1 and choosing $\{\lambda_{i,t}\}_{i\in[0: \overline{n}),t\in[0: w)}$ from $\mathbf{F}_q\setminus\{0\}$ (to fulfill ii) of Theorem \ref{Thm general MDS}), we obtain an $(n,k)$ MDS array code with small sub-packetization level. However, the code should be constructed over a finite field $\mathbf{F}_q$ with $q>N{n-1\choose r-1}+1$ by Theorem \ref{Thm general MDS}. In the following, we construct the desired code over a much smaller finite field by carefully choosing $x_{t,j}$ in \eqref{Eqn general coding matrix}.

\begin{Construction}\label{Con C1}
Let $n=s\overline{n}$ and $k=n-r$. Construct an $(n,k)$ MDS array code $\mathcal{C}_2$ over $\mathbf{F}_q$ by applying the generic transformation in Section \ref{sec:generic tran} to the $(\overline{n}, \overline{k})$ YB code 1 and setting
\begin{eqnarray}\label{Eqn C1 x assi}
x_{t,i}=x_{i}^t
\end{eqnarray}
in \eqref{Eqn general coding matrix} for some $x_i\in \mathbf{F}_q\setminus\{0\}$, where $i\in [0:n)$ and $t\in [0: r)$.
\end{Construction}

For $i\in [0: n)$, rewrite $i=u \overline{n}+\overline{i}$, where $u\in [0: s)$ and $\overline{i}\in [0: \overline{n})$, then by \eqref{Eqn general coding matrix}, \eqref{Eqn_YB_code1_power},   and \eqref{Eqn C1 x assi}, we have that the block matrix $A_{t, i}$ in the parity-check matrix $(A_{t, i})_{t\in [0:r), i\in [0: n)}$ of the MDS array code $\mathcal{C}_2$ satisfies
\begin{eqnarray}\label{Eqn_C1_A_t}
A_{t,i}=x_{t,i}\overline{A}_{t,\overline{i}}=(x_{i}\overline{A}_{\overline{i}})^t=A_i^t,
\end{eqnarray}
where $A_i=x_{i}\overline{A}_{\overline{i}}$. By \eqref{Eqn_YB_code1}
 we have
 \begin{eqnarray}\label{Eqn_C1_A}
A_{i}=\sum\limits_{a=0}^{N-1}x_{i}\lambda_{\overline{i},a_{\overline{i}}}e_a^\top e_a=\sum\limits_{a=0}^{N-1}\xi_{i,a_{\overline{i}}}e_a^\top e_a
\end{eqnarray}
where $\xi_{i,b}$ is defined as
\begin{equation}\label{Eqn_xi_C1}
\xi_{i,b}=x_{i}\lambda_{\overline{i},b},~i\in [0: n),~b\in [0: w).
\end{equation}

In what follows, we give a motivating example showing the main idea of the proof for the general result.

\begin{Example}
Based on the $(\overline{n}=5, \overline{k}=2)$ YB code 1 with $\overline{d}=3$ and $N=(\overline{d}-\overline{k}+1)^{\overline{n}}=2^5$, from Construction \ref{Con C1} and by setting $s=2$, we can get an $(n=10, k=7)$ MDS array code $\mathcal{C}_2$ with $N=2^5$ and $d=8$. 
If Node $4$ fails, Node $9$ is a compulsory helper node that should be contacted. Since $d=8$, one surviving node is not connected, W.L.O.G., say Node $0$, is not contacted. We claim that Node $4$ can be repaired by downloading $(V_{4,0}+V_{4,1})\mathbf{f}_j$ (i.e., $f_{j, 2l}+f_{j, 2l+1}$, $l\in [0: 16)$) for $j\in [1: 9)\setminus \{4\}$ and $\mathbf{f}_9$.

To save space, we only show a part of the parity-check equations, i.e., the ones that only related to the first two symbols of each node, and only show how to regenerate $f_{4,0}$ and $f_{4,1}$, while the remaining symbols $f_{4,j}$, $j\in [2: 2^5)$ can be regenerated similarly. 
By \eqref{Eqn parity check eq}, \eqref{Eqn_C1_A_t}, and \eqref{Eqn_C1_A}, we have that the first two symbols of each node, i.e., $f_{i,0}$, $f_{i,1}$, $i\in [0: 10)$ subject to the following parity-check equations
\begin{eqnarray}
\label{Eqn_PCG_0th_bit}\xi_{0,0}^t f_{0,0}+\xi_{1,0}^tf_{1,0}+\xi_{2,0}^tf_{2,0}+\xi_{3,0}^tf_{3,0}+\xi_{4,0}^tf_{4,0}+\xi_{5,0}^tf_{5,0}+\xi_{6,0}^tf_{6,0}+\xi_{7,0}^tf_{7,0}+\xi_{8,0}^tf_{8,0}+\xi_{9,0}^tf_{9,0}\hspace{-3mm} &=&\hspace{-3mm}0, t\in[0,2),  \\
\label{Eqn_PCG_1st_bit}\xi_{0,0}^tf_{0,1}+\xi_{1,0}^tf_{1,1}+\xi_{2,0}^tf_{2,1}+\xi_{3,0}^tf_{3,1}+\xi_{4,1}^tf_{4,1}+\xi_{5,0}^tf_{5,1}+\xi_{6,0}^tf_{6,1}+\xi_{7,0}^tf_{7,1}+\xi_{8,0}^tf_{8,1}+\xi_{9,1}^tf_{9,1} \hspace{-3mm}&=&\hspace{-3mm}0, t\in [0,2),
\end{eqnarray}
where we suppose that $x_{i}$ and $\lambda_{\overline{i},b}$ in \eqref{Eqn_xi_C1} have been carefully chosen such that $\xi_{i,0}, i\in [0: 10)$ are pairwise distinct and $\xi_{4,1}, \xi_{9,1}, \xi_{i,0}, i\in [0: 9)\setminus\{4\}$ are pairwise distinct to guarantee the MDS property, and $\xi_{4,0}\ne \xi_{4,1}$ as well as $\xi_{9,0}\ne \xi_{9,1}$ to enable the succeed of the repair of Node $4$.

By summing \eqref{Eqn_PCG_0th_bit} and \eqref{Eqn_PCG_1st_bit}, we have
\begin{equation*}
\xi_{4,0}^tf_{4,0}+\xi_{4,1}^tf_{4,1}+ \xi_{9,0}^tf_{9,0}+\xi_{9,1}^tf_{9,1} +\sum\limits_{i=0,i\ne 4}^{8}\xi_{i,0}^t(f_{i,0}+f_{i,1})=0, ~t\in [0:2),
\end{equation*}
i.e.,
\begin{equation}\label{Eqn_ex_SLE}
\underbrace{ \begin{pmatrix}
1&1\\ \xi_{4,0}&\xi_{4,1}\\\xi_{4,0}^2&\xi_{4,1}^2
 \end{pmatrix} }_{M_0}\begin{pmatrix}f_{4,0}\\f_{4,1}\end{pmatrix}=\underbrace{-\begin{pmatrix}
1&1\\ \xi_{9,0}&\xi_{9,1}\\\xi_{9,0}^2&\xi_{9,1}^2
 \end{pmatrix} \begin{pmatrix}f_{9,0}\\f_{9,1}\end{pmatrix}}_{K}-\underbrace{\begin{pmatrix}1&\cdots&1&1&\cdots& 1\\\xi_{0,0}&\cdots&\xi_{3,0}&\xi_{5,0}&\cdots&\xi_{8,0}\\\xi_{0,0}^2&\cdots&\xi_{3,0}^2&\xi_{5,0}^2&\cdots&\xi_{8,0}^2\end{pmatrix}}_{M_1}\begin{pmatrix}f_{0,0}+f_{0,1}\\
\vdots\\f_{3,0}+f_{3,1}\\f_{5,0}+f_{5,1}\\\vdots\\f_{8,0}+f_{8,1}\end{pmatrix},
\end{equation}
where $K$ is known because $f_{9,0}$ and $f_{9,1}$ have been downloaded.
Let $P=\begin{pmatrix}\xi_{4,0}\xi_{4,1}&-\xi_{4,0}-\xi_{4,1}&1\end{pmatrix}$, then $PM_0=\begin{pmatrix}0&0\end{pmatrix}$. Multiply \eqref{Eqn_ex_SLE} by $P$ from the left on both sides, we derive
\begin{equation}\label{Eqn_ex_MM}
PM_1\begin{pmatrix}f_{0,0}+f_{0,1}&
\cdots&f_{3,0}+f_{3,1}&f_{5,0}+f_{5,1}&\cdots&f_{8,0}+f_{8,1}\end{pmatrix}^\top=\kappa^*,
\end{equation}
where $\kappa^*$ denotes some known data we do not care about the exact expression. It is straightforward to verify that $\mbox{rank}(PM_1)=1$ since $\xi_{i,0}\ne \xi_{4,0}, \xi_{4,1}$ for $i\in [0:9)\setminus\{4\}$, thus $f_{0,0}+f_{0,1}$ can be obtained by solving the equation in \eqref{Eqn_ex_MM} as the data
\begin{equation*}
f_{1,0}+f_{1,1}, \ldots, f_{3,0}+f_{3,1}, f_{5,0}+f_{5,1}, \ldots,f_{8,0}+f_{8,1}
\end{equation*}
have been downloaded. Now $f_{4,0}$ and $f_{4,1}$ are available by solving the equations in \eqref{Eqn_ex_SLE} since all the data in the RHS are known and $\mbox{rank}(M_0)=2$ since $\xi_{4,0}\ne \xi_{4,1}$.
\end{Example}

The proof for the general result is motivated by the proof in \cite{Barg1}. We first introduce a lemma in \cite{Barg1}.
\begin{Lemma}
(\cite[Theorem 7]{Barg1}) \label{le_pre_c1} Let $2\le w<r<s$ and $\eta_{0}, \eta_{1}, \ldots,\eta_{s-1}$ be $s$ pairwise distinct elements in $\mathbf{F}_q$. Define the polynomials
\begin{equation*}
p_0(x)=\prod_{u=0}^{w-1}(x-\eta_{u}),~p_i(x)=x^ip_0(x)=\sum\limits_{j=0}^{r-1}p_{i,j}x^j \mbox{~for~}  i\in [0: r-w),
\end{equation*}
and
define the $(r-w)\times r$ matrix
\begin{equation*}
  P=\left(
      \begin{array}{cccc}
        p_{0,0} & p_{0,1} & \cdots & p_{0,r-1} \\
        p_{1,0} & p_{1,1} & \cdots & p_{1,r-1} \\
        \vdots & \vdots & \ddots & \vdots \\
        p_{r-w-1,0} & p_{r-w-1,1} & \cdots & p_{r-w-1,r-1} \\
      \end{array}
    \right).
\end{equation*}
Then we have
\begin{eqnarray*}
  P\left(
   \begin{array}{cccc}
     1 &1  & \cdots  & 1 \\
     \eta_{0} & \eta_{1} & \cdots & \eta_{w-1} \\
     \vdots & \vdots & \ddots & \vdots \\
     \eta_{0}^{r-1} & \eta_{1}^{r-1} & \cdots & \eta_{w-1}^{r-1} \\
   \end{array}
 \right)=\textbf{\rm 0},
\end{eqnarray*}
and
\begin{equation*}
\mbox{rank}(P\left(
   \begin{array}{cccc}
     1 &1  & \cdots  & 1 \\
     \eta_{w} & \eta_{w+1} & \cdots & \eta_{s-1} \\
     \vdots & \vdots & \ddots & \vdots \\
     \eta_{w}^{r-1} & \eta_{w+1}^{r-1} & \cdots & \eta_{s-1}^{r-1} \\
   \end{array}
 \right))=r-w.
\end{equation*}
\end{Lemma}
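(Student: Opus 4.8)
The plan is to read every entry of the two product matrices as a polynomial evaluation and then exploit the root structure of $p_0$. The key elementary observation is that if $R$ denotes the $r\times v$ matrix whose $u$-th column is $(1,\eta_u,\dots,\eta_u^{r-1})^\top$, then the $(i,u)$-entry of $PR$ equals $\sum_{j=0}^{r-1}p_{i,j}\eta_u^j=p_i(\eta_u)$, simply because the $i$-th row of $P$ is by definition the coefficient vector of $p_i$. (One should also note in passing that $\deg p_i=w+i\le r-1$ for $i\in[0:r-w)$, so these coefficient vectors really have length $r$ and $P$ is well defined.) This reduces both assertions to statements about the scalars $p_i(\eta_u)$.

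For the first identity I would argue as follows: since $\eta_0,\dots,\eta_{w-1}$ are exactly the roots of $p_0(x)=\prod_{u=0}^{w-1}(x-\eta_u)$, we have $p_i(\eta_u)=\eta_u^i\,p_0(\eta_u)=0$ for all $i\in[0:r-w)$ and all $u\in[0:w)$; hence the corresponding product matrix is the zero matrix.

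For the rank claim, the matrix in question is the $(r-w)\times(s-w)$ matrix with $(i,u)$-entry $p_i(\eta_u)=\eta_u^i\,p_0(\eta_u)$ for $u\in[w:s)$; since it has only $r-w$ rows, it suffices to exhibit an $(r-w)\times(r-w)$ nonsingular submatrix. I would select the $r-w$ columns indexed by $u\in[w:r)$ (these are legitimate indices because $w<r<s$), obtaining a square matrix $M$ with $M[i,l]=\eta_{w+l}^i\,p_0(\eta_{w+l})$, and factor it as $M=W\,D$, where $W$ is the Vandermonde-type matrix $W[i,l]=\eta_{w+l}^i$ for $i,l\in[0:r-w)$ and $D=\mathrm{diag}\big(p_0(\eta_w),\dots,p_0(\eta_{r-1})\big)$. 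Because $\eta_w,\dots,\eta_{r-1}$ are pairwise distinct, $W$ is nonsingular; and because all the $\eta$'s are distinct, $p_0(\eta_{w+l})=\prod_{u=0}^{w-1}(\eta_{w+l}-\eta_u)\ne 0$ for every $l$, so $D$ is nonsingular. Hence $M=WD$ is nonsingular, and the rank of the full $(r-w)\times(s-w)$ matrix is exactly $r-w$.

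There is no genuine obstacle in this argument; it is essentially the proof of \cite[Theorem 7]{Barg1}. The only points requiring care are (a) the translation of matrix multiplication into polynomial evaluation via the coefficient-vector rows of $P$, and (b) choosing the convenient block of $r-w$ columns so that the Vandermonde-times-diagonal factorization, and hence nonsingularity, is immediate.
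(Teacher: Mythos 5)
Your proof is correct and is the standard argument for this statement, which the paper does not prove itself but simply cites as Theorem 7 of Ye and Barg. The two key observations — that the entries of the product matrices are the evaluations $p_i(\eta_u)=\eta_u^i p_0(\eta_u)$, and that restricting to the columns $u\in[w:r)$ yields a Vandermonde-times-diagonal factorization with nonzero diagonal entries — are exactly what is needed, and both hypotheses ($w<r$ to have $r-w$ columns available, and pairwise distinctness of the $\eta_u$'s to make both factors nonsingular) are used correctly.
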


\begin{Theorem}\label{Thm C1}
The $(n=s\overline{n}, k)$ code $\mathcal{C}_2$ in Construction \ref{Con C1} is an MDS array code with the $(d=n-\overline{n}+\overline{d}, d_c=s-1)$ repair property and $(1+\frac{d_c(d-k)}{d})$-optimal repair bandwidth
 if the following conditions hold
\begin{itemize}
  \item [i)] $\xi_{i,u}\ne \xi_{j,v}$ if $j\not\equiv i \bmod \overline{n}$, and $\xi_{i,u}\ne \xi_{j,u}$ if $j\ne i$ and $j\equiv i \bmod \overline{n}$ for $u,v\in[0:w)$,
  \item [ii)] $\xi_{i,u}\ne \xi_{i,v}$ for all $i\in [0: n)$ and $u,v\in [0: w)$ with $u\ne v$.
\end{itemize}
\end{Theorem}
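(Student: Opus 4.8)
The plan is to handle the MDS property and the repair property separately, in both cases reducing to results already available in the excerpt. For the MDS property, recall that $\mathcal{C}_2$ satisfies \eqref{Eqn_C1_A_t}, i.e.\ $A_{t,i}=A_i^t$ with $A_i$ the diagonal matrix in \eqref{Eqn_C1_A}, so the $A_i$'s pairwise commute and Lemma \ref{Lemma block matrix} applies directly. Thus $\mathcal{C}_2$ is MDS if and only if every difference $A_i-A_j$ ($i\ne j$) is non-singular; since $A_i-A_j$ is diagonal with $a$-th diagonal entry $\xi_{i,a_{\overline{i}}}-\xi_{j,a_{\overline{j}}}$, this amounts to $\xi_{i,u}\ne\xi_{j,v}$ for the relevant index pairs. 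When $\overline{i}\ne\overline{j}$ the digits $a_{\overline{i}}$ and $a_{\overline{j}}$ range independently over $[0:w)$ as $a$ ranges over $[0:N)$, so we need $\xi_{i,u}\ne\xi_{j,v}$ for all $u,v$; when $\overline{i}=\overline{j}$ (so $i\ne j$ forces different copies) we have $a_{\overline{i}}=a_{\overline{j}}$ and need only $\xi_{i,u}\ne\xi_{j,u}$ for all $u$. Both are exactly condition i), so the MDS property follows.

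For the repair property, suppose node $i=u\overline{n}+\overline{i}$ fails. We contact the $d_c=s-1$ compulsory nodes $\{j : j\equiv i\bmod\overline{n},\ j\ne i\}$ and any $d-d_c$ further nodes; the unconnected set $L_i$ is an $(r-w)$-subset of $\{j : j\not\equiv i\bmod\overline{n}\}$. By \eqref{Eqn general S}, \eqref{Eqn_C1_A_t} and $\overline{S}_{\overline{i},t}=V_{\overline{i},0}+\cdots+V_{\overline{i},w-1}$, multiplying the $t$-th parity-check group by $S_{i,t}$ and using \eqref{Eqn_V_mu} collapses each node's contribution to the quantities $\mu_{j,\overline{i}}^{(\cdot)}$; for the failed node this produces a system whose coefficient matrix, restricted to one index block, is the block-Vandermonde-type matrix built from the scalars $\xi_{i,0},\dots,\xi_{i,w-1}$ (unknowns from node $i$), the scalars $\xi_{j,v}$ for the $s-2$ other compulsory copies (known, since all their symbols are downloaded), and the scalars $\xi_{l,\cdot}$ for $l\in L_i$ (interference). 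The idea, following \cite{Barg1}, is: set $p_0(x)=\prod_{u=0}^{w-1}(x-\xi_{i,u})$ and form the $(r-w)\times r$ matrix $P$ of Lemma \ref{le_pre_c1} with $\eta_u=\xi_{i,u}$; left-multiplying the reduced system by $P$ annihilates the failed-node columns (first display of Lemma \ref{le_pre_c1}) and leaves a system in the interference unknowns whose coefficient matrix has rank $r-w$ (second display of Lemma \ref{le_pre_c1}), because by condition i) the scalars $\xi_{l,\cdot}$, $l\in L_i$, are distinct from the $\xi_{i,u}$'s and from each other in the appropriate sense. Hence the $r-w$ interference terms $\overline{R}_{\overline{i},\overline{j}}$-projections of the unconnected nodes can be solved for from the downloaded data (this establishes R3), after which the original reduced system becomes a genuine Vandermonde system in $f_i$'s symbols, invertible by condition ii) ($\xi_{i,u}\ne\xi_{i,v}$). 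Repeating over all index blocks recovers $\mathbf{f}_i$. The repair-bandwidth bound $(1+\tfrac{d_c(d-k)}{d})\gamma_{\rm optimal}$ is then immediate from Theorem \ref{Thm general repair}, whose hypotheses R1, R2 for the base YB code 1 are standard.

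I expect the main obstacle to be the bookkeeping in the repair step: carefully arguing that after projecting by $P$ the residual system in the interference unknowns is solvable \emph{for every admissible choice of $L_i$}, i.e.\ that the relevant submatrix of the $\xi$-Vandermonde matrix always has full rank $r-w$. This requires invoking the second rank equality of Lemma \ref{le_pre_c1} with $\{\eta_w,\dots,\eta_{s-1}\}$ instantiated as the interference scalars and checking that condition i) indeed makes all these scalars pairwise distinct and distinct from $\xi_{i,0},\dots,\xi_{i,w-1}$ — the case split being whether an unconnected node lies in the same residue class modulo $\overline{n}$ as some already-counted node, which it cannot, so the digit-independence argument used in the MDS part applies again. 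A secondary technical point is confirming that the compulsory copies other than node $i$ contribute only known data (true because their full vectors $\mathbf{f}_j$ are downloaded per \eqref{Eqn general R}), so they move harmlessly to the right-hand side. Once these are in place the argument closes by combining with Theorem \ref{Thm general repair}. The detailed verification is deferred to an appendix.
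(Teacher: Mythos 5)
Your proposal is correct and follows essentially the same route as the paper: the MDS part reduces to nonsingularity of $A_i-A_j$ via Lemma~\ref{Lemma block matrix} and the distinctness pattern of the $\xi_{i,u}$'s, and the repair part collapses the parity-check equations to a system in the $\mu_{j,i}^{(a)}$'s, annihilates the failed-node columns with the matrix $P$ of Lemma~\ref{le_pre_c1}, uses its rank conclusion to establish R3, and then invokes Theorem~\ref{Thm general repair}. The only minor imprecision is that the pairwise-distinctness check must cover \emph{all} scalars $\xi_{l\overline{n}+\overline{j},a_{\overline{j}}}$ with $\overline{j}\ne\overline{i}$ (not just those of unconnected nodes) so that Lemma~\ref{le_pre_c1} applies to any $(r-w)$-column submatrix, but this follows from condition~i) exactly as you sketch.
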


\begin{proof}
From the definition of the parity-check matrix of the code $\mathcal{C}_2$ by \eqref{Eqn_C1_A_t} and \eqref{Eqn_C1_A}, following the same proof of \cite[Theorem 2]{Barg1}, we have that $\mathcal{C}_2$ is MDS if and only if
\begin{equation*}
  \xi_{0,a_{\overline{0}}}~,~ \xi_{1,a_{\overline{1}}}~, \ldots,\xi_{n-1,a_{\overline{n-1}}}
\end{equation*}
are pairwise distinct for any $a\in [0: N)$, where $\overline{j}$ denotes $j\% \bar{n}$ for $j\in [0: n)$, which is equivalent to Condition i).

To verify the repair property, we only need to show that R3 holds for the code $\mathcal{C}_2$ according to Theorem \ref{Thm general repair} since R1 and R2 hold for YB code 1 \cite{Barg1}.
W.L.O.G., we consider the case of repairing node $i$, where $i\in [0: \overline{n})$.
We connect node $j$ for all $j\in [0:n)\setminus\{i\}$ with $j  \equiv i \bmod \overline{n}$ and any other $d-d_c$ surviving nodes during the repair process. Let $H_{i}$ denote the set of indices of the helper nodes. Then we download $R_{i,j}\mathbf{f}_j$ from node $j$ for all $j\in H_{i}$. By  \eqref{Eqn general R} and \eqref{Rqn_R_YB1}, we have $R_{i,j}=I$ if $j\equiv i \bmod \overline{n}$ and $R_{i,j}=\overline{R}_{i,\overline{j}}=\sum\limits_{u=0}^{w-1}V_{i,u}$   otherwise.

By \eqref{Eqn_V_mu} and \eqref{Eqn_mu_eq}, from the downloaded data $R_{i,j}\mathbf{f}_j, j\in H_i$, we get
\begin{equation*}
\{\mu_{j,i}^{(a)}|0\le a<N, j\in H_i, j \not \equiv i \bmod \overline{n}\}
\end{equation*}
and   $\mathbf{f}_j$  for all $j\in \{l\overline{n}+i|l=1,\ldots,s-1\}$, where $\mu_{j,i}^{(a)}$ is defined in \eqref{Eqn_mu_a}.


For $j\in [0: n)$, rewrite $j$ as $j=l\overline{n}+\overline{j}$, where $l\in [0: s)$ and $\overline{j}\in [0: \overline{n})$. It is obvious that each $A_{j}$ in \eqref{Eqn_C1_A} is diagonal, where the diagonal entry in row $a$ is $\xi_{j,a_{\overline{j}}}$. From this observation, the parity-check equations in  \eqref{Eqn parity check eq} can be represented coordinate-wise. Then by \eqref{Eqn e_ae_bT}, \eqref{Eqn_f_expa}, and \eqref{Eqn_C1_A}, we have
\begin{equation*}
e_a\sum\limits_{j=0}^{n-1}A_j^t\mathbf{f}_j=e_a\sum\limits_{j=0}^{n-1}\left(\sum\limits_{b=0}^{N-1}\xi_{j,b_{\overline{j}}}^te_b^\top e_b\right)\left(\sum\limits_{c=0}^{N-1}f_{j,c}e_c^\top\right) =\sum\limits_{j=0}^{n-1}\xi_{j,a_{\overline{j}}}^tf_{j,a} =\sum\limits_{l=0}^{s-1} \sum\limits_{\overline{j}=0}^{\overline{n}-1}\xi_{l\overline{n}+\overline{j},a_{\overline{j}}}^tf_{l\overline{n}+\overline{j}, a}=0, ~t\in [0: r),
\end{equation*}
for all $a\in [0: N)$.

Replacing $a$ with $a(i, u)$ (cf. \eqref{Eqn_ait}) in the above formula, we get
\begin{equation*}
\sum\limits_{l=0}^{s-1}\xi_{l\overline{n}+i,u}^tf_{l\overline{n}+i, a(i, u)}+\sum\limits_{l=0}^{s-1} \sum\limits_{\overline{j}=0,\overline{j}\ne i}^{\overline{n}-1}\xi_{l\overline{n}+\overline{j},a_{\overline{j}}}^tf_{l\overline{n}+\overline{j}, a(i, u)}=0, ~t\in [0: r),
\end{equation*}
which is equivalent to
\begin{equation}\label{Eqn_C1_PCE}
\xi_{i, u}^tf_{i, a(i, u)}=\underbrace{-\sum\limits_{l=1}^{s-1}\xi_{l\overline{n}+i,u}^tf_{l\overline{n}+i, a(i, u)}}_{K}-\sum\limits_{l=0}^{s-1} \sum\limits_{\overline{j}=0,\overline{j}\ne i}^{\overline{n}-1}\xi_{l\overline{n}+\overline{j},a_{\overline{j}}}^tf_{l\overline{n}+\overline{j}, a(i, u)}=\kappa_{*}-\sum\limits_{l=0}^{s-1} \sum\limits_{\overline{j}=0,\overline{j}\ne i}^{\overline{n}-1}\xi_{l\overline{n}+\overline{j},a_{\overline{j}}}^tf_{l\overline{n}+\overline{j}, a(i, u)}, ~t\in [0: r),
\end{equation}
where $K$ is known according to the data downloaded from the compulsory nodes and $\kappa_{*}$ denotes some available information for which we do not care about the exact expression.

For the convenience of notation, we can assume $i=0$  in the following without loss of generality.  Summing the equations in \eqref{Eqn_C1_PCE} over $u=0, 1, \ldots, w-1$,  then we obtain the following equations by \eqref{Eqn_mu_a},
\begin{eqnarray}
\nonumber &&\left(
   \begin{array}{cccc}
     1 &1  & \cdots  & 1 \\
     \xi_{0, 0} & \xi_{0, 1} & \cdots & \xi_{0, w-1} \\
     \vdots & \vdots & \ddots & \vdots \\
     \xi_{0, 0}^{r-1} & \xi_{0, 1}^{r-1} & \cdots & \xi_{0, w-1}^{r-1} \\
   \end{array}
 \right)\left(
          \begin{array}{c}
            f_{0, a(0, 0)} \\
             f_{0, a(0, 1)} \\
            \vdots \\
             f_{0, a(0, w-1)} \\
          \end{array}
        \right)\\
 \nonumber&=&\kappa_{*}-\left(
                  \begin{array}{c}
                     \sum\limits_{l=0}^{s-1} \sum\limits_{\overline{j}=1}^{\overline{n}-1} \mu_{l\overline{n}+\overline{j},0}^{(a)}\\
                    \sum\limits_{l=0}^{s-1} \sum\limits_{\overline{j}=1}^{\overline{n}-1}\xi_{l\overline{n}+\overline{j},a_{\overline{j}}} \mu_{l\overline{n}+\overline{j},0}^{(a)}\\
                     \vdots\\
                    \sum\limits_{l=0}^{s-1} \sum\limits_{\overline{j}=1}^{\overline{n}-1}\xi_{l\overline{n}+\overline{j},a_{\overline{j}}}^{r-1} \mu_{l\overline{n}+\overline{j},0}^{(a)}\\
                  \end{array}
                \right)\\
 \label{Eqn_summing}               &=&\kappa_{*}-\underbrace{\left(
                     \begin{array}{ccccccc}
                       1 & \cdots & 1 & \cdots & 1 & \cdots & 1 \\
                      \xi_{1, a_1} & \cdots &  \xi_{\overline{n}-1, a_{\overline{n}-1}} & \cdots & \xi_{(s-1)\overline{n}+1, a_1} & \cdots & \xi_{(s-1)\overline{n}+\overline{n}-1, a_{\overline{n}-1}} \\
                       \vdots & \ddots & \vdots & \ddots & \vdots & \ddots & \vdots \\
                       \xi_{1, a_1}^{r-1} & \cdots &  \xi_{\overline{n}-1, a_{\overline{n}-1}}^{r-1} & \cdots & \xi_{(s-1)\overline{n}+1, a_1}^{r-1} & \cdots & \xi_{(s-1)\overline{n}+\overline{n}-1, a_{\overline{n}-1}}^{r-1} \\
                     \end{array}
                   \right)}_{M}\left(
                            \begin{array}{c}
                              \mu_{1,0}^{(a)} \\
                              \vdots \\
                              \mu_{\overline{n}-1,0}^{(a)} \\
                              \vdots \\
                              \mu_{(s-1)\overline{n}+1,0}^{(a)} \\
                              \vdots \\
                              \mu_{(s-1)\overline{n}+\overline{n}-1,0}^{(a)} \\
                            \end{array}
                          \right).
\end{eqnarray}
By i) and ii), we have that
$$\xi_{0,0}, \xi_{0,1},\ldots,\xi_{0,w-1}, \xi_{1, a_1}, \ldots, \xi_{\overline{n}-1, a_{\overline{n}-1}}, \ldots, \xi_{(s-1)\overline{n}+1, a_1}, \ldots, \xi_{(s-1)\overline{n}+\overline{n}-1, a_{\overline{n}-1}}$$ are pairwise distinct.
Define polynomials  $p_0(x)=\prod_{u=0}^{w-1}(x-\xi_{0,u})$ and $p_i(x)=x^ip_0(x)$ for $i\in [0: r-w)$. It is obvious that $\deg(p_i(x))<r$ for all $i\in [0: r-w)$, then we can write
\begin{equation*}
  p_i(x)=\sum\limits_{j=0}^{r-1}p_{i,j}x^j.
\end{equation*}

Define the $(r-w)\times r$ matrix
\begin{equation*}
  P=\left(
      \begin{array}{cccc}
        p_{0,0} & p_{0,1} & \cdots & p_{0,r-1} \\
        p_{1,0} & p_{1,1} & \cdots & p_{1,r-1} \\
        \vdots & \vdots & \ddots & \vdots \\
        p_{r-w-1,0} & p_{r-w-1,1} & \cdots & p_{r-w-1,r-1} \\
      \end{array}
    \right),
\end{equation*}
then multiply \eqref{Eqn_summing} by $P$ from the left on both sides, we can derive
\begin{equation}\label{Eqn_C1_PM=0}
  PM\left(\mu_{1,0}^{(a)},
                              \cdots,
                              \mu_{\overline{n}-1,0}^{(a)},
                              \cdots,
                              \mu_{(s-1)\overline{n}+1,0}^{(a)},
                              \cdots,
                              \mu_{(s-1)\overline{n}+\overline{n}-1,0}^{(a)}
                          \right)^\top=\kappa_{*}
\end{equation}
and $\mbox{rank}(PM)=r-w$ by Lemma \ref{le_pre_c1}. Since $PM$ is a $(r-w)\times (n-s)$ matrix, then we conclude that one can obtain the whole data in the vector $\left(\mu_{1,0}^{(a)},
                              \cdots,
                              \mu_{\overline{n}-1,0}^{(a)},
                              \cdots,
                              \mu_{(s-1)\overline{n}+1,0}^{(a)},
                              \cdots,
                              \mu_{(s-1)\overline{n}+\overline{n}-1,0}^{(a)}
                          \right)$ from any $n-s-r+w$ coordinate of it by solving the equations in \eqref{Eqn_C1_PM=0}, where
$n-s-r+w=d-d_c$ by $w=\overline{d}-\overline{k}+1$ and $d=n-\overline{n}+\overline{d}$, i.e.,
we can get
\begin{equation*}
\{\mu_{j,0}^{(a)}|0\le a<N, 0\le j<n, j \not \equiv 0 \bmod \overline{n}\}
\end{equation*}
from
\begin{equation*}
\{\mu_{j,0}^{(a)}|0\le a<N, j\in H_0\},
\end{equation*}
which implies that R3 holds for the code $\mathcal{C}_2$ when repairing node $0$. 

The repair of other nodes can be analyzed similarly. This completes the proof.
\end{proof}

\begin{Theorem}\label{Eqn_C1_field}
The $(n=s\overline{n},k)$ MDS array code $\mathcal{C}_2$ in Construction \ref{Con C1} can be obtained over a finite field $\mathbf{F}_q$ with $q>\lceil\frac{s}{w}\rceil w\overline{n}$ and $w|(q-1)$.
\end{Theorem}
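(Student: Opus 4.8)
The plan is to invoke Theorem~\ref{Thm C1}: it suffices to produce, over any $\mathbf{F}_q$ with $q>\lceil\tfrac{s}{w}\rceil w\overline{n}$ and $w\mid(q-1)$, a valid choice of the base-code parameters $\{\lambda_{\overline{i},u}\}$ for YB code~1 together with transformation coefficients $\{x_i\}\subseteq\mathbf{F}_q\setminus\{0\}$ such that the resulting $\xi_{i,u}=x_i\lambda_{\overline{i},u}$ (cf.\ \eqref{Eqn_xi_C1}) satisfy conditions~i) and ii) of Theorem~\ref{Thm C1}. First I would set $M=(q-1)/w$ (an integer, since $w\mid(q-1)$), fix a primitive element $c$ of $\mathbf{F}_q$, and put $\theta=c^{M}$, an element of multiplicative order exactly $w$. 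Note that $q>\lceil\tfrac{s}{w}\rceil w\overline{n}$ yields $M\ge\lceil\tfrac{s}{w}\rceil\overline{n}\ge\overline{n}$.

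For the base code I would take $\lambda_{\overline{i},u}=c^{\,\overline{i}+uM}$ for $\overline{i}\in[0:\overline{n})$ and $u\in[0:w)$; since $\overline{n}\le M$, the intervals $[uM,uM+\overline{n})$ are pairwise disjoint, so these are $w\overline{n}$ distinct elements and YB code~1 is a valid $(\overline{n},\overline{k})$ MSR code. For the transformation coefficients I would write $i=l\overline{n}+\overline{i}$ with $l=\lfloor i/\overline{n}\rfloor$ and $\overline{i}\in[0:\overline{n})$, and further $l=pw+\rho$ with $p=\lfloor l/w\rfloor\in[0:\lceil\tfrac{s}{w}\rceil)$ and $\rho\in[0:w)$, then set $x_i=c^{\,p\overline{n}+\rho M}$ in \eqref{Eqn C1 x assi}, so that $\xi_{i,u}=c^{\,p\overline{n}+\overline{i}+(\rho+u)M}$. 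The guiding idea is that the part $p\overline{n}+\overline{i}$ of the exponent runs injectively in $(p,\overline{i})$ over $[0,\lceil\tfrac{s}{w}\rceil\overline{n})\subseteq[0,M)$, while the remaining factor $\theta^{\rho+u}$ lives in the size-$w$ subgroup $\langle\theta\rangle$; this packs the $s$ layers $l$ into only $\lceil\tfrac{s}{w}\rceil$ cosets of $\langle\theta\rangle$, which is precisely what brings the field requirement down from $sw\overline{n}$ to $\lceil\tfrac{s}{w}\rceil w\overline{n}$.

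It then remains to verify conditions~i) and ii) of Theorem~\ref{Thm C1}, which I expect to be routine bookkeeping with the two moduli $M$ and $q-1=wM$. For ii), $\xi_{i,u}/\xi_{i,v}=\theta^{u-v}\ne1$ whenever $u\ne v$ in $[0:w)$. For the first part of i) (indices with $\overline{i}\ne\overline{j}$), an equality $\xi_{i,u}=\xi_{j,v}$ reduced modulo $M$ gives $p\overline{n}+\overline{i}\equiv p'\overline{n}+\overline{j}\pmod M$, and since both sides lie in $[0,M)$ they coincide as integers, forcing $\overline{i}=\overline{j}$, a contradiction. For the second part of i) (indices with $\overline{i}=\overline{j}$ but $i\ne j$, i.e.\ $l\ne l'$), an equality $\xi_{i,u}=\xi_{j,u}$ reduced modulo $M$ gives $p\overline{n}\equiv p'\overline{n}\pmod M$, hence $p=p'$; the equality then reads $(\rho-\rho')M\equiv0\pmod{wM}$, so $w\mid(\rho-\rho')$ and $\rho=\rho'$, whence $l=l'$, a contradiction. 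Having checked both conditions, Theorem~\ref{Thm C1} delivers the $(n=s\overline{n},k)$ MDS array code $\mathcal{C}_2$ with the asserted repair property over $\mathbf{F}_q$. I anticipate that the only genuinely non-mechanical step is the design of $x_i$ — recognizing that the coset structure of $\langle\theta\rangle$ should absorb a factor of $w$ out of $s$; once that choice is made, everything else is elementary modular arithmetic.
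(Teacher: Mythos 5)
Your proposal is correct and takes essentially the same route as the paper: with $\theta=c^{(q-1)/w}$ you choose $\lambda_{\overline{i},u}=c^{\overline{i}}\theta^{u}$ and $x_i=c^{p\overline{n}}\theta^{\rho}$ (writing $l=\lfloor i/\overline n\rfloor = pw+\rho$), which is exactly the paper's assignment $\lambda_{\overline{i},u}=c^{\overline{i}}\delta^u$, $x_i=c^{z\overline{n}}\delta^y$ with $\delta=c^{(q-1)/w}$, $z=p$, $y=\rho$. Your verification of conditions i)--ii) of Theorem~\ref{Thm C1} by reducing the exponent of $\xi_{i,u}=c^{p\overline n+\overline i+(\rho+u)M}$ modulo $M=(q-1)/w$ (and using $p\overline n+\overline i\in[0,M)$) is an equivalent, and in fact slightly cleaner, rendition of the paper's argument of raising $\xi_{i,u}/\xi_{j,v}$ to the $w$-th power and bounding the resulting exponent modulo $q-1$.
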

\begin{proof}
It suffices to show that Conditions i) and ii) of Theorem \ref{Thm C1} can be satisfied over a finite field $\mathbf{F}_q$ with $q>\lceil\frac{s}{w}\rceil w\overline{n}$ and $w|(q-1)$ by choosing appropriate $\lambda_{\overline{i},u}$ in \eqref{Eqn_YB_code1} and $x_i$ in \eqref{Eqn C1 x assi}.
Let $c$  be a primitive element of the finite field $\mathbf{F}_q$  and $\delta=c^{\frac{q-1}{w}}$ be a primitive $w$-th root of unity  in the finite field $\mathbf{F}_q$. In the following, we show that Conditions i) and ii) of Theorem \ref{Thm C1} can be fulfilled by setting
\begin{eqnarray}\label{Eqn C1 la assi}
\lambda_{\overline{i},u}=c^{\overline{i}}\delta^{u},~ \overline{i}\in [0: \overline{n}),~ u\in [0: w)
\end{eqnarray} in \eqref{Eqn_YB_code1} and
\begin{equation}\label{Eqn_xi}
x_i=c^{z\overline{n}}\delta^y
\end{equation}
in \eqref{Eqn C1 x assi} for $i=z w \overline{n}+y\overline{n}+\overline{i}\in [0:n)$, $z\in [0: \lceil\frac{s}{w}\rceil)$, $y\in [0:w)$, and $\overline{i}\in [0: \overline{n})$.

First, let us prove Conditions i) of Theorem \ref{Thm C1}.
Write $i=z_0 w \overline{n}+y_0\overline{n}+\overline{i}\in [0: n)$ and $j=z_1 w \overline{n}+y_1\overline{n}+\overline{j}\in [0: n)$, where $z_0, z_1\in [0:\lceil\frac{s}{w}\rceil)$, $y_0,y_1\in [0: w)$, $\overline{i},\overline{j}\in [0: \overline{n})$.

If $j\not\equiv i \bmod \overline{n}$, then  $\overline{i} \ne  \overline{j}$. Thus by \eqref{Eqn_xi_C1}, \eqref{Eqn C1 la assi}, and \eqref{Eqn_xi} we have
\begin{equation*}
  \left(\frac{\xi_{i,u}}{\xi_{j,v}}\right)^w =  \left(\frac{c^{z_0\overline{n}+\overline{i}}\delta^{y_0+u}}{c^{z_1\overline{n}+\overline{j}}\delta^{y_1+v}}\right)^w=c^{(z_0-z_1)w\overline{n}+w(\overline{i}-\overline{j})}\ne 1,
\end{equation*}
where the second equality holds since $\delta^w=1$, and the inequality follows from the fact that
\begin{equation*}
0<|(z_0-z_1)w\overline{n}+w(\overline{i}-\overline{j})|\le (\lceil\frac{s}{w}\rceil-1)w\overline{n}+w(\overline{n}-1)=\lceil\frac{s}{w}\rceil w \overline{n}-w<q-1.
\end{equation*}

If $j\equiv i \bmod \overline{n}$, then  $\overline{i}=\overline{j}$. Thus, by  \eqref{Eqn_xi_C1}, \eqref{Eqn C1 la assi}, and \eqref{Eqn_xi} we have
\begin{equation*}
  \frac{\xi_{i,u}}{\xi_{j,u}} =  \frac{c^{z_0\overline{n}+\overline{i}}\delta^{y_0+u}}{c^{z_1\overline{n}+\overline{j}}\delta^{y_1+u}}=c^{(z_0-z_1)\overline{n}}\delta^{y_0-y_1}=c^{(z_0-z_1)\overline{n}+(y_0-y_1)\frac{q-1}{w}}\ne 1,
\end{equation*}
where  the inequality follows from the facts that $\frac{q-1}{w}\ge \lceil\frac{s}{w}\rceil \overline{n}$ and
\begin{eqnarray*}
 0<|{(z_0-z_1)\overline{n}+(y_0-y_1)\frac{q-1}{w}}| \le (\lceil\frac{s}{w}\rceil-1)\overline{n}+ (w-1)\frac{q-1}{w}=q-1+(\lceil\frac{s}{w}\rceil-1)\overline{n}- \frac{q-1}{w}\le q-1-\overline{n}.
\end{eqnarray*}

Second, for $i\in [0: n)$ and $u,v\in [0: w)$ with $u\ne v$, by  \eqref{Eqn_xi_C1}, \eqref{Eqn C1 la assi}, and \eqref{Eqn_xi} we have
\begin{equation*}
  \xi_{i,u}=x_{i}\lambda_{\overline{i},u}=x_i c^{\overline{i}}\delta^{u}\ne x_i c^{\overline{i}}\delta^{v}=x_{i}\lambda_{\overline{i},v}=\xi_{i,v},
\end{equation*}
thus Conditions ii) of Theorem \ref{Thm C1} is satisfied.
\end{proof}

Since every block entry $A_{t, i}$ in the parity-check matrix $(A_{t, i})_{t\in [0:r), i\in [0: n)}$ of $\mathcal{C}_2$ is diagonal, we can use a similar analysis as the YB code 1 in \cite{Barg1} to obtain the following result.

\begin{Proposition}\label{Prop2}
The code $\mathcal{C}_2$  has the optimal update property.
\end{Proposition}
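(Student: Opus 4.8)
The plan is to invoke the characterization of the optimal update property given right after Lemma~\ref{Pro matrix}: an $(n,k)$ MDS array code defined in the parity-check form \eqref{Eqn parity check eq} has the optimal update property whenever every block entry $A_{t,i}$ in its parity-check matrix $(A_{t,i})_{t\in[0:r),i\in[0:n)}$ is a diagonal matrix. So the entire proof reduces to verifying that the parity-check matrix of $\mathcal{C}_2$ consists of diagonal blocks, and then citing that characterization (from \cite{Barg1}) together with the already-established MDS property of $\mathcal{C}_2$ (Theorem~\ref{Thm C1}).

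First I would recall that, by construction, the parity-check matrix of $\mathcal{C}_2$ is obtained from that of the YB code~1 via the generic transformation, so by \eqref{Eqn_C1_A_t} and \eqref{Eqn_C1_A} we have $A_{t,i}=A_i^t$ with
\[
A_i=\sum_{a=0}^{N-1}\xi_{i,a_{\overline{i}}}\,e_a^\top e_a ,
\]
which is a diagonal matrix (its row-$a$ diagonal entry is $\xi_{i,a_{\overline{i}}}$). Since any power of a diagonal matrix is again diagonal, every block entry $A_{t,i}=A_i^t$ is diagonal for all $t\in[0:r)$ and $i\in[0:n)$. Combined with the fact that $\mathcal{C}_2$ is an MDS array code (Theorem~\ref{Thm C1}, whose hypotheses Conditions i) and ii) are arranged to hold, e.g.\ by Theorem~\ref{Eqn_C1_field}), the Ye--Barg characterization immediately yields that $\mathcal{C}_2$ has the optimal update property.

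There is essentially no obstacle here: the statement is a direct corollary of the diagonal structure inherited from YB code~1, exactly as the sentence preceding the proposition ("Since every block entry $A_{t,i}$ $\ldots$ is diagonal") already anticipates. The only point worth being explicit about is that the transformation does not destroy diagonality --- the scalar multipliers $x_{t,i}=x_i^t$ in \eqref{Eqn general coding matrix} merely rescale the diagonal entries --- and that this is precisely why $\mathcal{C}_2$ preserves the optimal update property of its base code. I would therefore write the proof in two or three sentences: state that $A_{t,i}=A_i^t$ is diagonal by \eqref{Eqn_C1_A_t}--\eqref{Eqn_C1_A}, invoke the characterization recalled in Section~II (after Lemma~\ref{Pro matrix}) that a parity-check form MDS array code with all-diagonal blocks has the optimal update property, and conclude using the MDS property from Theorem~\ref{Thm C1}.
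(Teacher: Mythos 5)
Your proposal matches the paper's reasoning exactly: the paper also notes, in the sentence immediately preceding Proposition~\ref{Prop2}, that every block $A_{t,i}$ of $\mathcal{C}_2$ is diagonal (being a power of the diagonal matrix $A_i$ in \eqref{Eqn_C1_A}) and then cites the Ye--Barg characterization of the optimal update property. Correct, and essentially the same argument.
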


\begin{Remark}
In fact, according to \eqref{Eqn_C1_A} and \eqref{Eqn_xi_C1}, we can directly construct an MDS array code $\mathcal{C}_2'$ with the same performance as that of the code $\mathcal{C}_2$, but without the restriction $w|(q-1)$ on the field size. Specifically, we can construct a new $(n=s\bar{n}, k=n-r)$ array code $\mathcal{C}_2'$ with sub-packetization level $N=w^{\bar{n}}$ and repair degree $d=k+w-1$, where $2\le w<r$. The parity-check matrix $(A_{t, i})_{t\in [0:r), i\in [0: n)}$ is defined by
\begin{eqnarray}\label{Eqn_C2'_A_t}
A_{t,i}=A_i^t, A_i=\sum\limits_{a=0}^{N-1}\xi'_{i,a_{\overline{i}}}e_a^\top e_a, t\in [0:r), i\in [0: n).
\end{eqnarray}
\end{Remark}

We have the following results by applying similar proofs of Theorems \ref{Thm C1} and \ref{Eqn_C1_field}.
\begin{Theorem}\label{Thm_C2'}
The $(n=s\bar{n}, k)$ array code $\mathcal{C}_2'$ is an MDS array code with the $(d=k+w-1, d_c=s-1)$ repair property and $(1+\frac{d_c(d-k)}{d})$-optimal repair bandwidth
by setting $\xi'_{i,u}$ in \eqref{Eqn_C2'_A_t} as 
\begin{equation*}
\xi'_{i,u}=c^{zw\bar{n}+\bar{i}w+u\oplus_w y},
\end{equation*}
where $\oplus_w$ denotes the addition modulo $w$, $u\in [0: w)$, and $i=z w \overline{n}+y\overline{n}+\overline{i}\in [0: n)$ for $z\in [0:\lceil\frac{s}{w}\rceil)$, $y\in [0: w)$, and $\overline{i}\in [0: \overline{n})$, $c$ is a primitive element over a finite field $\mathbf{F}_q$ with $q>\lceil\frac{s}{w}\rceil w\overline{n}$. 
\end{Theorem}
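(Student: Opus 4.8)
The plan is to reduce the statement to Theorem~\ref{Thm C1}. First I would observe that $\mathcal{C}_2'$ is built from exactly the same block data as $\mathcal{C}_2$: its parity-check matrix is $(A_{t,i})=((A_i)^t)$ with $A_i=\sum_{a=0}^{N-1}\xi'_{i,a_{\bar{i}}}e_a^\top e_a$ diagonal, cf. \eqref{Eqn_C1_A_t}--\eqref{Eqn_C1_A}, and one uses the same select and repair matrices as in the generic transformation applied to YB code~1, namely $S_{i,t}=R_{i,j}=\sum_{u=0}^{w-1}V_{i\%\bar{n},u}$ for $j\not\equiv i\bmod\bar{n}$ and $R_{i,j}=I$ for $j\equiv i\bmod\bar{n}$. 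The proof of Theorem~\ref{Thm C1} (the MDS part, which invokes the proof of \cite[Theorem~2]{Barg1}, and the repair part, which runs the polynomial-annihilator argument with Lemma~\ref{le_pre_c1}) uses the scalars $\xi_{i,u}$ only through the distinctness relations in its Conditions i) and ii) together with the diagonal block structure, never through the factorization $\xi_{i,u}=x_i\lambda_{\bar{i},u}$. Hence that proof carries over with $\xi_{i,u}$ replaced by $\xi'_{i,u}$, and likewise the bandwidth count $\gamma_i=(d-d_c)\frac{N}{d-k+1}+d_cN=(1+\frac{d_c(d-k)}{d})\gamma_{\rm optimal}$ from the proof of Theorem~\ref{Thm general repair} goes through since the $\mathcal{C}_2'$ repair matrices have rank $N/w=N/(d-k+1)$ for the $d-d_c$ freely chosen helpers and rank $N$ for the $d_c$ compulsory ones. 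Thus everything comes down to verifying Conditions i) and ii) of Theorem~\ref{Thm C1} for $\xi'_{i,u}=c^{zw\bar{n}+\bar{i}w+u\oplus_w y}$, where $i=zw\bar{n}+y\bar{n}+\bar{i}$ with $z\in[0:\lceil\frac{s}{w}\rceil)$, $y\in[0:w)$, $\bar{i}\in[0:\bar{n})$.

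Next I would set $E(i,u):=zw\bar{n}+\bar{i}w+(u\oplus_w y)$ and note $0\le E(i,u)\le(\lceil\frac{s}{w}\rceil-1)w\bar{n}+(\bar{n}-1)w+(w-1)=\lceil\frac{s}{w}\rceil w\bar{n}-1\le q-2$ by the hypothesis $q>\lceil\frac{s}{w}\rceil w\bar{n}$; since $c$ is primitive this makes $\xi'_{i,u}=\xi'_{j,v}$ equivalent to $E(i,u)=E(j,v)$ as integers, and from there the check is a mixed-radix digit argument. Reducing $E(i,u)=E(j,v)$ modulo $w$ forces $u\oplus_w y_0=v\oplus_w y_1$ (both lie in $[0:w)$); cancelling this term leaves $z_0\bar{n}+\bar{i}=z_1\bar{n}+\bar{j}$, which, since $\bar{i},\bar{j}\in[0:\bar{n})$, forces $z_0=z_1$ and $\bar{i}=\bar{j}$. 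Applying this: in the first clause of Condition~i) we have $\bar{i}\ne\bar{j}$, a contradiction; in the second clause ($\bar{i}=\bar{j}$, $u=v$) we get $y_0=y_1$ because $t\mapsto t\oplus_w y$ is a bijection of $[0:w)$, hence $z_0=z_1$, hence $i=j$, a contradiction; and Condition~ii) ($u\ne v$, same $i$) is immediate from injectivity of $t\mapsto t\oplus_w y$. This establishes MDS-ness and the $(d=k+w-1,d_c=s-1)$-repair property with $(1+\frac{d_c(d-k)}{d})$-optimal repair bandwidth.

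I expect the only delicate point to be the range estimate $E(i,u)\le q-2$: this is exactly what legitimizes identifying equality of the scalars $\xi'_{i,u}$ with equality of their exponents, and it is precisely where the field-size bound $q>\lceil\frac{s}{w}\rceil w\bar{n}$ is consumed. No analogue of the $w\mid(q-1)$ restriction of $\mathcal{C}_2$ is needed because the rotation by $y$ is now carried out additively in the exponent via $\oplus_w$ rather than multiplicatively through a $w$-th root of unity $\delta=c^{(q-1)/w}$; everything else is the digit bookkeeping above together with the fact that the proof of Theorem~\ref{Thm C1} is insensitive to the provenance of the $\xi$'s.
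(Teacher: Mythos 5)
Your argument is correct and matches the route the paper has in mind: the paper dispatches Theorem~\ref{Thm_C2'} with the single remark that it follows ``by applying similar proofs of Theorems~\ref{Thm C1} and \ref{Eqn_C1_field},'' and you carry out exactly that plan --- observe that the proof of Theorem~\ref{Thm C1} depends on the $\xi$'s only through the diagonal block structure and the distinctness Conditions~i) and ii), never through the factorization $\xi_{i,u}=x_i\lambda_{\bar{i},u}$, and then verify those conditions for $\xi'_{i,u}$. Your verification via the exponent map $E(i,u)=zw\bar{n}+\bar{i}w+(u\oplus_w y)$ and a mixed-radix digit argument is in fact cleaner and more unified than the case-by-case quotient analysis in the proof of Theorem~\ref{Eqn_C1_field}, and it correctly isolates where the bound $q>\lceil s/w\rceil w\bar{n}$ is used (namely $E(i,u)\le q-2$ so that equality of powers of $c$ forces equality of exponents) and why the $w\mid(q-1)$ constraint disappears (additive rotation by $\oplus_w$ replaces the multiplicative $w$-th root of unity $\delta$).
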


\section{An MDS array code $\mathcal{C}_3$ with the $(d,d_c)$-repair property}\label{sec:C2}
In this section, we construct an $(n=s\overline{n},k)$ MDS array code $\mathcal{C}_3$ with the $(d,d_c=s-1)$-repair property over a small finite field by applying the generic transformation to the $(\overline{n}, \overline{k})$ YB code 2 with $\overline{d}<\overline{n}-1$ in \cite{Barg1}, where $n-k=\overline{n}-\overline{k}$ and $d<n-1$. 

The $(\overline{n}, \overline{k})$ YB code 2 with $\overline{d}<\overline{n}-1$ in \cite{Barg1} is defined  in the form of \eqref{Eqn parity check eq} and \eqref{Eqn A power} with the sub-packetization level
$N=w^{\overline{n}}$, where $w=\overline{d}-\overline{k}+1$.  More precisely,
the parity-check matrix  $(\overline{A}_{t,i})_{t\in[0:r), i\in[0:\overline{n})}$ is defined by  \begin{equation}\label{Eqn_YB2_CodingM1}
\overline{A}_{t,i}=(\overline{A}_i)^t,~
\overline{A}_i=
\sum\limits_{a=0}^{N-1}\lambda_{i,a_i}e_a^{\top}e_{a(i,a_i\oplus_w 1)}, ~t\in [0: r), i\in [0:\overline{n})
\end{equation}
where $\oplus_w$ denotes addition modulo $w$,
 $a(i,u)$ is defined in \eqref{Eqn_ait}, $r=\overline{n}-\overline{k}$ and
\begin{eqnarray}\label{Eqn_lambda_C2}
\lambda_{i,a_i}=\left\{
\begin{array}{ll}
c^{i+1}, & \textrm{if $a_i =0$},\\
1, & \textrm{otherwise},
\end{array}
\right.
\end{eqnarray}
with $c$ being a primitive element of a   finite field with a size larger than $\overline{n}$.

The
repair matrices and select matrices of the YB code 2 in \cite{Barg1}   are respectively defined by
\begin{eqnarray}\label{Rqn_R_YB2}
\overline{R}_{i,j}=
V_{i,0}, i, j\in[0: \overline{n}), j\ne i,
\end{eqnarray}
and
\begin{eqnarray*}\label{Eqn Liu S}
\overline{S}_{i,t}=
V_{i,0},  i\in [0: \overline{n}), t\in [0: r),
\end{eqnarray*}
where $V_{i,0}$ is defined in \eqref{Eqn_Vt}.

Similarly, direct application of the generic transformation in Section \ref{sec:generic tran} to the YB code 2 would lead to an MDS array code over a large finite field. In the following, we construct the desired code over a much smaller finite field by carefully choosing $x_{t,j}$ in \eqref{Eqn general coding matrix}.

\begin{Construction}\label{Con C2}
Let $n=s\overline{n}$,   $k=n-r$, and let $c$ be a primitive element of the finite field $\mathbf{F}_q$.
Construct an $(n,k)$ array code $\mathcal{C}_3$ by applying the generic transformation in Section \ref{sec:generic tran} to the YB code 2, and setting
\begin{equation}\label{Eqn C2 x assi}
x_{t,i}=x_{i}^t
\end{equation}
in \eqref{Eqn general coding matrix} for some $x_i\in \mathbf{F}_q\setminus\{0\}$, where $i\in [0: n)$ and $t\in [0: r)$.
\end{Construction}

In what follows, we first analyze the structure of the code $\mathcal{C}_3
$.

According to \eqref{Eqn general coding matrix}, \eqref{Eqn_YB2_CodingM1},    and \eqref{Eqn C2 x assi}, the parity-check matrix $(A_{t,i})_{t\in[0:r), i\in[0:n)}$ of the code $\mathcal{C}_3$ is defined by
\begin{equation*}
  A_{t,i}=A_i^t, i\in [0: n), t\in [0:r),
\end{equation*}
with
\begin{equation}\label{Eqn Zigzag d A}
A_i=x_iA_{\overline{i}}=x_{i}\sum\limits_{a=0}^{N-1}\lambda_{\overline{i},a_{\overline{i}}}e_a^{\top}e_{a(\overline{i},a_{\overline{i}}\oplus_w 1)} \mbox{~for~} \overline{i}\in [0:\overline{n}).
\end{equation}

Clearly, for $i=u\overline{n}+\overline{i}\in [0: n)$, where $u\in [0:s)$ and $\overline{i}\in [0:\overline{n})$, we have
\begin{equation}\label{Eqn zigzag d power A}
  A_{i}^t= x_{i}^{t}\sum\limits_{a=0}^{N-1}\xi_{\overline{i},a_{\overline{i}},t}e_a^{\top}e_{a(\overline{i},a_{\overline{i}}\oplus_w t)}, t\in [0:r)
\end{equation}
where
\begin{equation}\label{Eqn_xi_prod}
\xi_{\overline{i},m,0}=1,~ \xi_{\overline{i},m,t}=\lambda_{\overline{i},m}\lambda_{\overline{i},m\oplus_w1}\cdots\lambda_{\overline{i},m\oplus_w (t-1)},~ m\in [0: w),~ t\in [1:r). 
\end{equation}
Then  by \eqref{Eqn_lambda_C2} and \eqref{Eqn_xi_prod}, we have $\xi_{\overline{i},m,w}=c^{\overline{i}+1}$ for all $m\in [0: w)$. Further, by \eqref{Eqn zigzag d power A}, we have
\begin{eqnarray}\label{Eqn zigzag d Aw}
 A_{i}^w=x_{i}^wc^{\overline{i}+1}I.
\end{eqnarray}

Before proving the general result of the code $\mathcal{C}_3$, we provide a motivating example below, which conveys the main idea of the general proof.

\begin{Example}
Based on the $(\overline{n}=5, \overline{k}=2)$ YB code 2 with $\overline{d}=3$ and $N=(\overline{d}-\overline{k}+1)^{\overline{n}}=2^5$, from Construction \ref{Con C2} and by setting $s=2$, we obtain an $(n=10, k=7)$ MDS array code $\mathcal{C}_3$ with $N=2^5$ and $d=8$.
Suppose Node $4$ fails, then Node $9$ is a compulsory helper node that should be contacted. Since $d=8$, one surviving node is not connected, W.L.O.G., say Node $0$, is not contacted. We claim that Node $4$ can be repaired by  downloading $V_{4,0}\mathbf{f}_j$ (i.e., $f_{j, 2i}$, $i\in [0: 16)$) for $j\in [1: 9)\setminus \{4\}$ and $\mathbf{f}_9$. Below we show how to repair $f_{4,0}$ and $f_{4,1}$, while the rest symbols $f_{4,j}$, $j\in [2: 32)$ can be regenerated similarly.

By \eqref{Eqn parity check eq} and \eqref{Eqn zigzag d power A}, the code subjects to the following parity-check equations
\begin{equation}\label{Eqn_Zig_ex_pce}
\sum\limits_{i=0}^{9}A_{i}^t\mathbf{f}_{i}=\mathbf{0}, ~t=0,1,2. 
\end{equation}

Multiply \eqref{Eqn_Zig_ex_pce} respectively by $e_0$ and $e_{16}$ from the left on both sides, and in conjunction with \eqref{Eqn_f_expa} and \eqref{Eqn zigzag d power A}, we obtain
the following equations
\begin{equation}\label{Eqn_PCG2-0_0th_bit}
e_m\sum\limits_{i=0}^{9}A_{i}^0\mathbf{f}_{i}= f_{0,m}+f_{1,m}+f_{2,m}+f_{3,m}+f_{4,m}+f_{5,m}+f_{6,m}+f_{7,m}+f_{8,m}+f_{9,m}=0, ~ m=0, 16.
\end{equation}
\begin{eqnarray} 
\nonumber e_0\sum\limits_{i=0}^{9}A_{i}\mathbf{f}_{i}&=&x_0cf_{0,16}+x_1c^2f_{1,8}+x_2c^3f_{2,4}+x_3c^4f_{3,2}+x_4c^5f_{4,1}+x_5cf_{5,16}\\
&&\label{Eqn_PCG2-1_0th_bit}+x_6c^2f_{6,8}+x_7c^3f_{7,4}+x_8c^4f_{8,2}+x_9c^5f_{9,1}=0,
\end{eqnarray}
\begin{eqnarray}
\nonumber e_m\sum\limits_{i=0}^{9}A_{i}^2\mathbf{f}_{i}&=&  x_0^2cf_{0,m}+x_1^2c^2f_{1,m}+x_2^2c^3f_{2,m}+x_3^2c^4f_{3,m}+x_4^2c^5f_{4,m}+x_5^2cf_{5,m}\\
&&\label{Eqn_PCG2-2_0th_bit}+x_6^2c^2f_{6,m}+x_7^2c^3f_{7,m}+x_8^2c^4f_{8,m}+x_9^2c^5f_{9,m}=0,~  m=0, 16.
\end{eqnarray}

Multiply \eqref{Eqn_PCG2-0_0th_bit} with $x_4^2c^5$ (i.e., coefficient of $f_{4,m}$ in \eqref{Eqn_PCG2-2_0th_bit}) and then subtracting from \eqref{Eqn_PCG2-2_0th_bit}, we obtain the following equations
{\small
\begin{eqnarray}
\nonumber &&(x_4^2c^5-x_0^2c)f_{0,m}+(x_4^2c^5-x_1^2c^2)f_{1,m}+(x_4^2c^5-x_2^2c^3)f_{2,m}+(x_4^2c^5-x_3^2c^4)f_{3,m}+(x_4^2c^5-x_5^2c)f_{5,m}\\
\label{Eqn_zigzag_ex_eq} &&+(x_4^2c^5-x_6^2c^2)f_{6,m}+(x_4^2c^5-x_7^2c^3)f_{7,m}+(x_4^2c^5-x_8^2c^4)f_{8,m}+(x_4^2c^5-x_9^2c^5)f_{9,m}=0,~m=0,16.
\end{eqnarray}
}From \eqref{Eqn_zigzag_ex_eq}, $f_{0,0}$ and $f_{0,16}$ (symbols in the unconnected node) can be obtained if $x_4^2c^5-x_0^2c\ne 0$, since the other data in \eqref{Eqn_zigzag_ex_eq} can be determined from the downloaded data.
With the downloaded data from the helper nodes, the obtained data $f_{0,0}$ and $f_{0,16}$ from the unconnected nodes, $f_{4,0}$ and $f_{4,1}$ can be obtained by solving \eqref{Eqn_PCG2-0_0th_bit} and  \eqref{Eqn_PCG2-1_0th_bit}, respectively. The remaining symbols in $\mathbf{f}_4$ can be regenerated similarly.
\end{Example}

From now on, W.L.O.G., we consider the repair of node $u\overline{n}+i$ for any given $u\in [0: s)$ and $i\in [0: \overline{n})$.

For $v\in [0: s)$,
define $N/w\times N/w$ matrices
\begin{equation}\label{Eqn_B_le_i}
B_{v(\overline{n}-1)+j} =\left\{
\begin{array}{ll}
\sum\limits_{a=0}^{N/w-1}x_{v\overline{n}+j}\lambda_{j,a_j}(e_a^{(N/w)})^{\top}e_{a(j,a_j+1)}^{(N/w)}, & j\in [0:i),\\
\sum\limits_{a=0}^{N/w-1}x_{v\overline{n}+j}\lambda_{j+1,a_j}(e_a^{(N/w)})^{\top}e_{a(j,a_j+1)}^{(N/w)}, &  j\in [i:\overline{n}-1),
\end{array}
\right.
\end{equation}
where $e_0^{(N/w)}, e_1^{(N/w)}, \ldots, e_{N/w-1}^{(N/w)}$ are the standard basis vectors of $\mathbf{F}_q^{N/w}$ over $\mathbf{F}_q$. The matrices $B_{v(\overline{n}-1)+j}$ have a similar structure to the matrices in \eqref{Eqn Zigzag d A}, but with a different order. 
We can then state the following results.

\begin{Lemma}\label{le_BB}
For any $v_0, v_1\in [0: s)$ and $j_0,j_1\in [0:\overline{n}-1)$ with $(v_0, j_0)\ne (v_1, j_1)$, we have $$B_{v_0(\overline{n}-1)+j_0}B_{v_1(\overline{n}-1)+j_1}=B_{v_1(\overline{n}-1)+j_1}B_{v_0(\overline{n}-1)+j_0}$$ and $B_{v_0(\overline{n}-1)+j_0}-B_{v_1(\overline{n}-1)+j_1}$ is non-singular if
\begin{equation*}
\left\{
\begin{array}{ll}
x_{v_0\overline{n}+j_0}^w\prod\limits_{t=0}^{w-1}\lambda_{j_0,t}\ne x_{v_1\overline{n}+j_1}^w\prod\limits_{t=0}^{w-1}\lambda_{j_1,t}, & \mbox{if}~ j_0,  j_1\in [0: i),\\ 
x_{v_0\overline{n}+j_0}^w\prod\limits_{t=0}^{w-1}\lambda_{j_0+1,t}\ne x_{v_1\overline{n}+j_1}^w\prod\limits_{t=0}^{w-1}\lambda_{j_1+1,t}, & \mbox{if}~ j_0,  j_1\in [i: \overline{n}-1),\\ 
x_{v_0\overline{n}+j_0}^w\prod\limits_{t=0}^{w-1}\lambda_{j_0,t}\ne x_{v_1\overline{n}+j_1}^w\prod\limits_{t=0}^{w-1}\lambda_{j_1+1,t}, & \mbox{if}~ j_0 \in [0: i),~ j_1\in [i: \overline{n}-1),
\end{array}
\right.
\end{equation*}
i.e.,
\begin{equation*}
\left\{
\begin{array}{ll}
x_{v_0\overline{n}+j_0}^wc^{j_0}\ne x_{v_1\overline{n}+j_1}^wc^{j_1}, & \mbox{if}~ j_0,  j_1\in [0: i) \mbox{~or}~ j_0,  j_1\in [i: \overline{n}-1),\\ 
x_{v_0\overline{n}+j_0}^wc^{j_0}\ne x_{v_1\overline{n}+j_1}^wc^{j_1+1}, & \mbox{if}~ j_0 \in [0: i),~ j_1\in [i: \overline{n}-1),
\end{array}
\right.
\end{equation*}
by \eqref{Eqn_lambda_C2}.
\end{Lemma}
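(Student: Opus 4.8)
The plan is to view each $B_{v(\overline{n}-1)+j}$ as a scaled ``cyclic shift on a single base-$w$ digit'' matrix, exactly of the Ye--Barg type used for YB code 2 in \cite{Barg1}, and then to reuse the two algebraic facts that drive the MDS argument there: pairwise commutativity of these matrices, and invertibility of their pairwise differences. The latter I will deduce from the fact that a $w$-th power of each such matrix is a scalar multiple of $I$, so the lemma's hypothesis is precisely the statement that two such scalars are distinct. This gives the matrices the structure needed later to invoke the block-Vandermonde criterion (Lemma \ref{Lemma block matrix}).

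First I would compute the $w$-th power. For $j\in[0:\overline{n}-1)$ set $\pi(j)=j$ when $j\in[0:i)$ and $\pi(j)=j+1$ when $j\in[i:\overline{n}-1)$; then \eqref{Eqn_B_le_i} reads $B_{v(\overline{n}-1)+j}=x_{v\overline{n}+j}\sum_{a=0}^{N/w-1}\lambda_{\pi(j),a_j}(e_a^{(N/w)})^{\top}e_{a(j,a_j\oplus_w1)}^{(N/w)}$, i.e.\ a monomial matrix shifting the $j$-th digit by one with a scalar depending only on that digit. An easy induction gives $B_{v(\overline{n}-1)+j}^{t}[a,a(j,a_j\oplus_w t)]=x_{v\overline{n}+j}^{t}\prod_{l=0}^{t-1}\lambda_{\pi(j),a_j\oplus_w l}$, so for $t=w$ the digit returns to $a_j$ and $B_{v(\overline{n}-1)+j}^{w}=\bigl(x_{v\overline{n}+j}^{w}\prod_{m=0}^{w-1}\lambda_{\pi(j),m}\bigr)I=x_{v\overline{n}+j}^{w}c^{\pi(j)+1}I$ by \eqref{Eqn_lambda_C2}. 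The three cases of the lemma correspond exactly to the three possibilities for $(\pi(j_0),\pi(j_1))$, and in each the displayed hypothesis is, after cancelling the common factor $c$, precisely $B_{v_0(\overline{n}-1)+j_0}^{w}\ne B_{v_1(\overline{n}-1)+j_1}^{w}$.

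Next I would verify commutativity by a direct entrywise computation, since these are monomial matrices. When $j_0\ne j_1$, for each row $a$ the product $B_{v_0(\overline{n}-1)+j_0}B_{v_1(\overline{n}-1)+j_1}$ has its single nonzero entry in column $a(j_0,a_{j_0}\oplus_w1)(j_1,a_{j_1}\oplus_w1)$ with value $x_{v_0\overline{n}+j_0}\lambda_{\pi(j_0),a_{j_0}}\cdot x_{v_1\overline{n}+j_1}\lambda_{\pi(j_1),a_{j_1}}$, because changing digit $j_0$ does not alter digit $j_1$; the product in the opposite order lands in the same column (digit changes at distinct positions commute) with the two scalars merely reordered. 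When $j_0=j_1=j$ (forcing $v_0\ne v_1$, a situation that does not arise for the single-copy YB code 2), row $a$ of either product has its nonzero entry in column $a(j,a_j\oplus_w2)$ with value $x_{v_0\overline{n}+j}\lambda_{\pi(j),a_j}\cdot x_{v_1\overline{n}+j}\lambda_{\pi(j),a_j\oplus_w1}$ up to the order of the scalars. Hence $B_{v_0(\overline{n}-1)+j_0}B_{v_1(\overline{n}-1)+j_1}=B_{v_1(\overline{n}-1)+j_1}B_{v_0(\overline{n}-1)+j_0}$ in all cases.

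Finally, write $B=B_{v_0(\overline{n}-1)+j_0}$, $C=B_{v_1(\overline{n}-1)+j_1}$, so $B^{w}=\alpha I$ and $C^{w}=\beta I$ with $\alpha\ne\beta$ by the hypothesis. Since $B$ and $C$ commute, substituting them into $x^{w}-y^{w}=(x-y)\sum_{t=0}^{w-1}x^{t}y^{w-1-t}$ yields $(B-C)\sum_{t=0}^{w-1}B^{t}C^{w-1-t}=B^{w}-C^{w}=(\alpha-\beta)I$, which is invertible; thus $B-C$ has a right inverse and, being square, is non-singular. I expect the only real bookkeeping obstacle to be keeping the three $(j_0,j_1)$ range-cases (and the $j_0=j_1$ commutativity subcase, absent in the base construction) straight; the algebraic heart of the argument---commuting matrices whose $w$-th powers are distinct scalars have non-singular difference---is short and standard.
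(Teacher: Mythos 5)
Your proof is correct and reproduces precisely the argument the paper delegates to \cite[Theorem~15]{Barg1}: each $B_{v(\overline{n}-1)+j}$ is a monomial (digit-cyclic-shift) matrix whose $w$-th power is a scalar multiple of $I$, pairwise commutativity is checked entrywise, and non-singularity of the difference follows from the factorization $B^w-C^w=(B-C)\sum_{t=0}^{w-1}B^tC^{w-1-t}$ once the two scalars (which is exactly what the lemma's hypothesis compares) are distinct. The only point worth flagging is that the $a_j+1$ appearing in \eqref{Eqn_B_le_i} should indeed be read as $a_j\oplus_w 1$, which you correctly do.
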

\begin{proof}
The proof is similar to that of \cite[Theorem 15]{Barg1}. Therefore, we omit it here.
\end{proof}

\begin{Lemma}\label{lem VAf}
For $i,j\in [0:\overline{n})$, $v\in [0:s)$, and $m\in [0:r-w)$, we have
\begin{eqnarray*}
  V_{i,0}A_{v\overline{n}+j}^m=\left\{
\begin{array}{ll}
B_{v(\overline{n}-1)+j}^mV_{i,0}, &\mbox{~if~}j<i,\\
B_{v(\overline{n}-1)+j-1}^mV_{i,0},&\mbox{~if~}j>i.
\end{array}
\right.
\end{eqnarray*}
\end{Lemma}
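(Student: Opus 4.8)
The plan is to prove the identity \textbf{one row at a time}, using that $V_{i,0}$ is the ``insert a $0$ in digit $i$'' operator --- by \eqref{Eqn re Vu[a] for liu} its $a$-th row ($a\in[0:N/w)$) is $e_{g_{i,0}(a)}$ --- and that both $A_{v\overline{n}+j}^{m}$ and $B_{v(\overline{n}-1)+j}^{m}$ are ``zigzag'' matrices with exactly one nonzero entry per row. First I would record the explicit power formulas. Equation \eqref{Eqn zigzag d power A} gives
\[ A_{v\overline{n}+j}^{m}=x_{v\overline{n}+j}^{m}\sum_{a=0}^{N-1}\xi_{j,a_j,m}\,e_a^{\top}e_{a(j,\,a_j\oplus_w m)}, \]
and, since the matrices in \eqref{Eqn_B_le_i} are the reduced-space ($\mathbf{F}_q^{N/w}$) analogues of the $A$'s with the digit position $i$ deleted --- so $B_{v(\overline{n}-1)+j}$ plays the role of node $v\overline{n}+j$ when $j<i$ and of node $v\overline{n}+(j+1)$ when $j\ge i$ --- the same computation yields the analogous formula for $B_{v(\overline{n}-1)+j}^{m}$ over $\mathbf{F}_q^{N/w}$.

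Next, fix $a\in[0:N/w)$ and compute the $a$-th row of $V_{i,0}A_{v\overline{n}+j}^{m}$, which by \eqref{Eqn e_ae_bT} equals $e_{g_{i,0}(a)}A_{v\overline{n}+j}^{m}=x_{v\overline{n}+j}^{m}\,\xi_{j,(g_{i,0}(a))_j,m}\,e_{(g_{i,0}(a))(j,\,(g_{i,0}(a))_j\oplus_w m)}$. Then I apply the bookkeeping identities \eqref{Eqn ga b} and \eqref{Eqn ga b (j,v)}: when $j<i$ they give $(g_{i,0}(a))_j=a_j$ and $(g_{i,0}(a))(j,v)=g_{i,0}(a(j,v))$, so the row simplifies to $x_{v\overline{n}+j}^{m}\,\xi_{j,a_j,m}\,e_{g_{i,0}(a(j,\,a_j\oplus_w m))}$; when $j>i$ they give $(g_{i,0}(a))_j=a_{j-1}$ and $(g_{i,0}(a))(j,v)=g_{i,0}(a(j-1,v))$, so the row simplifies to $x_{v\overline{n}+j}^{m}\,\xi_{j,a_{j-1},m}\,e_{g_{i,0}(a(j-1,\,a_{j-1}\oplus_w m))}$. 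On the other side, the $a$-th row of $B_{v(\overline{n}-1)+j}^{m}V_{i,0}$ (resp.\ of $B_{v(\overline{n}-1)+j-1}^{m}V_{i,0}$) is the scalar appearing in row $a$ of the relevant $B$-power times a row of $V_{i,0}$, and by \eqref{Eqn re Vu[a] for liu} that row of $V_{i,0}$ is again of the form $e_{g_{i,0}(\cdot)}$. Matching the scalar, the active digit position, and the argument shift against the two displayed expressions then gives equality row by row, hence of the two matrices.

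The argument is essentially pure bookkeeping, so there is no deep obstacle; the one place demanding care is the index accounting on the $B$-side --- aligning the digit position ($j$ versus $j-1$), the $\xi/\lambda$ subscript, and the node index carried by the coefficient $x$ in \eqref{Eqn_B_le_i} with the shift that $g_{i,0}$ induces, which is exactly why the cases $j<i$ and $j>i$ must be handled separately. The hypothesis $m\in[0:r-w)$ plays no role in the identity itself beyond keeping us inside the range where \eqref{Eqn zigzag d power A} and \eqref{Eqn_xi_prod} are defined; it is imposed because that is the range of exponents for which the lemma is invoked in the repair analysis.
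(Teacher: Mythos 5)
Your proposal takes the same route as the paper's own proof: expand $V_{i,0}A_{v\overline{n}+j}^{m}$ using the power formula \eqref{Eqn zigzag d power A} and the row description of $V_{i,0}$ in \eqref{Eqn re Vu[a] for liu}, do the analogous expansion of $B^{m}V_{i,0}$, and reconcile the digit indices via \eqref{Eqn ga b} and \eqref{Eqn ga b (j,v)} separately for $j<i$ and $j>i$; the paper's Appendix B proof is exactly this (its sums over $b$ are your ``rows'').

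One wrinkle sits at the very spot you flag as ``the one place demanding care'' but do not carry out. For $j>i$, the left side $V_{i,0}A_{v\overline{n}+j}^{m}$ carries the scalar $x_{v\overline{n}+j}^{m}$, while by the definition \eqref{Eqn_B_le_i} (second branch, applied at index $j-1\in[i:\overline{n}-1)$) the matrix $B_{v(\overline{n}-1)+j-1}$ carries $x_{v\overline{n}+j-1}$, so $B_{v(\overline{n}-1)+j-1}^{m}$ contributes $x_{v\overline{n}+j-1}^{m}$. These are equal only when $x_{v\overline{n}+j-1}=x_{v\overline{n}+j}$, i.e.\ when the $x$'s are constant on each block of $\overline{n}$ consecutive nodes. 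That constancy does hold under the explicit choice $x_{u\overline{n}+i}=c^{u\lceil\overline{n}/w\rceil}$ made in Theorem \ref{Thm_C2_field}, but it is not a stated hypothesis of the lemma; the paper's own displays \eqref{Eqn LHS} and \eqref{Eqn RHS2} exhibit precisely this $x_{v\overline{n}+j}^{m}$ versus $x_{v\overline{n}+j-1}^{m}$ mismatch and conclude equality without comment. So your argument is at the same level of rigor as the paper's, but a careful write-up should either record the block-constancy of the $x_i$'s as an assumption, or shift the $x$-index in the second branch of \eqref{Eqn_B_le_i} to $x_{v\overline{n}+j+1}$ and propagate that change through Lemma \ref{le_BB}.
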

\begin{proof}
The proof is given in Appendix \ref{sec:pfC2}.
\end{proof}

\begin{Theorem}\label{Thm_C2}
The  code $\mathcal{C}_3$ in Construction \ref{Con C2} is an $(n=s\overline{n}, k)$ MDS array code over $\mathbf{F}_q$ with the $(d=n-\overline{n}+\overline{d}, d_c=s-1)$ repair property  and $(1+\frac{d_c(d-k)}{d})$-optimal repair bandwidth
if for any given $u\in [0: s)$ and $i\in [0: \overline{n})$, the following conditions can be satisfied:
\begin{itemize}
\item i) For $v_0, v_1\in [0: s)$ and $j_0,j_1\in [0:\overline{n}-1)$ with $(v_0, j_0)\ne (v_1, j_1)$,
\begin{equation*}
\left\{
\begin{array}{ll}
x_{v_0\overline{n}+j_0}^wc^{j_0}\ne x_{v_1\overline{n}+j_1}^wc^{j_1}, & \mbox{if}~ j_0,  j_1\in [0: i) \mbox{~or}~ j_0,  j_1\in [i: \overline{n}-1),\\ 
x_{v_0\overline{n}+j_0}^wc^{j_0}\ne x_{v_1\overline{n}+j_1}^wc^{j_1+1}, & \mbox{if}~ j_0 \in [0: i),~ j_1\in [i: \overline{n}-1),
\end{array}
\right.
\end{equation*}
\item ii) For $v\in [0: s)$, 
$
x_{v\overline{n}+j}^wc^{j}-x_{u\overline{n}+i}^wc^{i}\ne 0
$
if $j\in [0: i)$ and
$
 x_{v\overline{n}+j+1}^wc^{j+1}-x_{u\overline{n}+i}^wc^{i}\ne 0
$
if $j\in [i: \overline{n}-1)$.
\end{itemize}
\end{Theorem}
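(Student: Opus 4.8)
The plan is to treat the MDS property and the repair property of Theorem~\ref{Thm_C2} separately. For the MDS property: since $A_{t,i}=A_i^t$ by \eqref{Eqn zigzag d power A}, Lemma~\ref{Lemma block matrix} reduces the claim --- exactly as for YB code~2 in \cite{Barg1} --- to showing that the $A_i$ pairwise commute and that $A_i-A_j$ is non-singular for all $i\ne j$. Commutativity is immediate from \eqref{Eqn Zigzag d A}: matrices with $i\equiv j\bmod\overline n$ are scalar multiples of one another, while for $i\not\equiv j\bmod\overline n$ the two weighted cyclic shifts $A_{i\%\overline n}$ and $A_{j\%\overline n}$ act on disjoint digits of the $w$-ary index. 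For the non-singularity of $A_i-A_j$: if $(A_i-A_j)v=0$ with $v\ne 0$, then $A_iv=A_jv$, and iterating this identity $w$ times using commutativity gives $A_i^wv=A_j^wv$, so by \eqref{Eqn zigzag d Aw} we must have $x_i^wc^{\overline i+1}=x_j^wc^{\overline j+1}$; hence it suffices that $x_i^wc^{\overline i}\ne x_j^wc^{\overline j}$ for all $i\ne j$, which is precisely condition~i).

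For the repair property, Theorem~\ref{Thm general repair} reduces matters to verifying R3 for $\mathcal C_3$, since R1 and R2 already hold for the base YB code~2 \cite{Barg1}. Fix a failed node $u\overline n+i$. As prescribed by the transformation I download the full content of the $s-1$ compulsory helpers $v\overline n+i$ ($v\ne u$) and $V_{i,0}\mathbf f_j$ from $d-(s-1)$ freely chosen helpers with index $\not\equiv i\bmod\overline n$; let $L$, with $|L|=r-w$, be the remaining unconnected nodes (all with index $\not\equiv i\bmod\overline n$). By \eqref{Eqn general coding matrix}--\eqref{Eqn general S}, \eqref{Eqn C2 x assi} and Lemma~\ref{lem VAf}, for every node $j$ with index $\not\equiv i\bmod\overline n$ one has $S_{i,t}A_{t,j}=V_{i,0}A_j^t=B_{\pi(j)}^tV_{i,0}$, where $\pi(j)\in[0:\overline n-1)$ is the relabelled index of \eqref{Eqn_B_le_i}; the compulsory helpers contribute known quantities; and for the failed node the $r$ vectors $V_{i,0}A_{u\overline n+i}^t\mathbf f_{u\overline n+i}$, $t\in[0:r)$, span only a $w$-dimensional space, because $A_{u\overline n+i}^w=\rho I$ with $\rho:=x_{u\overline n+i}^wc^{i+1}$ by \eqref{Eqn zigzag d Aw}, so that $V_{i,0}A_{u\overline n+i}^{t+w}\mathbf f_{u\overline n+i}=\rho\,V_{i,0}A_{u\overline n+i}^{t}\mathbf f_{u\overline n+i}$.

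Substituting this into \eqref{Eqn repairment gene} and moving the known (compulsory-helper) terms to the right-hand side yields, coordinate by coordinate, an $r$-block-row system --- the $\mathcal C_3$-analogue of \eqref{Eqn proving R3 of C3} --- whose unknowns are a $w$-block combination of $\mathbf f_{u\overline n+i}$ together with $\{V_{i,0}\mathbf f_l\}_{l\in L}$. I then multiply on the left by the $(r-w)\times r$ matrix $P$ whose row $t'$ holds the coefficients of $p_{t'}(x)=x^{t'}(x^w-\rho)$ (equivalently $x^{t'}\prod_{\eta}(x-\eta)$ over the $w$-th roots $\eta$ of $\rho$), $t'\in[0:r-w)$; this is the analogue of the matrix in Lemma~\ref{le_pre_c1}. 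Using $A_{u\overline n+i}^w=\rho I$ one checks directly that $P$ annihilates the failed-node block. What remains is an $(r-w)\times(r-w)$ block system whose $(t',l)$-block is $\bigl(B_{\pi(l)}^w-\rho I\bigr)B_{\pi(l)}^{t'}=(\rho_l-\rho)\,B_{\pi(l)}^{t'}$, where $B_{\pi(l)}^w=\rho_l I$; this factors as a block Vandermonde in the matrices $\{B_{\pi(l)}\}_{l\in L}$ times $\mathrm{diag}\bigl((\rho_l-\rho)I\bigr)_{l\in L}$. The block Vandermonde is non-singular by Lemma~\ref{Lemma block matrix}, since the $B_{\pi(l)}$ commute and have pairwise non-singular differences by Lemma~\ref{le_BB} (which is guaranteed by condition~i)), and the diagonal factor is non-singular since $\rho_l\ne\rho$ for every $l$ by condition~ii). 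Hence $\{V_{i,0}\mathbf f_l\}_{l\in L}$ is recoverable from the downloaded data, R3 holds, and Theorem~\ref{Thm general repair} delivers the $(d,d_c)$-repair property with $\bigl(1+\tfrac{d_c(d-k)}{d}\bigr)$-optimal repair bandwidth.

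I expect the main obstacle to be the bookkeeping that connects the transformed operators $B_{\pi(l)}$ and their scalars $\rho_l$ to the precise algebraic inequalities in conditions~i) and~ii): one must unwind the relabelling $\pi$ of the $\overline n-1$ surviving digit-coordinates onto $[0:\overline n-1)$ (the split at $i$, and the $j\mapsto j$ versus $j\mapsto j-1$ shift implicit in \eqref{Eqn_B_le_i} and Lemma~\ref{lem VAf}), check that the resulting $B_{\pi(l)}$ indeed commute and satisfy $B_{\pi(l)}^w=\rho_l I$ with $\rho_l$ equal to the corresponding scalar of Lemma~\ref{le_BB}, and confirm that conditions~i) and~ii) --- in each of their two stated cases, and for every admissible failed node and every admissible choice of $L$ --- cover exactly the distinctness ``$\rho_l$ pairwise distinct'' and ``$\rho_l\ne\rho$'' that the argument uses. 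The other delicate point, which the argument above handles via $A_{u\overline n+i}^w=\rho I$, is making rigorous that the failed node contributes only a $w$-dimensional subspace to the $r$ parity-check equations, so that the degree-$w$ annihilator $x^w-\rho$, and hence $P$, is enough to eliminate it.
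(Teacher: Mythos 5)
Your proof is correct and tracks the paper's argument essentially step for step: the paper likewise reduces MDS to Lemma~\ref{Lemma block matrix} via commutativity of the $A_i$ and non-singularity of $A_i-A_j$ from $A_i^w=\rho_i I$ (citing Theorem~15 of \cite{Barg1} for exactly your contrapositive-iteration argument), and for repair it performs precisely the row operation your matrix $P$ encodes, namely subtracting $x_{u\overline n+i}^w c^{i+1}$ times the $m$-th parity-check group from the $(m+w)$-th for $m\in[0:r-w)$, then passes through Lemma~\ref{lem VAf} to obtain the block-Vandermonde-times-diagonal system \eqref{Eqn zigzag d reduced eq} and finishes with Lemma~\ref{Lemma block matrix}, Lemma~\ref{le_BB}, and condition~ii). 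The one inaccuracy is your remark that the MDS distinctness $x_{i_0}^w c^{\overline{i_0}}\ne x_{i_1}^w c^{\overline{i_1}}$ is ``precisely condition i)'': condition~i) ranges only over $j_0,j_1\in[0:\overline n-1)$ (the $B$-block scalars for a fixed failed coordinate $i$), so by itself it omits comparisons involving the failed node's own scalar; as the paper states, one needs both i) and ii), quantified over all $(u,i)$, to cover every pair $i_0\ne i_1\in[0:n)$. Your phrase that the $r$ vectors $V_{i,0}A_{u\overline n+i}^t\mathbf f_{u\overline n+i}$ ``span only a $w$-dimensional space'' is also loosely worded (they are $N/w$-dimensional vectors), though the underlying point --- that $x^w-\rho$ annihilates $A_{u\overline n+i}$, so $P$ kills the failed-node block --- is stated correctly immediately after.
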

\begin{proof}
Similar to the proof of Theorem 15 in \cite{Barg1}, we can derive that $A_{v_0\overline{n}+j_0}A_{v_1\overline{n}+j_1}=A_{v_1\overline{n}+j_1}A_{v_0\overline{n}+j_0}$ and $A_{v_0\overline{n}+j_0}-A_{v_1\overline{n}+j_1}$ is non-singular 
if 
\begin{equation*}\label{Eqn_C2_con1}
x_{v_0\overline{n}+j_0}^w\prod\limits_{t=0}^{w-1}\lambda_{j_0,t}\ne x_{v_1\overline{n}+j_1}^w\prod\limits_{t=0}^{w-1}\lambda_{j_1,t},
\end{equation*}
holds
for all $v_0,v_1\in [0: s)$ and $j_0, j_1\in [0: \overline{n})$ with $(v_0, j_0)\ne (v_1, j_1)$, which can be satisfied if i) and ii) hold. Therefore, by Lemma \ref{Lemma block matrix}, we have that
the code $\mathcal{C}_3$  defined in Construction \ref{Con C2} possesses the MDS property if i) and ii) are satisfied.

To verify the repair property, we only need to show that R3 holds for the code $\mathcal{C}_3$ according to Theorem \ref{Thm general repair} since R1 and R2 hold for YB code 2 \cite{Barg1}. Suppose that node $u\overline{n}+i$ fails, where $i\in [0:\overline{n}),~u\in [0:s)$, we connect node $v\overline{n}+i$ for all $v\in [0:s)\setminus\{u\}$ and any other $d-d_c$ surviving nodes during the repair process, and download $R_{u\overline{n}+i,v\overline{n}+j}\mathbf{f}_j$ from node $j$ that is connected. By  \eqref{Eqn general R} and \eqref{Rqn_R_YB2}, we have $R_{u\overline{n}+i,v\overline{n}+j}=I$ if $j=i$ and $R_{u\overline{n}+i,v\overline{n}+j}=\overline{R}_{i,j}=V_{i,0}$ otherwise.  That is, we download $\mathbf{f}_{v\overline{n}+i}$  for all $v\in [0:s)\setminus\{u\}$ and $V_{i,0}\mathbf{f}_{v\overline{n}+j}$ for any $d-d_c$ distinct pairs of $(v, j)$ with $v\in [0:s), j\in [0:\overline{n})\setminus\{i\}$. To prove R3 holds for   the code $\mathcal{C}_3$, it suffices to   prove that with the data $\mathbf{f}_{v\overline{n}+i}$, $v\in [0:s)\setminus\{u\}$, any $d-d_c=w+k-s$ elements from the set  $$\{V_{i,0}\mathbf{f}_{v\overline{n}+j}|v\in [0:s), j\in [0:\overline{n})\setminus\{i\}\}$$ can recover the remaining $r-w$ elements in the set.

 For $m\in [0:r-w)$, by \eqref{Eqn parity check eq}, \eqref{Eqn A power}, \eqref{Eqn Zigzag d A}, and \eqref{Eqn zigzag d Aw}, we have
\begin{equation}\label{Eqn_1st_eq}
\sum\limits_{j=0}^{n-1}A_j^m\mathbf{f}_j=\sum\limits_{v=0}^{s-1} \sum\limits_{j=0}^{\overline{n}-1}A_{v\overline{n}+j}^m\mathbf{f}_{v\overline{n}+j}=0,
\end{equation}
and
\begin{equation}\label{Eqn_2st_eq}
\sum\limits_{j=0}^{n-1}A_j^{m+w}\mathbf{f}_j=\sum\limits_{v=0}^{s-1} \sum\limits_{j=0}^{\overline{n}-1}A_{v\overline{n}+j}^{m+w}\mathbf{f}_{v\overline{n}+j}=\sum\limits_{v=0}^{s-1} \sum\limits_{j=0}^{\overline{n}-1}x_{v\overline{n}+j}^wc^{j+1}A_{v\overline{n}+j}^{m}\mathbf{f}_{v\overline{n}+j}=0.
\end{equation}
Multiply \eqref{Eqn_1st_eq} by $x_{u\overline{n}+i}^wc^{i+1}$ from the left on both sides and then subtract it from \eqref{Eqn_2st_eq}, we obtain
\begin{equation*}
 c \sum\limits_{v=0}^{s-1} \sum\limits_{j=0}^{\overline{n}-1}(x_{v\overline{n}+j}^wc^{j}-x_{u\overline{n}+i}^wc^{i})A_{v\overline{n}+j}^m\mathbf{f}_{v\overline{n}+j}=0,
\end{equation*}
which implies
\begin{equation*}
  \sum\limits_{v=0}^{s-1} \sum\limits_{j=0,j\ne i}^{\overline{n}-1}(x_{v\overline{n}+j}^wc^{j}-x_{u\overline{n}+i}^wc^{i})A_{v\overline{n}+j}^m\mathbf{f}_{v\overline{n}+j}=-c^i\sum\limits_{v=0,v\ne u}^{s-1}(x_{v\overline{n}+i}^w-x_{u\overline{n}+i}^w)A_{v\overline{n}+i}^m\mathbf{f}_{v\overline{n}+i}=\kappa_{*},
\end{equation*}
where $\kappa_{*}$ denotes some known data since $\mathbf{f}_{v\overline{n}+i}$  is downloaded for all $v\in [0:s)\setminus\{u\}$.
Multiply the above equation by $S_{u\overline{n}+i}=\overline{S}_i=V_{i,0}$ from the left on both sides, we obtain
 \begin{equation*}
  \sum\limits_{v=0}^{s-1} \sum\limits_{j=0,j\ne i}^{\overline{n}-1}(x_{v\overline{n}+j}^wc^{j}-x_{u\overline{n}+i}^wc^{i})V_{i,0}A_{v\overline{n}+j}^m\mathbf{f}_{v\overline{n}+j}=V_{i,0}\kappa_{*},
\end{equation*}
which in conjunction with Lemma \ref{lem VAf} leads to
\begin{equation*}
  \sum\limits_{v=0}^{s-1}\left(\sum\limits_{j=0}^{i-1}(x_{v\overline{n}+j}^wc^{j}-x_{u\overline{n}+i}^wc^{i})B_{v(\overline{n}-1)+j}^mV_{i,0}\mathbf{f}_{v\overline{n}+j}+\sum\limits_{j=i}^{\overline{n}-2}(x_{v\overline{n}+j+1}^wc^{j+1}-x_{u\overline{n}+i}^wc^{i})B_{v(\overline{n}-1)+j}^m V_{i,0}\mathbf{f}_{v\overline{n}+j+1}\right)=V_{i,0}\kappa_{*},
\end{equation*}
for $m\in [0:r-w)$,
i.e.,
\begin{eqnarray}\label{Eqn zigzag d reduced eq}
   \underbrace{\begin{psmallmatrix}
      (x_0^w-x_{u\overline{n}+i}^wc^{i})I & (x_1^wc-x_{u\overline{n}+i}^wc^{i})I & \cdots & (x_{s\overline{n}-1}^wc^{\overline{n}-1}-x_{u\overline{n}+i}^wc^{i})I \\
      (x_0^w-x_{u\overline{n}+i}^wc^{i})B_0 & (x_1^wc-x_{u\overline{n}+i}^wc^{i})B_1 & \cdots & (x_{s\overline{n}-1}^wc^{\overline{n}-1}-x_{u\overline{n}+i}^wc^{i})B_{s(\overline{n}-1)-1} \\
      \vdots & \vdots & \ddots & \vdots \\
      (x_0^w-x_{u\overline{n}+i}^wc^{i})B_0^{r-w-1} & (x_1^wc-x_{u\overline{n}+i}^wc^{i})B_1^{r-w-1} & \cdots & (x_{s\overline{n}-1}^wc^{\overline{n}-1}-x_{u\overline{n}+i}^wc^{i})B_{s(\overline{n}-1)-1}^{r-w-1} \\
    \end{psmallmatrix}}_{M}
\underbrace{\begin{psmallmatrix}
V_{i,0}\mathbf{f}_{0}\\ V_{i,0}\mathbf{f}_{1}\\\vdots\\V_{i,0}\mathbf{f}_{(s-1)\overline{n}+\overline{n}-1}
  \end{psmallmatrix}}_{F}=V_{i,0}\kappa_{*},
\end{eqnarray}
where for $v\in [0: s)$, block column $v(\overline{n}-1)+j$ of $M$ is
\begin{equation*}
\begin{pmatrix}
(x_{v\overline{n}+j}^wc^{j}-x_{u\overline{n}+i}^wc^{i})I\\
(x_{v\overline{n}+j}^wc^{j}-x_{u\overline{n}+i}^wc^{i})B_{v(\overline{n}-1)+j}\\
\vdots\\
(x_{v\overline{n}+j}^wc^{j}-x_{u\overline{n}+i}^wc^{i})B_{v(\overline{n}-1)+j}^{r-w-1}
\end{pmatrix}\mbox{~if~} j\in [0: i) \mbox{~and~}\begin{pmatrix}
(x_{v\overline{n}+j+1}^wc^{j+1}-x_{u\overline{n}+i}^wc^{i})I\\
(x_{v\overline{n}+j+1}^wc^{j+1}-x_{u\overline{n}+i}^wc^{i})B_{v(\overline{n}-1)+j}\\
\vdots\\
(x_{v\overline{n}+j+1}^wc^{j+1}-x_{u\overline{n}+i}^wc^{i})B_{v(\overline{n}-1)+j}^{r-w-1}
\end{pmatrix}\mbox{~if~} j\in [i: \overline{n}-1),
\end{equation*}
and row $v(\overline{n}-1)+j$ of $F$ is $V_{i,0}\mathbf{f}_{v\overline{n}+j}$ if $j\in [0: i)$ and $V_{i,0}\mathbf{f}_{v\overline{n}+j+1}$ otherwise. Note that $M$ in \eqref{Eqn zigzag d reduced eq} can be rewritten as
\begin{equation*}
M=\underbrace
{\begin{pmatrix}
      I & I & \cdots & I \\
      B_0 & B_1 & \cdots & B_{s(\overline{n}-1)-1} \\
      \vdots & \vdots & \ddots & \vdots \\
      B_0^{r-w-1} & B_1^{r-w-1} & \cdots & B_{s(\overline{n}-1)-1}^{r-w-1} 
    \end{pmatrix}}_{M_0}\left(\begin{array}{cccc}
      (x_0^w-x_{u\overline{n}+i}^wc^{i})I  &  &  &  \\
       &(x_1^wc-x_{u\overline{n}+i}^wc^{i})I & &  \\
      &  & \ddots & \\
       &  & & (x_{s\overline{n}-1}^wc^{\overline{n}-1}-x_{u\overline{n}+i}^wc^{i})I\\
    \end{array}
  \right)
\end{equation*}

By Lemmas \ref{Lemma block matrix}, \ref{le_BB} and i), we have that any $(r-w)\times (r-w)$ sub-block matrix of the matrix
$M_0$
is non-singular, which together with ii) further implies that any $(r-w)\times (r-w)$ sub-block matrix of $M$ in \eqref{Eqn zigzag d reduced eq} 
is of full rank. Therefore, with the data $\mathbf{f}_{v\overline{n}+i}$, $v\in [0:s)\setminus\{u\}$,
any $d-d_c=w+k-s$ elements from the set  $\{V_{i,0}\mathbf{f}_{v\overline{n}+j}|v\in [0: s), j\in [0: \overline{n})\setminus\{i\}\}$ can recover the remaining $r-w$ elements by solving \eqref{Eqn zigzag d reduced eq}. This finishes the proof.
\end{proof}

\begin{Theorem}\label{Thm_C2_field}
The requirements in items i) and ii) of Theorem \ref{Thm_C2} can be satisfied by setting $x_{u\overline{n}+i}=c^{u\lceil\frac{\overline{n}}{w}\rceil}$ for $u\in [0: s)$ and $i\in [0:\overline{n})$, where $c$ is a primitive element of 
$\mathbf{F}_q$ with $q>\lceil\frac{\overline{n}}{w}\rceil sw$.
\end{Theorem}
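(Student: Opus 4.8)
The plan is to substitute the proposed assignment $x_{u\overline{n}+i} = c^{u\lceil \overline{n}/w\rceil}$ into the two conditions of Theorem \ref{Thm_C2} and reduce each inequality to a statement of the form ``$c^e \ne 1$ where $0 < |e| < q-1$'', which holds because $c$ is primitive in $\mathbf{F}_q$. First I would record the key consequence of the assignment: since $x_{u\overline{n}+i}$ depends only on $u$ and not on $i$, we have $x_{v\overline{n}+j}^w c^j = c^{vw\lceil \overline{n}/w\rceil + j}$, so every quantity appearing in the two conditions is simply a power of $c$ with an explicitly controllable exponent in the range $[0, (s-1)w\lceil \overline{n}/w\rceil + (\overline{n}-1)]$. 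The field requirement $q > \lceil \overline{n}/w\rceil s w$ is exactly what is needed to keep all exponent differences strictly between $0$ and $q-1$ in absolute value.

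Next I would dispatch condition i). For $j_0, j_1 \in [0:i)$ (or both in $[i:\overline{n}-1)$), the inequality $x_{v_0\overline{n}+j_0}^w c^{j_0} \ne x_{v_1\overline{n}+j_1}^w c^{j_1}$ becomes $c^{v_0 w\lceil \overline{n}/w\rceil + j_0} \ne c^{v_1 w\lceil \overline{n}/w\rceil + j_1}$; since $(v_0,j_0)\ne(v_1,j_1)$ and $0 \le j_0, j_1 < \overline{n} \le w\lceil \overline{n}/w\rceil$, the two exponents are distinct integers whose difference has absolute value at most $(s-1)w\lceil \overline{n}/w\rceil + \overline{n} - 1 < s w \lceil \overline{n}/w\rceil = q-1$ (using $\overline{n} \le w\lceil \overline{n}/w\rceil$), hence the powers of $c$ differ. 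For the mixed case $j_0 \in [0:i)$, $j_1 \in [i:\overline{n}-1)$, the inequality is $c^{v_0 w\lceil \overline{n}/w\rceil + j_0} \ne c^{v_1 w\lceil \overline{n}/w\rceil + j_1+1}$; now $j_0 \in [0:i)$ and $j_1 + 1 \in [i+1:\overline{n})$ are again distinct residues in $[0:\overline{n})$ (they cannot coincide since $j_0 < i \le j_1 < j_1+1$), so the same bound on the exponent difference applies.

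For condition ii), with $v$ arbitrary and the failed node index $u\overline{n}+i$ fixed, the two required inequalities become $c^{vw\lceil \overline{n}/w\rceil + j} \ne c^{uw\lceil \overline{n}/w\rceil + i}$ for $j \in [0:i)$ and $c^{vw\lceil \overline{n}/w\rceil + j+1} \ne c^{uw\lceil \overline{n}/w\rceil + i}$ for $j \in [i:\overline{n}-1)$. In the first subcase, if $v \ne u$ the exponents differ modulo $q-1$ by the bound argument above, and if $v = u$ then $j \ne i$ forces the exponents apart; in the second subcase $j+1 \in [i+1:\overline{n})$ so $j + 1 \ne i$ regardless of $v$, and again the exponent difference is bounded by $(s-1)w\lceil \overline{n}/w\rceil + \overline{n} - 1 < q-1$. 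In all cases the relevant exponent difference is a nonzero integer of absolute value less than $q-1$, so the corresponding powers of $c$ are unequal, and both conditions of Theorem \ref{Thm_C2} hold. I do not anticipate a genuine obstacle here; the only point requiring care is the bookkeeping that $\overline{n} \le w\lceil \overline{n}/w\rceil$ so that the ``block index'' $vw\lceil \overline{n}/w\rceil$ strictly dominates the within-block offsets $j$ (or $j+1$), which is what guarantees that distinct pairs $(v,j)$ give distinct exponents and that the total exponent range fits inside $[0, q-1)$.
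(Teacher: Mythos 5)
Your proof is correct and follows essentially the same route as the paper: substitute the assignment, observe that each quantity becomes a power of $c$ with exponent $v\,w\lceil\overline{n}/w\rceil + (\text{offset in }[0,\overline{n}))$, and bound the exponent differences by $(s-1)w\lceil\overline{n}/w\rceil + \overline{n}-1 < sw\lceil\overline{n}/w\rceil \le q-1$. The only (harmless) slip is the assertion ``$sw\lceil\overline{n}/w\rceil = q-1$''; the hypothesis $q > sw\lceil\overline{n}/w\rceil$ gives $\le q-1$, which is all that is used.
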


\begin{proof}
Given any $u\in [0: s)$ and $i\in [0: \overline{n})$,
\begin{itemize}
\item i) 
For any $v_0, v_1\in [0: s)$ and $j_0, j_1\in [0: \overline{n}-1)$ with $(v_0,j_0)\ne (v_1,j_1)$, if $j_0 \in [0: i),~ j_1\in [i: \overline{n}-1)$, then
\begin{equation*}
\frac{x_{v_0\overline{n}+j_0}^wc^{j_0}}{x_{v_1\overline{n}+j_1}^wc^{j_1+1}}=c^{w\lceil\frac{\overline{n}}{w}\rceil(v_0-v_1)+j_0-j_1-1}\ne 1
\end{equation*}
since
\begin{equation*}
 0<|w\lceil\frac{\overline{n}}{w}\rceil(v_0-v_1)+j_0-j_1-1|\le w\lceil\frac{\overline{n}}{w}\rceil(s-1)+\overline{n}-1\le \lceil\frac{\overline{n}}{w}\rceil sw+\bar{n}-w\lceil\frac{\overline{n}}{w}\rceil-1<q-1. 
\end{equation*}
Similarly, we have
\begin{equation*}
x_{v_0\overline{n}+j_0}^wc^{j_0}\ne x_{v_1\overline{n}+j_1}^wc^{j_1}
\end{equation*}
for $j_0,  j_1\in [0: i)$,
and 
or $j_0,  j_1\in [i: \overline{n}-1)$. 

\item ii) 
For any $v\in [0: s)$, if $j\in [0: i)$,
we have
\begin{equation*}
\frac{x_{v\overline{n}+j}^wc^{j}}{x_{u\overline{n}+i}^wc^{i}}=c^{w\lceil\frac{\overline{n}}{w}\rceil(v-u)+j-i}\ne 1
\end{equation*}
since
\begin{equation*}
0<|w\lceil\frac{\overline{n}}{w}\rceil(v-u)+j-i|\le \lceil\frac{\overline{n}}{w}\rceil sw+\bar{n}-w\lceil\frac{\overline{n}}{w}\rceil-1<q-1.
\end{equation*}
Similarly, we can verify
\begin{equation*}
 x_{v\overline{n}+j+1}^wc^{j+1}-x_{u\overline{n}+i}^wc^{i}\ne 0
\end{equation*}
for $j\in [i: \overline{n}-1)$.
\end{itemize}
This completes the proof.
\end{proof}
\begin{Remark}
By \eqref{Eqn general R} and \eqref{Rqn_R_YB2}, we easily see that the new MDS array code $\mathcal{C}_3$ has the $(1+\epsilon)$-optimal access property, where $\epsilon=1+\frac{d_c(d-k)}{d}$.
\end{Remark}


\section{Comparisons}\label{sec:comp}
In this section, we compare key parameters among the proposed MDS array codes and some existing notable MDS array codes, where Table \ref{Table comp1} illustrates the details.

\begin{table}[htbp]
\begin{center}
\caption{A comparison of some key parameters among the $(n=k+r,k)$ MDS array codes with $d<n-1$ and $(1+\epsilon)$-optimal repair bandwidth proposed in this paper and some existing  notable ones, where $\epsilon=\frac{d_c(d-k)}{d}$, $w=d-k+1$, $N_{ag}$ and $K_{ag}$ denote the code length and code dimension of the  algebraic geometry (AG) code used in the MDS array code in \cite{Guruswami2020}, and $\overline{n}$ denotes the code length of the underlying MSR code employed.}\label{Table comp1}
\setlength{\tabcolsep}{.5pt}
\begin{tabular}{|c|c|c|c|c|c|c|c|}
\hline &
Code
&Sub-packetization& \multirow{2}{*}{Field size } &The ratio of repair bandwidth &The number $d_c$ of  & \multirow{2}{*}{Remark}\\
&length $n$&level $N$  &  &to the optimal value $\gamma^*$ & compulsory node  &  \\
\hline
LLTHBZ code \cite{li2023MDS} & $n$  & $w^{n/2}$ &  $q>\hspace{-1mm}\left\{\hspace{-2mm}\begin{array}{ll}
(w+2)n/2,&   w=2 \\
(w+1)n/2,& w>2\end{array}\right.$ & $1$~(Optimal) & 0&    \\
\hline
New code $\mathcal{C}_1$  & $s\overline{n}$  & $w^{\overline{n}/2}$ &  $q>\hspace{-1mm}\left\{\hspace{-2mm}\begin{array}{ll}
(w+2)n/2,&   w=2 \\
(w+1)n/2,& w>2\end{array}\right.$ & $=1+\frac{d_c(d-k)}{d}<1+r/\overline{n}$ & $s-1$ &   \\
\hline
\hline
New code $\mathcal{C}_2$  & $s\overline{n}$ & $w^{\overline{n}}$ & $q>\lceil\frac{s}{w}\rceil w\overline{n}$, $w|(q-1)$ & $=1+\frac{d_c(d-k)}{d}<1+r/\overline{n}$& $s-1$ & Optimal update \\
\hline
New code $\mathcal{C}_2'$  & $s\overline{n}$ & $w^{\overline{n}}$ & $q>\lceil\frac{s}{w}\rceil w\overline{n}$ & $=1+\frac{d_c(d-k)}{d}<1+r/\overline{n}$& $s-1$  & Optimal update \\
\hline
YB code 1 \cite{Barg1}& $n$  &$w^{n}$   &  $q\ge rn$  &  $1$~(Optimal)& $0$ &   Optimal update  \\
\hline\hline
New code $\mathcal{C}_3$ & $s\overline{n}$  & $w^{\overline{n}}$ &  $q> \lceil\frac{\overline{n}}{w}\rceil sw$ & $=1+\frac{d_c(d-k)}{d}<1+r/\overline{n}$& $s-1$  &  $(1+\frac{d_c(d-k)}{d})$-optimal access \\
\hline
YB code 2 \cite{Barg1}  & $n$  &$w^{n}$   &  $q>n$  &  $1$~(Optimal)& $0$ &   Optimal access   \\
\hline
\hline
 \multirow{3}{*}{GLJ code \cite{Guruswami2020}}& \multirow{3}{*}{$\overline{n}^{K_{ag}}$}  &    \multirow{3}{*}{$N_{ag}w^{\overline{n}}$}  &  \multirow{3}{*}{$q>r\overline{n}^{K_{ag}+1}$}   &  \multirow{3}{*}{$\le 1+\epsilon$}  &\multirow{3}{*}{$\approx n-n^{0.75}$} &$\overline{n}>2(u+1)^2\frac{r^2}{\epsilon^2}$  \\ &&&&&&$u\ge 4$  \\
& &&&&&$\overline{n}$ is a square prime power \\
\hline
\end{tabular}
\end{center}
\end{table}

Table \ref{Table comp1} shows that the new MDS array codes have significantly smaller sub-packetization levels than the original MSR codes. Furthermore, compared with the GLJ code in \cite{Guruswami2020}, the proposed MDS array codes have the
following advantages:

\begin{itemize}
\item The new MDS array codes $\mathcal{C}_1$, $\mathcal{C}_2$ (or $\mathcal{C}_2'$), and $\mathcal{C}_3$ derived in this paper are valid for any code length $n\ge 10$. In contrast, the GLJ code in \cite{Guruswami2020} only works for very large code lengths according to Theorem III.10, Table I, and Appendix F in \cite{Guruswami2020}, i.e., $n=(\overline{n})^{K_{ag}}>200^{K_{ag}}$ with $\overline{n}$ being a prime power larger than $2(u+1)^2\frac{r^2}{\epsilon^2}\ge 200$, $u\ge 4$ and $\epsilon<1$, where $K_{ag}$ denotes the code dimension of a specific AG code. From Tables I and II in \cite{Guruswami2020}, when $r=3$ and $\gamma\le 1.7\gamma_{\rm optimal}$, the code length and sub-packetization level of GLJ code are at least $961^{12}$ and $90\times 2^{961}$, respectively. Whereas the new MDS array codes $\mathcal{C}_1$-$\mathcal{C}_3$ can support any code length $n\ge10$, sub-packetization level $2^3$ for $\mathcal{C}_1$ and $2^5$ for $\mathcal{C}_1$ and $\mathcal{C}_2$.

\item The new MDS array codes have an absolutely small sub-packetization level $N$. More specifically, the sub-packetization levels of the new MDS array codes $\mathcal{C}_1$, $\mathcal{C}_2$ ($\mathcal{C}_2'$), and $\mathcal{C}_3$ are $N =w^{\overline{n}/2}$, $w^{\overline{n}}$, and $w^{\overline{n}}$, respectively, which can be a constant and independent of code length $n$ once $w$ and $\overline{n}$ are fixed. In contrast, the sub-packetization level $N$ of the GLJ code in \cite{Guruswami2020} satisfies $N=N_{ag}w^{\overline{n}}>N_{ag}w^{200}$ as $\overline{n}>200$, where $N_{ag}$ denotes the code length of a specific AG code, and $N_{ag}\ge 17$ since $N_{ag}$ is a multiplier of $\sqrt{\overline{n}}-1$ according to Theorem III.5 in \cite{Guruswami2020}.

\item The required finite field sizes of the new MDS array codes $\mathcal{C}_1$, $\mathcal{C}_2$  (or $\mathcal{C}_2'$), and $\mathcal{C}_3$ (i.e., $O(\frac{w}{2}n)$, $O(n)$, and $O(n)$) are smaller than that of the GLJ code in \cite{Guruswami2020} (i.e., $O(r\bar{n}n)$), which makes them easy to be implemented in practical systems.

\item The number $d_c$ of compulsory helper nodes of the GLJ code in \cite{Guruswami2020} is roughly $n-n^{0.75}$, i.e., most of the surviving nodes have to be connected as helper nodes when $n$ is large. In contrast, for the new codes $\mathcal{C}_1$-$\mathcal{C}_3$, the number of compulsory helper nodes is $s-1$, which is only a small portion of the surviving nodes.

\item Additionally, the new MDS array code $\mathcal{C}_2$ ($\mathcal{C}_2'$) has the optimal update property, while the new MDS array code $\mathcal{C}_3$ has the $(1+\epsilon)$-optimal access property. 
\end{itemize}

We demonstrate a numerical comparison in Table \ref{Table comp_numerical}. As $\mathcal{C}_1$ has the smallest sub-packetization level among all the new MDS array codes, we only list the smallest
code length that the GLT code in \cite{Guruswami2020} and the new MDS array code $\mathcal{C}_1$ can support, as well as the other parameters for repair degree $d=k+1$ and $r=3, 4$. Similarly, the parameters of new MDS array codes  $\mathcal{C}_2$ and $\mathcal{C}_3$ can be derived. Besides, we also list some other code lengths that the new MDS array code $\mathcal{C}_1$ can support. From Table \ref{Table comp_numerical} we have the following observation:
\begin{itemize}
\item For the GLT code in \cite{Guruswami2020} with $3$ parity nodes when the repair bandwidth is no more than $(1+0.7)$ times the optimal value, the smallest
code length is $961^{12}$, while the sub-packetization level is $90\times 2^{961}$. In contrast, for the new MDS array code $\mathcal{C}_1$ with $3$ parity nodes, the smallest
supported code length is $10$, while the repair bandwidth is only $(1+0.125)$ times the optimal value and the sub-packetization level is $2^3$.

\item The new MDS array codes provide a flexible tradeoff among the code length, sub-packetization, and repair bandwidth. 
For example, for the new MDS array code $\mathcal{C}_1$ with $r=3$, when increasing the code length from $10$ to $100$, 
we have multiple choices for the sub-packetization level and repair bandwidth, e.g., the sub-packetization level can be $2^5$, where the code endows $(1+0.0919)$-optimal repair bandwidth. Or we can fix the sub-packetization level as $2^3$  but increase the repair bandwidth to be $(1+0.0919)$ times the optimal one. When further increasing the code length to $961^{12}$, we can still fix the sub-packetization level as $2^3$ while $\mathcal{C}_1$
has  $(1+0.2001)$-optimal repair bandwidth. Or we can increase the sub-packetization level to $2^{16}$ to achieve a smaller repair bandwidth, i.e., $1+0.0323$ times the optimal value. Both of which are significantly smaller than those of the GLT code in \cite{Guruswami2020}.

\item Almost all the surviving nodes of the GLT code in \cite{Guruswami2020} are compulsory helper nodes when repairing a failed node, which provides less flexibility. In contrast,  the compulsory helper nodes of the new array codes only constitute a small portion of the surviving nodes, e.g., from around $3.23\%$ to $20\%$ for the new MDS array code $\mathcal{C}_1$ with $3$ parity nodes.
\end{itemize}

\begin{table}[htbp]
\begin{center}
\caption{A numerical comparison of the (smallest) supported code length and sub-packetization level between the $(n=k+r,k)$ MDS array code $\mathcal{C}_1$ with $d<n-1$ proposed in this paper and the $(n,k)$ GLT code in \cite{Guruswami2020}, where $\gamma$, $\gamma_{\rm optimal}$, and $\frac{d_c}{n-1}$ denote the repair bandwidth, the value of the optimal repair bandwidth, and the ratio of the number of compulsory helper nodes to the number of surviving nodes, respectively. To ensure a fair comparison, we set $d=k+1$ for $\mathcal{C}_1$, similar to the examples of the code parameters in \cite{Guruswami2020}.}\label{Table comp_numerical}
\setlength{\tabcolsep}{2.5pt}
\begin{tabular}{|c|c|c|c|c|c|c|c|}
\hline  &$\frac{\gamma}{\gamma_{\rm optimal}}$ (i.e., $1+\epsilon$)& Code length $n$ 
& $N$   &  $\frac{d_c}{n-1}$  & Field size $q$ & Remark\\
\hline
GLT code ($r=3$)  & $1+0.7$  & $961^{12}$ &$90\times 2^{961}$ & $\approx99.99\%$ & $>3\times 961^{13}$  & Tables I and II in \cite{Guruswami2020}\\\hline
New code $\mathcal{C}_1$ ($r=3$)  & $1+0.125$  & $10$ & $2^3$ &  $\approx 11.1\%$&   $>24$  & Set $s=w=2$, $\bar{n}=5$ \\\hline
New code $\mathcal{C}_1$ ($r=3$)  &  $1+0.0919$  & $100$ & $2^5$& $\approx 9.1\%$&   $>200$  &Set $s=10$, $w=2$, $\bar{n}=10$  \\ \hline
New code $\mathcal{C}_1$ ($r=3$)  &  $1+0.1939$  & $100$ & $2^3$& $\approx 19.19\%$&   $>240$  & Set $s=20$, $w=2$, $\bar{n}=5$ \\
\hline
New code $\mathcal{C}_1$ ($r=3$)  &  $1+0.0323$  & $961^{12}$ & $2^{16}$& $\approx 3.23\%$&   $>1984\times 961^{11}$  & Set $s=31\cdot 961^{11}$, $w=2$, $\bar{n}=31$ \\
\hline
New code $\mathcal{C}_1$ ($r=3$)  &  $1+0.2001$  & $961^{12}$ & $2^{3}$& $\approx 20\%$&   $>\frac{12}{5}(961^{12}+4)$  & Set $s=\lceil \frac{961^{12}}{5} \rceil$, $w=2$, $\bar{n}=5$ \\
\hline
\hline
GLT code ($r=4$)  & $1+0.8$  & $1369^{12}$ &$108\times 2^{1369}$ & $\approx99.99\%$ & $>4\times 1369^{13}$  & Tables I and II in \cite{Guruswami2020}   \\\hline
New code $\mathcal{C}_1$ ($r=4$)  & $1+0.1112$  &  $12$  & $2^3$&      $\approx 9.1\%$ & $>24$  &    Set $s=w=2$, $\bar{n}=6$ \\ \hline
New code $\mathcal{C}_1$ ($r=4$)  & $1+0.0961$  & $180$  & $2^5$&  $\approx 9.5\%$ & $>360$  & Set $s=18$, $w=2$, $\bar{n}=10$  \\ 
\hline
New code $\mathcal{C}_1$ ($r=4$)  &  $1+0.1639$  & $180$ & $2^3$&   $\approx 16.2\%$ & $>360$  &   Set $s=30$, $w=2$, $\bar{n}=6$ \\ 
\hline
New code $\mathcal{C}_1$ ($r=4$)  &  $1+0.0271$  &$1369^{12}$ & $2^{19}$&     $\approx 2.70\%$ & $>2812\times1369^{11}$  &  Set $s=37\cdot 1369^{11}$, $w=2$, $\bar{n}=37$  \\ 
\hline
New code $\mathcal{C}_1$ ($r=4$)  &  $1+0.1667$  &$1369^{12}$ & $2^{3}$&     $\approx 16.67\%$ & $>2(1369^{12}+5)$  &  Set $s=\lceil \frac{1369^{12}}{6} \rceil$, $w=2$, $\bar{n}=6$  \\
\hline
\end{tabular}
\end{center}
\end{table}

\section{Conclusion}\label{sec:conclusion}
In this paper, a generic transformation was provided that
can significantly reduce the sub-packetization level $N$ of the $(\overline{n}, \overline{k})$ MSR codes with repair degree $\overline{d}<\overline{n}-1$. Three applications
of the transformation were presented, resulting in three explicit MDS array codes with small sub-packetization levels and $(1+\epsilon)$-optimal repair bandwidth. Additionally, an explicit MDS array code with a small sub-packetization level and $(1+\epsilon)$-optimal repair bandwidth was constructed directly. All the new MDS array codes are over relatively small finite fields and work for a small repair degree, i.e., $d<n-1$. Comparisons showed that the obtained MDS array codes outperform some
existing ones in terms of the flexibility of the parameters, the field size, the sub-packetization
level, and the flexibility of the selection of the helper nodes. These new MDS array codes provide a more efficient and flexible solution for distributed storage systems, making them more suitable for practical implementation.

\appendices

\section{Proof of Lemma \ref{lem the second important lemma for Thm 9}}\label{sec:pfC3}

   By combining  \eqref{Eqn the first eq for proving Lem 10} and \eqref{Eqn matrix B for code C3}, we can derive i) of this lemma.

From now on, for $i,j,l\in [0:n)$, we rewrite them as
$i=v_0\bar{n}+\bar{i}$, $j=v_1\bar{n}+\bar{j}$, and $l=v_2\bar{n}+\bar{l}$ for $v_0,v_1,v_2\in [0: s)$ and $\bar{i},\bar{j},\bar{l}\in [0: \bar{n})$, and further rewrite $\bar{i}$, $\bar{j}$, and $\bar{l}$ as $\bar{i}=g_0m+i'$, $\bar{j}=g_1m+j'$, and $\bar{l}=g_2m+l'$, where $g_0,g_1,g_2\in \{0,1\}$ and $i',j',l'\in [0: m)$.

   For any $a\in [0: N/w)$, we also have 
    \begin{eqnarray*}
    	B_{t,j,i}[a,a]=x_{j}^t\bar{B}_{t,\bar{j},\bar{i}}[a,a]=(x_j\bar{B}_{1,\bar{j},\bar{i}}[a,a])^t=(B_{1,j,i}[a,a])^t,
    \end{eqnarray*}
    i.e., ii) of this lemma is true, where the first and last equalities follow from \eqref{Eqn matrix B for code C3}, and the second equality follows from \eqref{Eqn the second eq for proving Lem 10}.  
    
    In what follows, we prove iii) of this lemma, where $j,l\not\equiv i\bmod \bar{n}$ implies $\bar{j}, \bar{l}\ne \bar{i}$. The statement is proved according to the following three cases.
    
    \textit{Case 1.} If $j' = l'$,  we have 
    \begin{eqnarray*}
    	B_{1,j,i}[a,a]-B_{1,l,i}[a,a]&=&x_{j}\bar{B}_{1,\bar{j},\bar{i}}[a,a]-x_{l}\bar{B}_{1,\bar{l},\bar{i}}[a,a]\\
    	&=&\left\{\begin{array}{ll}
            x_{j}\lambda_{\bar{j},a_{j'}}-x_{l}\lambda_{\bar{l},a_{j'}},     &  \textrm{if~}  j' < i', \\
            x_{j}\lambda_{\bar{j},0}-x_{l}\lambda_{\bar{l},0},   &  \textrm{if~}j' =i',\\
            x_{j}\lambda_{\bar{j},a_{j'-1}}-x_{l}\lambda_{\bar{l},a_{j'-1}},     &  \textrm{if~}  j' > i', \\   
            \end{array}\right.\\
    	&\ne& 0,
    \end{eqnarray*}
    where the first and second equalities follow from \eqref{Eqn matrix B for code C3} and \eqref{Eqn the second eq for proving Lem 10}, respectively, the inequality follows from condition ii) of Theorem \ref{Thm_C3}.

   \textit{Case 2.} If $j' \ne  l'$ and $i'\in \{j',l'\}$, similar to Case 1, we also have
    \begin{eqnarray*}
        B_{1,j,i}[a,a]-B_{1,l,i}[a,a] &=& \left\{\begin{array}{ll}
            x_{j}\lambda_{\bar{j},0}-x_{l}\lambda_{\bar{l},a_{l'-1}},     &  \textrm{if~}  j' = i'<l', \\
x_{j}\lambda_{\bar{j},0}-x_{l}\lambda_{\bar{l},a_{l'}},     &  \textrm{if~} l'<j' = i', \\
            x_{j}\lambda_{\bar{j},a_{j'-1}}-x_{l}\lambda_{\bar{l},0},     &  \textrm{if~} i' = l'<j', \\ 
            x_{j}\lambda_{\bar{j},a_{j'}}-x_{l}\lambda_{\bar{l},0},     &  \textrm{if~} j'<l' = i', \\
            \end{array}\right.\\
    &\ne & 0
    \end{eqnarray*}
    by applying condition i) of Theorem \ref{Thm_C3}, where $a\in [0,N/w)$.

    \textit{Case 3.}  If $j' \ne  l'$ and $i'\not\in \{j',l'\}$, for any $a\in [0:N/w)$, based on condition i) of Theorem \ref{Thm_C3}, we can derive
    \begin{eqnarray*}
       B_{1,j,i}[a,a]-B_{1,l,i}[a,a]
    	&=&\left\{\begin{array}{ll}
            x_{j}\lambda_{\bar{j},a_{j'}}-x_{l}\lambda_{\bar{l},a_{l'}},     &  \textrm{if~}    j' <l' < i',\\
            x_{j}\lambda_{\bar{j},a_{j'}}-x_{l}\lambda_{\bar{l},a_{l'-1}},     &  \textrm{if~}  j'< i'< l',\\
            x_{j}\lambda_{\bar{j},a_{j'-1}}-x_{l}\lambda_{\bar{l},a_{l'-1}},     &  \textrm{if~} i' <j'<l',\\   
            \end{array}\right.\\
    &\ne& 0
    \end{eqnarray*}
for $j' <l'$. The case of $j' >l'$ can be proved similarly.  
Based on the above three cases, we have proved iii) of this lemma.


\section{Proof of Lemma \ref{lem VAf}}\label{sec:pfC2}
We abbreviate $g_{i,u}$ in \eqref{Eqn_g} as $g_i$ for $u=0$ in the proof to simplify the notation.
By \eqref{Eqn e_ae_bT}, \eqref{Eqn re Vu for liu}, and  \eqref{Eqn zigzag d power A}, we  have
\begin{eqnarray}\label{Eqn LHS}
\nonumber  V_{i,0} A_{v\overline{n}+j}^m &=&\sum\limits_{b=0}^{N/w-1}(e_b^{(N/w)})^{\top}e_{g_i(b)}x_{v\overline{n}+j}^m\sum\limits_{a=0}^{N-1}\xi_{j,a_j,m}e_a^{\top}e_{a(j,a_j\oplus_w m)}\\
 &=&x_{v\overline{n}+j}^m\sum\limits_{b=0}^{N/w-1}\xi_{j,(g_i(b))_j,m}(e_b^{(N/w)})^{\top}e_{(g_i(b))(j,(g_i(b))_j\oplus_w m)}.
\end{eqnarray}

If $j<i$, then by \eqref{Eqn e_ae_bT}, \eqref{Eqn re Vu for liu}, \eqref{Eqn_xi_prod}, and \eqref{Eqn_B_le_i}, we have
\begin{eqnarray}\label{Eqn RHS1}
\nonumber B_{v(\overline{n}-1)+j}^mV_{i,0} &=&(\sum\limits_{b=0}^{N/w-1}x_{v\overline{n}+j}\lambda_{j,b_j}(e_b^{(N/w)})^{\top}e_{b(j,b_j+1)}^{(N/w)})^m\sum\limits_{a=0}^{N/w-1}(e_a^{(N/w)})^{\top} e_{g_i(a)}\\
&=&x_{v\overline{n}+j}^m\sum\limits_{b=0}^{N/w-1}\xi_{j,b_j,m}(e_b^{(N/w)})^{\top}e_{b(j,b_j\oplus_w m)}^{(N/w)}\sum\limits_{a=0}^{N/w-1}(e_a^{(N/w)})^{\top} e_{g_i(a)}\\
&=&x_{v\overline{n}+j}^m\sum\limits_{b=0}^{N/w-1}\xi_{j,b_j,m}(e_b^{(N/w)})^{\top}e_{g_i(b(j,b_j\oplus_w m))}.
\end{eqnarray}

By \eqref{Eqn ga b} and \eqref{Eqn ga b (j,v)}, we have
 $$(g_i(b))(j,(g_i(b))_j\oplus_w m)=(g_i(b))(j,b_j\oplus_w m)=g_i(b(j,b_j\oplus_w m)).$$ Therefore, by \eqref{Eqn LHS} and \eqref{Eqn RHS1}, we have
\begin{eqnarray*}
  V_{i,0}A_{v\overline{n}+j}^m= B_{v(\overline{n}-1)+j}^mV_{i,0} \mbox{~for~} j<i.
\end{eqnarray*}

If $j>i$, then by \eqref{Eqn e_ae_bT}, \eqref{Eqn re Vu for liu}, \eqref{Eqn_xi_prod}, and \eqref{Eqn_B_le_i}, we have
\begin{eqnarray}\label{Eqn RHS2}
\nonumber B_{v(\overline{n}-1)+j-1}^mV_{i,0}&=&(\sum\limits_{b=0}^{N/w-1}x_{v\overline{n}+j-1}\lambda_{j,b_{j-1}}(e_b^{(N/w)})^{\top}e_{b(j-1,b_{j-1}\oplus_w 1)}^{(N/w)})^m \sum\limits_{a=0}^{N/w-1}(e_a^{(N/w)})^{\top} e_{g_i(a)}\\
\nonumber &=&x_{v\overline{n}+j-1}^m\sum\limits_{b=0}^{N/w-1}\xi_{j,b_{j-1},m}(e_b^{(N/w)})^{\top}e_{b(j-1,b_{j-1}\oplus_w m)}^{(N/w)}\sum\limits_{a=0}^{N/w-1}(e_a^{(N/w)})^{\top} e_{g_i(a)}\\
&=&x_{v\overline{n}+j-1}^m\sum\limits_{b=0}^{N/w-1}\xi_{j,b_{j-1},m}(e_b^{(N/w)})^{\top} e_{g_i(b(j-1,b_{j-1}\oplus_w m))}.\end{eqnarray}
By \eqref{Eqn ga b} and \eqref{Eqn ga b (j,v)}, we have $$(g_i(b))(j,(g_i(b))_j\oplus_w m)=(g_i(b))(j,b_{j-1}\oplus_w m)=g_i(b(j-1,b_{j-1}\oplus_w m)),$$ which together with \eqref{Eqn LHS} and \eqref{Eqn RHS2} implies
\begin{eqnarray*}
  V_{i,0}A_{v\overline{n}+j}^m= B_{v(\overline{n}-1)+j-1}^mV_{i,0} \mbox{~for~} j>i.
\end{eqnarray*}
This completes the proof.


\end{document}